\newtheorem{observation}[theorem]{\textbf{Observation}}
\newcommand{\thmheadfont}{\textcolor{darkgray}{$\blacktriangleright$}\nobreakspace\sffamily\bfseries}
\newenvironment{repeatenv}[2]%
  {\smallskip\noindent {\thmheadfont #1~\ref{#2}.}\ \slshape}
  {\normalfont}
\newenvironment{repeattheorem}    [1]{\begin{repeatenv}{Theorem}{#1}}    {\end{repeatenv}}
\newcommand{\proofcmd}{\noindent{\color{darkgray}\sffamily\bfseries Proof. }}
\newcommand{\mkmcal}[1]{\ensuremath{\mathcal{#1}}\xspace}
\newcommand{\A}{\mkmcal{A}}
\newcommand{\RR}{\mkmcal{R}}
\newcommand{\VD}{\mkmcal{V}}
\newcommand{\F}{\mkmcal{F}}
\newcommand{\D}{\mkmcal{D}}
\renewcommand{\S}{\mkmcal{S}}
\renewcommand{\P}{\mkmcal{P}}
\newcommand{\mkmbb}[1]{\ensuremath{\mathbb{#1}}\xspace}
\DeclareMathOperator{\polylog}{polylog}
\newcommand{\R}{\mkmbb{R}}
\newcommand{\eps}{\varepsilon}
\newcommand{\etal}{et al.\xspace}
\newcommand{\br}{b^*}
\newcommand{\geod}{\Pi\xspace}
\newcommand{\geodlen}{\pi\xspace}
\newcommand{\acall}[2]{\textsc{#1}(#2)\xspace}
\title{Improved Dynamic Geodesic Nearest Neighbor Searching in a Simple Polygon}
\author{Pankaj K. Agarwal}{Department of Computer Science, Duke University,
  \\{Durham, NC 27708, USA}}{pankaj@cs.duke.edu}{}{P.A. was supported by NSF under
grants CCF-15-13816, CCF-15-46392, and IIS-14-08846, by ARO under grant
W911NF-15-1-0408, and by grant 2012/229 from the U.S.--Israel Binational Science Foundation}
\author{Lars Arge}{MADALGO, Aarhus University,\\ {Aarhus,
    Denmark}}{large@cs.au.dk}{}{L.A. was supported by the Danish National Research Foundation under grant nr.~DNRF84}
\author{Frank Staals}{Dept. of Information and Computing Sciences, Utrecht
  University, \\{Utrecht, The Netherlands}}{f.staals@uu.nl}{}{F.S. was
  supported by the Netherlands Organisation for Scientific Research (NWO) under
  project no. 612.001.651.}
\authorrunning{P. K. Agarwal, L. Arge, F. Staals} %mandatory. First: Use abbreviated first/middle names. Second (only in severe cases): Use first author plus 'et. al.'
\subjclass{Computational Geometry}
\keywords{data structure, simple polygon, geodesic distance, nearest neighbor
  searching, shallow cutting} % mandatory: Please provide 1-5 keywords
\begin{document}

\maketitle

\begin{abstract}
  We present an efficient dynamic data structure that supports geodesic nearest
  neighbor queries for a set $S$ of point sites in a static simple polygon
  $P$. Our data structure allows us to insert a new site in $S$, delete a site
  from $S$, and ask for the site in $S$ closest to an arbitrary query point
  $q \in P$. All distances are measured using the geodesic distance, that is,
  the length of the shortest path that is completely contained in $P$.  Our
  data structure achieves polylogarithmic update and query times, and uses
  $O(n\log^3n\log m + m)$ space, where $n$ is the number of sites in $S$ and
  $m$ is the number of vertices in $P$. The crucial ingredient in our data
  structure is an implicit representation of a vertical shallow cutting of the
  geodesic distance functions. We show that such an implicit representation
  exists, and that we can compute it efficiently.
\end{abstract}

\section{Introduction}
\label{sec:Introduction}

Nearest neighbor searching is a classic problem in computational geometry in
which we are given a set of point \emph{sites} $S$, and we wish to preprocess
these points such that for a query point $q$, we can efficiently find the site
$s \in S$ closest to $q$. We consider the case where $S$ is a \emph{dynamic}
set of points inside a simple polygon $P$. That is, we may insert a new site
into $S$ or delete an existing one. We measure the distance between two points
$p$ and $q$ by the length of the \emph{geodesic} $\geod(p,q)$, that is, the
shortest path connecting $p$ and $q$ that is completely contained in $P$. We
refer to this distance as the \emph{geodesic distance} $\geodlen(p,q)$.

\subparagraph{Related work.} It is well known that if we have only a fixed set
$S$ of $n$ sites, we can answer nearest neighbor queries efficiently by
computing the Voronoi diagram of $S$ and preprocessing it for planar point
location. This requires $O(n\log n)$ preprocessing time, the resulting data
structure uses linear space, and we can answer queries in $O(\log n)$ time. Voronoi
diagrams have also been studied in case the set of sites is restricted to lie
in a simple polygon $P$, and we measure the distance between two points $p$ and
$q$ by their geodesic distance
$\geodlen(p,q)$~\cite{aronov1989geodesic,papadopoulou1998geodesic,hershberger1999sssp,oh_ahn2017voronoi}. The
approach of Hershberger and Suri~\cite{hershberger1999sssp} computes the
geodesic Voronoi diagram in $O((m+n)\log(m+n))$ time, where $m$ is the total
number of vertices in the polygon $P$, and is applicable even if $P$ has
holes. Very recently, Oh and Ahn~\cite{oh_ahn2017voronoi} presented an
$O(m+n\log n\log^2 m)$ time algorithm. When $n\leq m/\log^3 m$ this improves
the previous results. All these approaches allow for $O(\log(n+m))$ time
nearest neighbor queries. However, they are efficient only when the set of
sites $S$ is fixed, as inserting or deleting even a single site may cause a
linear number of changes in the Voronoi diagram.

To support nearest neighbor queries, it is, however, not necessary to
explicitly maintain the (geodesic) Voronoi diagram. Bentley and
Saxe~\cite{bentley1980decomposable} show that nearest neighbor searching is a
\emph{decomposable search problem}. That is, we can find the answer to a query
by splitting $S$ into groups, computing the solution for each group
individually, and taking the solution that is best over all groups. This
observation has been used in several other approaches for nearest neighbor
searching with the Euclidean
distance~\cite{agarwal1995dynamic,dobkin1991maintenance,chan2010dynamic_ch}. However,
even with this observation, it is hard to get both polylogarithmic update and
query time. Chan~\cite{chan2010dynamic_ch} was the first to achieve this. His
data structure can answer Euclidean nearest neighbor queries in $O(\log^2 n)$
time, and supports insertions and deletions in $O(\log^3 n)$ and $O(\log^6 n)$
amortized time, respectively. Recently, Kaplan \etal~\cite{kaplan2017dynamic}
extended the result of Chan to more general, constant complexity, distance
functions.

% Their data structure allows for polylogarithmic update and query
% times, using $O(n\polylog n)$ space (where the number of logarithms depends on
% the distance function used).

% The critical ingredient in both data
% structures is an algorithm to construct a \emph{vertical shallow cutting} of
% the arrangement $\A(H)$ of the planes in $H$ (we give a proper definition of
% vertical shallow cutting below).

% managed to achieve such results by
% maintaining the lower envelope of a set $H$ of planes in $\R^3$. Via a
% well-known lifting transformation, this also allows (Euclidean) nearest
% neighbor queries for points in $\R^2$.

% Recently, Kaplan \etal~\cite{kaplan2017dynamic} extended the result of Chan to
% more general, constant complexity, distance functions. More specifically, they
% describe a dynamic data structure that maintains the lower envelope $L_0(F)$
% of a set $F$ of constant complexity functions. For each site $s \in S$, they
% maintain a function $f_s \in F$, where $f_s(x)$ expresses the distance from $x$
% to $s$. To find the site $s$ closest to a query point $q$ they can simply query
% the data structure to find the function $f_s$ that realizes $L_0(F)$ at
% $q$. Their data structure allows for polylogarithmic update and query times,
% using $O(n\polylog n)$ space (where the number of logarithms depends on the
% distance function used). As with the data structure of Chan, the existence of
% (an algorithm to construct) a small size vertical shallow cutting of the
% arrangement $\A(F)$ is crucial.

Unfortunately, the above results do not directly lead to an efficient solution
to our problem. The function describing the geodesic distance may have
complexity $\Theta(m)$, and thus the results of Kaplan
\etal~\cite{kaplan2017dynamic} do not apply. Moreover, even directly combining
the decomposable search problem approach with the static geodesic Voronoi
diagrams described above does not lead to an efficient solution, since every
update incurs an $\Omega(m)$ cost corresponding to the complexity of the
polygon. Only the very recent algorithm of Oh and Ahn~\cite{oh_ahn2017voronoi}
can be made amendable to such an approach. This results in an $O(n+m)$ size
data structure with $O(\sqrt{n}(\log n +\log m))$ query time and
$O(\sqrt{n}\log n\log^2 m)$ updates. Independently from Oh and Ahn, we
developed a different data structure yielding similar
results~\cite{arge2017arxiv}. The core idea in both data structures is to
represent the Voronoi diagram implicitly. Moreover, both approaches use similar
primitives. In this manuscript, we build on the ideas from our earlier work,
and significantly extend them to achieve polylogarithmic update and query
times.

% \cite{arge2017arxiv}
% since last year,
% up until the very recent
% results of Oh and Ahn~\cite{oh_ahn2017voronoi}, even combining the

% In the geodesic case, directly combining the decomposable search problem
% approach with the static geodesic Voronoi diagrams described above does not
% lead to an efficient solution. Similar to in our migration problem, this leads
% to an $\Omega(m)$ cost corresponding to the complexity of the polygon on every
% update. Simultaneously and independently from us Oh and
% Ahn~\cite{oh_ahn2017voronoi} developed an approach that answers queries in
% $O(\sqrt{n}\log(n+m))$ time, and updates in $O(\sqrt{n}\log n\log^2 m)$
% time. Some of their ideas are similar to ours.

\enlargethispage{\baselineskip}
\subparagraph{Our results.} We develop a fully dynamic data structure to
support nearest neighbor queries for a set of sites $S$ inside a (static)
simple polygon $P$. Our data structure allows us to locate the site in $S$
closest to a query point $q \in P$, to insert a new site $s$ into $S$, and to
delete a site from $S$. Our data structure supports queries in
$O(\log^2n\log^2 m)$ time, insertions in $O(\log^5n\log m + \log^4n\log^3 m)$
amortized expected time, and deletions in $O(\log^7n\log m + \log^6 n\log^3 m)$
amortized expected time. The space usage is $O(n\log^3 n\log m + m)$.

Furthermore, we show that using a subset of the tools and techniques that we
develop, we can build an improved data structure for when there are no
deletions. In this insertion-only setting, queries take worst-case
$O(\log^2 n\log^2 m)$ time, and insertions take amortized $O(\log n\log^3 m)$
time. We can also achieve these running times in case
there are both insertions and deletions, but the order of these operations is
known in advance. The space usage of this version is $O(n\log n\log m + m)$.

\section{An overview of the approach}
\label{sec:overview}

% \subparagraph{General Position Assumptions and Preprocessing.}
As in previous work on geodesic Voronoi diagrams~\cite{aronov1989geodesic,
  papadopoulou1998geodesic}, we assume that $P$ and $S$ are in general
position. That is, (i) no two sites $s$ and $t$ in $S$ (ever) have the same
geodesic distance to a vertex of $P$, and (ii) no three points (either sites or
vertices) are colinear. Note that (i) implies that no bisector $b_{st}$ between
sites $s$ and $t$ contains a vertex of $P$.

Throughout the paper we will assume that the input polygon $P$ has been
preprocessed for two-point shortest path queries using the data structure by
Guibas and Hershberger~\cite{guibas1989query} (see also the follow up note of
Hershberger~\cite{hershberger_new_1991}). This takes $O(m)$ time and allows us
to compute the geodesic distance $\geodlen(p,q)$ between any pair of query
points $p,q \in P$ in $O(\log m)$ time.

\subparagraph{Dynamic Euclidean nearest neighbor searching.} We briefly review
the data structures for dynamic Euclidean nearest neighbor searching of
Chan~\cite{chan2010dynamic_ch} and Kaplan \etal~\cite{kaplan2017dynamic}, and
the concepts they use, as we will build on these results.

Let $F$ be a set of bivariate functions, and let $\A(F)$ denote the arrangement
of the (graphs of the) functions in $\R^3$. In the remainder of the paper we
will no longer distinguish between a function and its graph. A point
$q \in \R^3$ has \emph{level} $k$ if the number of functions in $F$ that pass
strictly below $q$ is $k$. The \emph{at most $k$-level} $L_{\leq k}(F)$ is the
set of all points in $\R^3$ for which the level is at most $k$, and the
$k$-level $L_k(F)$ is the boundary of that region.

Consider a collection $X$ of points in $\R^3$ (e.g.~a line segment), and let
$F_X$ denote the set of functions from $F$ intersecting $X$. We refer to $F_X$
as the \emph{conflict list} of $X$. Furthermore, let $\underline{X}$ denote the
vertical (downward) projection of $X$ onto the $x,y$-plane.

A \emph{pseudo-prism} is a constant complexity region in $\R^3$ that is bounded
from above by a function, unbounded from below, and whose sides are surfaces
vertical with respect to the $z$-direction. A \emph{a $k$-shallow
  $(1/r)$-cutting} $\Lambda_{k,r}(F)$ is a collection of such pseudo-prisms
with pairwise disjoint interiors whose union covers $L_{\leq k}(F)$ and for
which each pseudo-prism intersects at most $n/r$ functions in
$F$~\cite{m-rph-92}. Hence, for each region (pseudo-prism)
$\nabla \in \Lambda_{k,r}(F)$, the conflict list $F_\nabla$ contains $n/r$
functions. The number of regions in $\Lambda_{k,r}(F)$ is the \emph{size} of
the cutting.  Matou{\v{s}}ek~\cite{m-rph-92} originally defined a shallow
cutting in terms of simplicies, however using pseudo-prisms is more convenient
in our setting. In this case the parameter $r$ cannot become arbitrarily large,
however we are mostly interested in $k$-shallow $O(k/n)$-cuttings. In the
remainder of the paper we simply refer to such a cutting $\Lambda_k(F)$ as
\emph{a $k$-shallow cutting}. Observe that each pseudo-prism in $\Lambda_k(F)$
is intersected by $O(k)$ functions.

% A \emph{$k$-shallow $(1/r)$-cutting} $\Lambda_{k,r}(F)$ is a collection of
% constant complexity regions whose union covers $L_{\leq k}(F)$ and for which
% each region intersects at most $n/r$ functions in $F$~\cite{m-rph-92}. Hence,
% for each region $\nabla$ of $\Lambda_{k,r}(F)$, the conflict list $F_\nabla$
% contains $n/r$ functions.  The number of regions in $\Lambda_{k,r}(F)$ is the
% \emph{size} of the cutting. A $k$-shallow $(1/r)$ cutting is \emph{vertical} if
% (and only if) all regions $\nabla$ are pairwise disjoint pseudo-prisms.

 % See Fig.~\ref{fig:pseudo-prism} for an illustration.
 % from $F$.

\enlargethispage{\baselineskip}
Let $f_s(x)$ be the distance from $x$ to $s$. The data structures of Kaplan
\etal~\cite{kaplan2017dynamic} and Chan~\cite{chan2010dynamic_ch} actually
maintain the \emph{lower envelope} $L_0(F)$ of the set of distance functions
$F = \{ f_s \mid s \in S\}$. To find the site $s$ closest to a query point $q$
they can simply query the data structure to find the function $f_s$ that
realizes $L_0(F)$ at $q$. The critical ingredient in both data structures, as
well as our own data structure, is an efficient algorithm to construct a
shallow cutting of the functions in $F$.
Chan and Tsakalidis~\cite{chan_tsakalidis2015shallow} show that if $F$ is a set
of linear functions (i.e. planes in $\R^3$), we can compute a $k$-shallow
cutting $\Lambda_k(F)$ of size $O(n/k)$ in $O(n\log n)$ time. Kaplan
\etal~\cite{kaplan2017dynamic} show how to compute a $k$-shallow cutting of
size $O(n\log^2 n/k)$, in time $O(n\polylog n)$ for a certain class of constant
complexity algebraic functions $F$. Since the geodesic distance function
$f_s(x) = \geodlen(s,x)$ of a single site $s$ may already have complexity
$\Theta(m)$, any $k$-shallow cutting of such functions may have size
$\Omega(m)$. To circumvent this issue, we allow the regions (pseudo-prisms) to
have non-constant complexity. We do this in such a way that we can compactly
represent each pseudo-prism, while still retaining some nice properties such as
efficiently testing if a point lies inside it. Hence, in the remainder of the
paper we will drop the requirement that the regions in a cutting need to have
constant complexity.

\subparagraph{The general approach.} The general idea in our approach is to
recursively partition the polygon into two roughly equal size sub-polygons
$P_\ell$ and $P_r$ that are separated by a
diagonal. For the sites $S_\ell$ in the ``left''
subpolygon $P_\ell$, we then consider their geodesic distance functions
$F_\ell=\{f_s \mid s \in S_\ell\}$ restricted to the ``right'' subpolygon
$P_r$, that is, $f_s(x)=\geodlen(s,x)$, for $x \in P_r$. The crucial part, and our
main contribution, is that for these functions $F_\ell$ we can represent a
vertical shallow cutting implicitly. See Fig.~\ref{fig:global} for a schematic
illustration. More specifically, in
$O((n/k)\log^3 n(\log n + \log^2 m) + n\log^2 m + n\log^3 n\log m)$ expected
time, we can build a representation of the shallow cutting of size
$O((n/k)\log^2 n)$. We can then use this algorithm for building implicitly
represented shallow cuttings in the data structure of
Chan~\cite{chan2010dynamic_ch} and Kaplan~\etal~\cite{kaplan2017dynamic}. That
is, we build and maintain the lower envelope $L_0(F_\ell)$.  Symmetrically,
for the sites in $P_r$, we maintain the lower envelope $L_0(F_r)$ that their
distance functions $F_r$ induce in $P_\ell$. When we get a query point
$q \in P_r$, we use $L_0(F_\ell)$ to find the site in $P_\ell$ closest to $q$
in $O(\log^2 n\log m)$ time. To find the site in $P_r$ closest to $q$, we
recursively query in sub-polygon $P_r$. In total we query in $O(\log m)$
levels, leading to an $O(\log^2n\log^2 m)$ query time. When we add or remove a
site $s$ we, insert or remove its distance function in $O(\log m)$ lower
envelope data structures (one at every level). Every insertion takes
$O(\log^5 n + \log^4n\log^2 m)$ amortized expected time, and every deletion
takes $O(\log^7 n+\log^6\log^2 m)$ amortized expected time. Since every site is
stored $O(\log m)$ times, this leads to the following main result.

\begin{theorem}
  \label{thm:fully_dynamic_ds_polylog}
  Let $P$ be a simple polygon $P$ with $m$ vertices. There is a fully dynamic data
  structure of size $O(n\log^3 n\log m + m)$ that maintains a set of
  $n$ point sites in $P$ and allows for geodesic nearest neighbor queries in
  worst case $O(\log^2 n\log^2 m)$ time. Inserting a site takes
  $O(\log^5 n\log m + \log^4 n\log^3 m)$ amortized expected time, and deleting
  a site takes $O(\log^7n\log m + \log^6 n\log^3 m)$ amortized expected time.
\end{theorem}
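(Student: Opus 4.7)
The plan is to realize the sketch just before the theorem: build a balanced hierarchical decomposition of $P$ with $O(\log m)$ levels, and at every internal node store two dynamic lower-envelope data structures that together cover the sites on the opposite side of the separating diagonal. Queries and updates then reduce, up to a clean $O(\log m)$ factor from walking the decomposition tree, to one query or update on a single dynamic lower-envelope structure.

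To set up the decomposition, I would construct a balanced binary tree $\T$ of sub-polygons of $P$ via a balanced diagonal partition (for example, coming from a balanced polygon triangulation), with depth $O(\log m)$. Every internal node $v$ carries a diagonal $d_v$ splitting $P_v$ into two sub-polygons $P_v^\ell, P_v^r$ of at most a constant fraction of the size. At $v$ I maintain one dynamic lower-envelope data structure over the geodesic distance functions $F_v^\ell = \{f_s(\cdot) = \geodlen(s,\cdot) : s \in S \cap P_v^\ell\}$ restricted to $P_v^r$, and one symmetric structure for $S \cap P_v^r$ restricted to $P_v^\ell$. Answering a query $q \in P$ amounts to descending from the root to the leaf containing $q$: at every ancestor $v$, exactly one of $P_v^\ell, P_v^r$ contains $q$, and the envelope storing the opposite side returns the closest site to $q$ on that side. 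Taking the minimum over the $O(\log m)$ ancestors gives the answer. An insertion or deletion of $s$ touches exactly one envelope at each of the $O(\log m)$ ancestors of the leaf containing $s$.

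The critical input to this plan, already established in the earlier part of the paper, is the implicit $k$-shallow cutting of geodesic distance functions: size $O((n/k)\log^2 n)$, constructible within the expected time stated in the ``general approach'' paragraph, and supporting point-location in a cutting cell in $O(\log m)$ time (exploiting the two-point geodesic query structure of~\cite{guibas1989query}). Substituting this cutting for the constant-complexity cuttings used in Chan~\cite{chan2010dynamic_ch} and Kaplan~\etal~\cite{kaplan2017dynamic} yields a single dynamic lower-envelope data structure on $n$ geodesic distance functions that uses $O(n\log^3 n)$ space, answers queries in $O(\log^2 n\log m)$ worst-case time, and supports insertions in $O(\log^5 n + \log^4 n\log^2 m)$ and deletions in $O(\log^7 n + \log^6 n\log^2 m)$ amortized expected time. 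The extra $\log m$ factors compared with the Euclidean setting come entirely from evaluating and comparing geodesic distance functions; the $\log^3 n$ factor in the space comes from the $O(\log n)$ layers of shallow cuttings that the framework maintains internally.

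Assembling these pieces gives the theorem: the $O(m)$ term in the space pays for preprocessing $P$ for two-point geodesic queries and storing $\T$, while each site contributes to $O(\log m)$ envelopes, which produces the $O(n\log^3 n\log m)$ term; a query performs one envelope query per ancestor, so its cost is $O(\log^2 n\log m)\cdot O(\log m) = O(\log^2 n\log^2 m)$; and update times multiply by the same $O(\log m)$ factor, giving exactly the insertion and deletion bounds claimed. The main obstacle of the proof lies entirely in the earlier construction of the implicit shallow cutting and its plug-in to the Chan--Kaplan machinery; once that is in place, the recursion on $\T$ is a routine decomposable-search-problem wrap-up.
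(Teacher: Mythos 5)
Your proposal matches the paper's proof: the same balanced hierarchical decomposition of $P$ into $O(\log m)$ levels, two Kaplan-et-al.-style dynamic lower-envelope structures per node, and the same plug-in of the implicit shallow-cutting bounds (size $O((n/k)\log^2 n)$, point location $O(\log n\log m)$, construction time $T(n,k)$) into Lemma~\ref{lem:kaplan_DS}, followed by multiplying each per-envelope bound by the $O(\log m)$ ancestors a site or query touches. One small slip worth noting: you say cell location in the cutting takes $O(\log m)$ time, but it actually costs $Q(n,k)=O(\log n\log m)$; your subsequent per-envelope query bound $O(\log^2 n\log m)$ is nonetheless the correct $O(Q(n,1)\log n)$, so the final totals are unaffected.
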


\begin{figure}[tb]
  \centering
  \includegraphics[page=2]{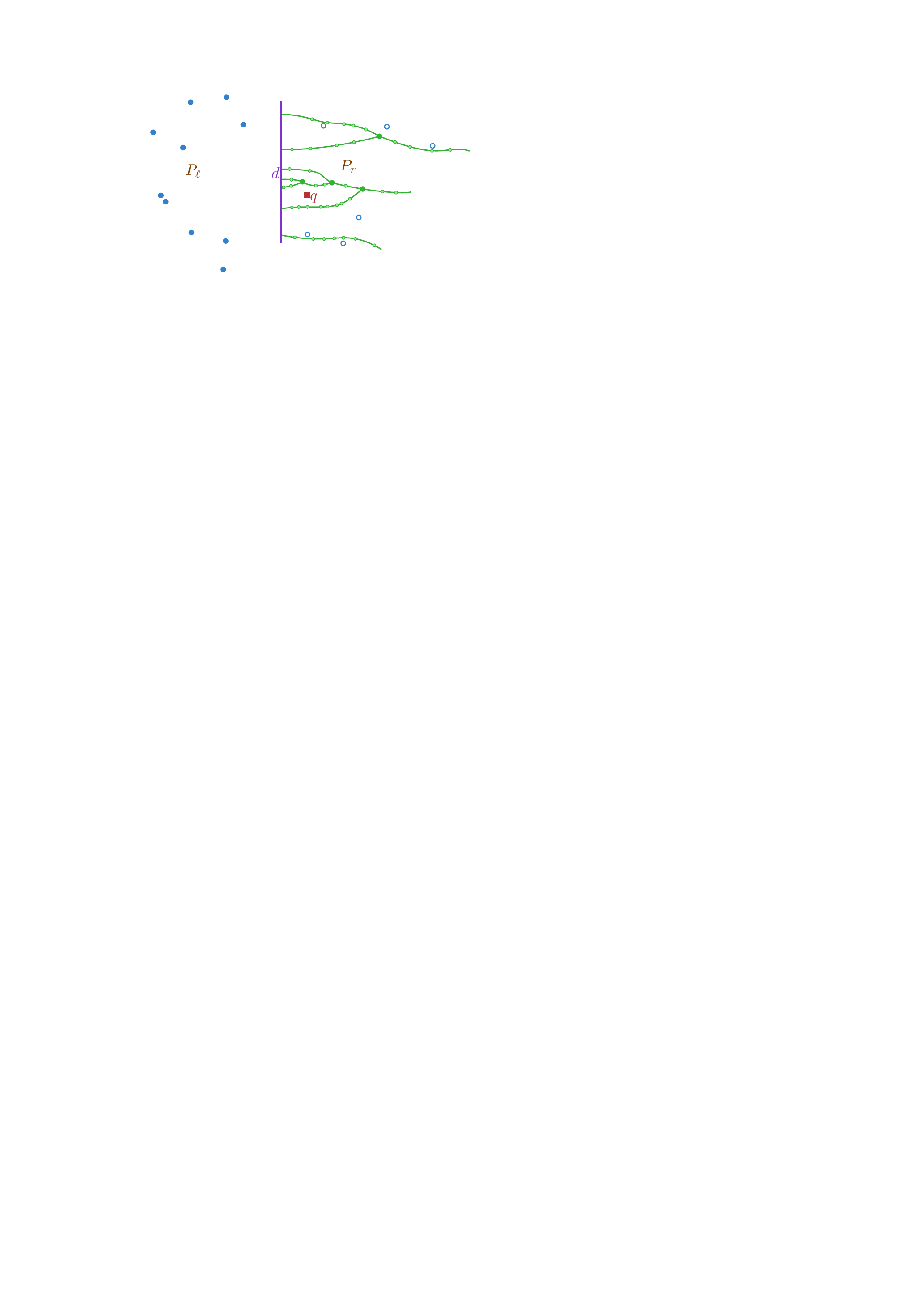}
  \caption{A schematic drawing of the downward projection of an implicit
    $k$-shallow cutting $\underline{\Lambda_k(F_\ell)}$ in $P_r$. The faces are
    pseudo-trapezoids. Neighboring pseudo-trapezoids share a vertical segment,
    or part of a bisector. We store only the degree one and three vertices
    (fat), and their topology.}
  \label{fig:global}
\end{figure}

The main complexity is in developing our implicit representation of the
$k$-shallow cutting, and the algorithm to construct such a cutting. Once we
have this algorithm we can directly plug it in into the data structure of
Chan~\cite{chan2010dynamic_ch} and Kaplan~\etal~\cite{kaplan2017dynamic}. Our global
strategy is similar to that of Kaplan~\etal~\cite{kaplan2017dynamic}: we
compute an approximate $k$-level of $\A(F)$ --in our case an implicit
representation of this approximate $k$-level-- and then argue that, under
certain conditions, this approximate $k$-level is actually a $k$-shallow
cutting $\Lambda_k(F)$ of $\A(F)$. Our approximate $k$-level will be a
$t$-level, for some appropriate $t$, on a random sample of the functions in
$F$. So, that leaves us two problems: \textit{i}) computing an implicit
representation of a $t$-level, and \textit{ii}) computing the conflict lists
for all pseudo-prism in our cutting $\Lambda_k(F)$.

For problem \textit{i}), representing the $t$-level implicitly, we use the
connection between the $t$-level and the $t^\mathrm{th}$-order Voronoi
diagram. In Section~\ref{sec:implicit_representations} we describe a small,
implicit representation of the $t^\mathrm{th}$-order Voronoi diagram that still
allows us to answer point location queries efficiently. Initially, we use the
recent algorithm of Oh and Ahn~\cite{oh_ahn2017voronoi} to construct this
representation. In Section~\ref{sec:computing_implict_representations} we then
design an improved algorithm for our particular use case.
% combining Lee's
% algorithm for computing a (Euclidean) $t^\mathrm{th}$-order Voronoi
% diagram~\cite{lee1982kthorder_vd} with an algorithm for constructing
% Hamiltonian abstract Voronoi diagrams~\cite{klein1994hamiltonian_vd}.

For problem \textit{ii}), computing the conflict lists, we will use a data
structure developed by Chan~\cite{chan2000rangerep} together with the implicit
Voronoi diagrams that we developed in
Section~\ref{sec:computing_implict_representations}. We describe these results
in more detail in Section~\ref{sec:implicit_cutting}. In
Section~\ref{sec:combining}, we then show in detail how we can combine all the
different parts into a fully dynamic data structure for nearest neighbor
queries.  Finally, we describe our simpler data structures for insertion-only
and offline-updates in Section~\ref{sec:simplified}.

\section{Implicit representations}
\label{sec:implicit_representations}

Let $F=\{f_s \mid s \in S\}$ denote the set of geodesic distance functions
inside the entire polygon $P$. Our implicit representation of a $k$-shallow
cutting $\Lambda_k(F)$ is based on an implicit representation of the $k$-level
in $\A(F)$. To this end, we first study higher order geodesic Voronoi diagrams.

\subparagraph{Higher order Voronoi diagrams.} Consider a set of $n$ sites $S$,
a domain $\D$, and a subset $H \subseteq S$ of size $k$. The
\emph{$k^\mathrm{th}$-order Voronoi region} $V_k(H,S)$ is the region in \D in
which the points are closer to (a site in) $H$, with respect to some distance
metric, than to any other subset $H' \subseteq S$ of size $k$. The
\emph{$k^\mathrm{th}$-order Voronoi diagram} $\VD_k(S)$ is the partition of \D
into such maximal regions $V_k(H,S)$ over all subsets $H \subseteq S$ of size
$k$~\cite{lee1982kthorder_vd}. Liu \etal~\cite{liu2013geodesic_higher_order_vd}
study the \emph{geodesic} $k^\mathrm{th}$-order Voronoi diagram. They show that
$\VD_k(S)$ has complexity $O(k(n-k)+km)$. In particular, it consists of
$O(k(n-k))$ degree one and degree three vertices, and $O(km)$ degree two
vertices (and by our general position assumption, there are no vertices of
degree more than three).

Consider a Voronoi region $V_k(H,S)$. Let $e_1,..,e_\ell$, be the edges
bounding $\partial V_k(H,S)$, let $H_j$ be the set of sites defining the other
Voronoi region incident to $e_j$, and observe that $H_j \setminus H$ contains a
single site $q_j$~\cite{lee1982kthorder_vd}. Let
$Q = \{ q_j \mid j \in [1,\ell]\}$ be the set of sites \emph{neighboring}
$V_k(H,S)$. Observe that these results imply that two adjacent regions
$V_k(H,S)$ and $V_k(H_j,S)$ in $\VD_k(S)$ are separated by a part of a bisector
$b_{st}$, where $s = H\setminus H_j$ and $t=q_j$.

By combining the above two observations we can represent $\VD_k(S)$
implicitly. That is, we store only the locations of these degree one and degree three
vertices, the adjacency relations between the regions, and the pair of labels
$(s,t)$ corresponding to each pair $(R_s,R_t)$ of neighboring regions. See also
the recent result of Oh and Ahn~\cite{oh_ahn2017voronoi}. It follows that the
size of this representation is linear in the number of degree one and degree
three vertices of $\VD_k(S)$. We refer to this as the \emph{topological
  complexity} of $\VD_k(S)$.

\subparagraph{Representing the $k$-level.} Consider the partition of $P$ into
maximally connected regions in which all points in a region have the same
$k^\mathrm{th}$ nearest site in $S$. Observe that this partition corresponds to
the downward projection $\underline{L_k(F)}$ of the $k$-level $L_k(F)$. As we
argue next, this partition is closely related to the $k^\mathrm{th}$-order
Voronoi diagram defined above.

Lee observes that there is a relation between the $i^\mathrm{th}$-order Voronoi
diagram and the $(i+1)^\mathrm{th}$-order Voronoi
diagram~\cite{lee1982kthorder_vd}. In particular, he shows that we can
partition each $i^\mathrm{th}$-order Voronoi region $V_i(H,S)$ into
$(i+1)^\mathrm{th}$ order Voronoi regions by intersecting $V_i(H,S)$ with the
(first order) Voronoi diagram of the set of sites neighboring $V_i(H,S)$. More
specifically:

\begin{observation}
  \label{obs:k_nearest_site}
  Let $V_i(H,S)$ be a geodesic $i^\mathrm{th}$-order Voronoi region, and let
  $Q$ be the sites neighboring $V_i(H,S)$. For any point $p \in V_i(H,S)$, the
  $(i+1)$-closest site from $p$ is the site $s \in Q$ for which
  $p \in \VD(s,Q)$.
\end{observation}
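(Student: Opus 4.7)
The plan is to proceed by contradiction. Let $s^{*}$ denote the site in $S\setminus H$ closest to $p$; since $H$ is the set of $i$ sites nearest $p$, this $s^{*}$ is the $(i+1)^{\mathrm{th}}$ nearest site overall. The second part of the statement---that $p\in\VD(s^{*},Q)$---will follow for free: $Q\subseteq S\setminus H$ by definition, and general position makes $s^{*}$ the strict minimizer of $\geodlen(p,\cdot)$ on the superset $S\setminus H$, so $s^{*}$ is a fortiori the minimizer on $Q$. The whole task therefore reduces to showing $s^{*}\in Q$.

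Assume for contradiction that $s^{*}\notin Q$, and let $\pi$ be the geodesic from $p$ to $s^{*}$ inside $P$. Because the nearest site to $s^{*}$ is $s^{*}$ itself and $s^{*}\notin H$, we have $s^{*}\notin V_i(H,S)$, so $\pi$ leaves $V_i(H,S)$ at a first point $p_1$. I would then invoke the Lee-style structural description of $\VD_i(S)$ recalled just above the observation: the piece of $\partial V_i(H,S)$ crossed at $p_1$ lies on a bisector $b_{u,v}$ with $u\in H$ and $v\in Q$, and the region that $\pi$ enters immediately past $p_1$ is $V_i(H',S)$ with $H'=(H\setminus\{u\})\cup\{v\}$. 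By general position the crossing is transversal at a single bisector, so $v$ is uniquely determined; since $s^{*}\notin Q$, we have $v\neq s^{*}$.

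To close the contradiction I would combine three simple facts. \emph{(a)} For every $x$ just past $p_1$ on $\pi$ we have $x\in V_i(H',S)$, so $v\in H'$ is strictly closer to $x$ than $s^{*}\notin H'$; passing to the limit gives $\geodlen(p_1,v)\le\geodlen(p_1,s^{*})$. \emph{(b)} Since $p_1$ lies on the geodesic $\pi$ from $p$ to $s^{*}$, $\geodlen(p_1,s^{*})=\geodlen(p,s^{*})-\geodlen(p,p_1)$. \emph{(c)} By the defining choice of $s^{*}$, and because $v\in S\setminus H$ with $v\neq s^{*}$, we have $\geodlen(p,v)>\geodlen(p,s^{*})$. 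Chaining \emph{(a)}, \emph{(b)}, and \emph{(c)} would then yield
\[
   \geodlen(p,p_1)+\geodlen(p_1,v)\;\le\;\geodlen(p,p_1)+\geodlen(p_1,s^{*})\;=\;\geodlen(p,s^{*})\;<\;\geodlen(p,v),
\]
in violation of the triangle inequality for the geodesic distance, the desired contradiction.

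The only substantive step is the identification of the region $\pi$ enters at $p_1$; that rests entirely on the quoted fact that two adjacent $i^{\mathrm{th}}$-order Voronoi regions differ by exactly one swap across part of a bisector $b_{u,v}$ with $u\in H$ and $v\in Q$. Everything else is continuity of $\geodlen$ together with the geodesic triangle inequality, both of which are routine.
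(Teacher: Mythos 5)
Your argument is correct, and in fact goes beyond what the paper does: the paper gives no proof of this observation, simply presenting it as a restatement of Lee's structural relation between the $i^\mathrm{th}$- and $(i+1)^\mathrm{th}$-order Voronoi diagrams (citing~\cite{lee1982kthorder_vd}), and leaving the (easy) transfer to the geodesic metric implicit. Your proof fills this in cleanly: the reduction to $s^*\in Q$ is right, the identification of the swap structure across $\partial V_i(H,S)$ is exactly the fact recalled just above the observation, and the chain
$\geodlen(p,v)\le\geodlen(p,p_1)+\geodlen(p_1,v)\le\geodlen(p,p_1)+\geodlen(p_1,s^*)=\geodlen(p,s^*)<\geodlen(p,v)$
uses nothing beyond additivity of geodesic distance along a geodesic and the metric triangle inequality, both valid in $P$. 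Two small points you wave away that are worth a sentence in a full write-up: (1) the exit point $p_1$ cannot lie in the relative interior of a $\partial P$-piece of $\partial V_i(H,S)$, since $\pi$ never crosses $\partial P$ — so after perturbing away the measure-zero event of $\pi$ hitting a degree-$3$ Voronoi vertex, the first exit really is a transversal crossing of a single bisector edge; and (2) strictness of $\geodlen(p,v)>\geodlen(p,s^*)$ in step (c) needs general position (or, if there is a tie, $v$ is itself an $(i+1)$-closest site and the observation holds with $v$ in place of $s^*$). Neither is a real gap.
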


% Liu \etal~\cite{liu2013geodesic_higher_order_vd} study $k^\mathrm{th}$-order
% Voronoi diagrams specifically with the geodesic distance function. They show
% that the geodesic $k^\mathrm{th}$ order Voronoi diagram has complexity
% $O(k(n-k)+km)$. In particular, it consists of $O(k(n-k))$ degree one and degree
% three vertices, and $O(km)$ degree two vertices.

% Consider the partition $\underline{L_k(S)}$ of $P$ into maximally connected
% regions in which all points in a region have the same $k^\mathrm{th}$ nearest
% site in $S$. By our general position assumption, all vertices in
% $\underline{L_k(S)}$ have degree at most three. Furthermore, observe that two
% regions $R_s$ and $R_t$ in which $s$ and $t$ are the $k^\mathrm{th}$-closest
% sites, respectively, are separated by a part of the bisector $b_{st}$ between
% $s$ and $t$. It follows that $\underline{L_k(S)}$ is a subdivision in which the
% degree one and degree three vertices are connected by parts of bisectors
% $b_{st}$. Therefore, we can represent $\underline{L_k(S)}$ \emph{implicitly} by
% storing only the locations of these degree one and degree vertices, the
% adjacency relations between the regions, and the pair of labels $(s,t)$
% corresponding to each pair $(R_s,R_t)$ of neighboring regions. See
% Fig.~\ref{fig:implicit_representation} for an illustration. It follows that the
% size of this representation is linear in the number of degree one and degree
% three vertices of $\underline{L_k(S)}$. We refer to this as the
% \emph{topological complexity} of $\underline{L_k(S)}$.

\begin{lemma}
  \label{lem:projected_k_level}
  The topological complexity of $\underline{L_k(F)}$  is $O(k(n-k))$.
\end{lemma}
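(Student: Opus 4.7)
The plan is to show that $\underline{L_k(F)}$ has only $O(k(n{-}k))$ cells, and then conclude the bound on degree-1 and degree-3 vertices by a planar graph argument.

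First I would invoke Observation~\ref{obs:k_nearest_site} with $i=k{-}1$: within each $(k{-}1)$-order region $V_{k-1}(H,S)$ the $k$-th closest site at a point is the unique $q\in Q_H$ for which that point lies in the first-order geodesic Voronoi region $\VD(q,Q_H)$. Hence $\underline{L_k(F)}$ is obtained by clipping each diagram $\VD(Q_H)$ to its corresponding region $V_{k-1}(H,S)$, so every cell of $\underline{L_k(F)}$ arises as a connected component of some nonempty intersection $\VD(q,Q_H)\cap V_{k-1}(H,S)$ with $q\in Q_H$.

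Using the connectedness/geodesic-convexity properties of first-order geodesic Voronoi regions and of $(k{-}1)$-order regions in a simple polygon, each such intersection contributes $O(1)$ connected components, so the number of cells of $\underline{L_k(F)}$ lying in $V_{k-1}(H,S)$ is $O(|Q_H|)$. Now $|Q_H|$ is bounded by the number of edges of $\VD_{k-1}(S)$ bounding $V_{k-1}(H,S)$ when we ignore the degree-2 polygon-artifact vertices, so $\sum_H |Q_H|$ is at most twice the number of such edges of $\VD_{k-1}(S)$. By Liu~\etal~\cite{liu2013geodesic_higher_order_vd}, $\VD_{k-1}(S)$ has $O((k{-}1)(n{-}k{+}1)) = O(k(n{-}k))$ vertices of degree~1 or~3, and a standard planar-graph argument bounds the number of its topological edges by the same quantity. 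Therefore $\underline{L_k(F)}$ has $O(k(n{-}k))$ cells.

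Finally, I would apply Euler's formula to the planar subdivision induced by $\underline{L_k(F)}$, with outer face corresponding to $\partial P$. After suppressing degree-2 polygon-artifact vertices, every interior vertex has degree at least $3$, so the number of degree-1 and degree-3 vertices is linear in the number of cells, giving the desired $O(k(n{-}k))$ bound.

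The main obstacle is the connectedness step: verifying that each intersection $\VD(q,Q_H)\cap V_{k-1}(H,S)$ contributes only $O(1)$ connected components to $\underline{L_k(F)}$. If this step fails, one would instead bound the number of crossings between edges of $\VD(Q_H)$ and $\partial V_{k-1}(H,S)$, which reduces to a combinatorial estimate on intersections of geodesic bisectors inside $P$, and sum these across all $(k{-}1)$-order regions.
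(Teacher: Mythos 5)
Your proof follows the same framework as the paper's: Observation~\ref{obs:k_nearest_site} (Lee's iterative relation between $\VD_{k-1}$ and the $k$-th nearest site), clipping $\VD(Q_H)$ to each $V_{k-1}(H,S)$, and summing $|Q_H|$ over all regions $H$ via the $O(k(n-k))$ bound on the edges/vertices of $\VD_{k-1}(S)$. Where you diverge is the intermediate step: you bound the number of \emph{cells} of $\underline{L_k(F)}$ first, and then pass to the number of degree-1/3 vertices via Euler's formula. The paper instead counts degree-1/3 vertices directly: vertices of the refinement are either vertices of $\VD_{k-1}(S)$ (there are $O(k(n-k))$ of these) or genuinely new degree-1/3 vertices of a clipped $\VD(Q_H)$ (at most $O(|Q_H|)$ per region, summing again to $O(k(n-k))$). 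No statement about cell counts or connectivity of intersections is ever needed.

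This difference matters precisely at the point you flag as your ``main obstacle.'' Your main line of argument needs that each nonempty intersection $\VD(q,Q_H)\cap V_{k-1}(H,S)$ has $O(1)$ connected components. You justify this by appealing to ``connectedness/geodesic-convexity properties,'' but first-order geodesic Voronoi regions and higher-order geodesic Voronoi regions in a simple polygon are \emph{not} geodesically convex in general (they are simply connected, which is weaker), so the intersection of two of them does not automatically have a bounded number of components. Establishing this claim would require importing or re-proving a structural fact about geodesic Voronoi cells that is not in the paper and that the paper's proof deliberately avoids relying on. That is a real gap in your main line of reasoning, not just a loose end.

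Your proposed fallback, however, is essentially the right repair and coincides with what makes the paper's proof work: the crossing points between edges of $\VD(Q_H)$ and $\partial V_{k-1}(H,S)$ do not need a fresh ``combinatorial estimate on intersections of geodesic bisectors'' --- by Lee's theorem (see also the proof of Theorem~\ref{thm:compute_kth_order_vd}) these crossings are exactly vertices of $\VD_{k-1}(S)$ lying on $\partial V_{k-1}(H,S)$, and those are already among the $O(k(n-k))$ degree-1/3 vertices of $\VD_{k-1}(S)$. Once you see this, there is no need for the cell count, the Euler step, or the connectivity claim at all: counting vertices directly, as the paper does, gives the bound immediately. So the proposal is on the right track but detours through a step it cannot support, while the tool needed to close it (Lee's characterization of the crossings) would simultaneously make the detour unnecessary.
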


% \frank{And I would think there are $O(km)$ degree two vertices, as with the
%   $k$th order VD. But I don't immediately see how to prove that exactly after
%   partitioning the thing}
\begin{proof}
  By Observation~\ref{obs:k_nearest_site}, we can obtain $\underline{L_k(F)}$
  from $\VD_{k-1}(S)$ by partitioning every region $V_{k-1}(H,S)$ using the
  Voronoi diagram of the sites $Q$ neighboring $V_{k-1}(H,S)$, and merging
  regions that have the same $k^\mathrm{th}$ nearest neighbor in the resulting
  subdivision. Thus, the vertices in $\underline{L_k(F)}$ appear either in
  $V_{k-1}(H,S)$ or in one of the newly created Voronoi diagrams.

  There are $O((k-1)(n-k+1))=O(k(n-k))$ degree one and degree three vertices
  that are also vertices in $V_{k-1}(H,S)$. Since the (first order) geodesic
  Voronoi diagram has $O(k)$ degree one and degree three vertices, a region
  $V_{k-1}(H,S)$ creates $|Q|$ new degree one and three vertices. Summing over
  all faces in $\VD_{k-1}(H,S)$ this then sums to $O(k(n-k))$.
\end{proof}

As with $\VD_k(S)$ we can represent $\underline{L_k(F)}$ implicitly by storing
only the locations of the degree one and three vertices and the topology of the
diagram. Note that if we can efficiently locate the region $R_s$ of
$\underline{L_k(F)}$ that contains a query point $q$, we can also easily
compute the $z$-coordinate of $L_k(F)$ at $q$, simply by computing the geodesic
distance $\geodlen(s,q)$. Since we preprocessed $P$ for two-point shortest path
queries this takes only $O(\log m)$ time (in addition to locating the region of
$\underline{L_k(F)}$ containing $q$).

\begin{wrapfigure}[11]{r}{0.4\textwidth}
  \centering
  \includegraphics{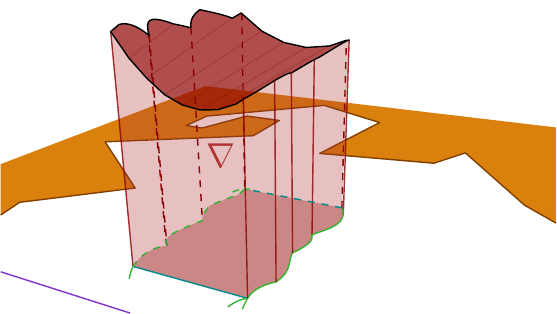}
  \caption{A pseudo prism $\nabla$ and its projection $\underline{\nabla}$
    (darker red) onto $P_r$.}
  \label{fig:geod_pseudo-prism}
\end{wrapfigure}

\subparagraph{Representing a vertical decomposition.} For every degree three
and degree one vertex in $\underline{L_k(F)}$ we now extend a vertical segment
up and down until it hits another edge of $\underline{L_k(F)}$. Observe that
the topological complexity of the resulting \emph{implicit vertical
  decomposition} $\underline{L_k^\nabla(F)}$ that we obtain still has
topological complexity $O(k(n-k))$. Furthermore, each region in
$\underline{L_k^\nabla(F)}$ is a \emph{pseudo-trapezoid} $\underline{\nabla}$
that is bounded on the left and right either by a vertical segment or a piece
of polygon boundary, and on the top and bottom by pieces of bisectors or a
piece of polygon boundary. We refer to the four degree one or degree three
vertices on the boundary of $\underline{\nabla}$ as the \emph{corners} of
$\underline{\nabla}$. See Fig.~\ref{fig:geod_pseudo-prism} for an
illustration. In the remainder of the paper, we will no longer distinguish
between $\underline{L_k^\nabla(F)}$ and its implicit representation. In
Section~\ref{sec:implicit_cutting}, we will use such an implicit vertical
decomposition to obtain an implicit representation of a shallow cutting.

\subparagraph{Computing implicit representations.} We can use the algorithm of
Oh and Ahn~\cite{oh_ahn2017voronoi} to compute the implicit representation of
$\VD_k(S)$ in $O(k^2n\log n\log^2 m)$ time. To compute (the representation of)
$\underline{L_k(F)}$ we first construct $\VD_{k-1}(S)$, and for each region
$V_{k-1}(H,S)$ in $\VD_{k-1}(S)$, we again use their algorithm to compute the
Voronoi diagrams $\VD(Q)$ of the set neighbors $Q$. We then clip these diagrams
to $V_{k-1}(H,S)$. This clipping can be done by a breadth first search in
$\VD(Q)$, starting with one of the vertices that is also in
$\partial V_{k-1}(H,S)$. This takes $O(|Q|\log |Q|\log^2 m)$
time in total. Summing over all faces in $\VD_{k-1}(S)$ gives us again a running time of
$O(k^2n\log n\log^2 m)$. Finally, to compute $\underline{L_k^\nabla(F)}$ we need to
insert two vertical extension segments at each vertex of $\underline{L_k(F)}$. We
can find the other endpoint of each extension segment using a point location
query. Thus, we obtain the following result.

\begin{lemma}
  \label{lem:implicit_representation}
  An implicit representation $\underline{L_k^\nabla(F)}$ of the $k$-level
  $L_k(F)$ that uses $O(k(n-k))$ space, can be computed in $O(k^2n\log n\log^2 m)$
  time. Using this representation, the pseudo-prism containing
  a query point (if it exists) can be determined in $O(\log n + \log m)$ time.
\end{lemma}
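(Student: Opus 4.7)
The plan is to make the construction sketched in the paragraph preceding the lemma fully quantitative, verifying the three claims (space, preprocessing time, query time) in turn. The space bound is essentially immediate from Lemma~\ref{lem:projected_k_level}: the projected $k$-level $\underline{L_k(F)}$ has $O(k(n-k))$ degree-one and degree-three vertices, and extending a vertical segment upward and downward from each such vertex adds only $O(1)$ combinatorial features per vertex, so the vertical decomposition $\underline{L_k^\nabla(F)}$ still has topological complexity $O(k(n-k))$, which bounds the size of the implicit representation.

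For the preprocessing time, I would first invoke the algorithm of Oh and Ahn~\cite{oh_ahn2017voronoi} to build the implicit representation of $\VD_{k-1}(S)$ in $O(k^2 n \log n \log^2 m)$ time. Then, for each region $V_{k-1}(H,S)$ of $\VD_{k-1}(S)$, I would read off its set of neighboring sites $Q$ from the implicit representation, run Oh and Ahn's (first-order) algorithm on $Q$ to produce $\VD(Q)$ in $O(|Q|\log|Q|\log^2 m)$ time, and clip $\VD(Q)$ to $V_{k-1}(H,S)$ by breadth-first traversal starting at any shared boundary vertex. By Observation~\ref{obs:k_nearest_site}, merging adjacent clipped cells with the same $k$-th closest label produces exactly the restriction of $\underline{L_k(F)}$ to $V_{k-1}(H,S)$. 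To bound the total clipping cost I would argue that each edge of $\VD_{k-1}(S)$ contributes to at most two incident neighbor sets, so $\sum_H |Q_H|$ is proportional to the topological complexity of $\VD_{k-1}(S)$, namely $O(k(n-k))$; hence this phase costs $O(kn \log n \log^2 m)$, absorbed into the budget. Finally, inserting the vertical extensions at each of the $O(k(n-k))$ vertices uses a point location query in the partial subdivision and runs in time dominated by the $\VD_{k-1}(S)$ construction.

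For the point location query, I would build a standard planar point location structure on the combinatorial skeleton of $\underline{L_k^\nabla(F)}$, which has size $O(k(n-k))$. Each internal test compares the query point $q$ to either a vertical segment or a polygon edge (in $O(1)$ time) or to a piece of a bisector $b_{st}$, which reduces to two geodesic distance computations. To obtain the claimed $O(\log n + \log m)$ bound rather than the naive $O(\log n \log m)$, I would first spend $O(\log m)$ time preprocessing $q$ in the Guibas--Hershberger shortest path data structure so that each subsequent side-of-bisector test can be resolved in $O(1)$ time; the remaining $O(\log n)$ comparisons of the point location search then contribute only $O(\log n)$ in total.

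The step I expect to be the main obstacle is exactly this last one: reducing the per-comparison cost of a side-of-bisector test from $O(\log m)$ to $O(1)$ after a single $O(\log m)$ preprocessing of the query point $q$. All other pieces are routine combinations of Lemma~\ref{lem:projected_k_level}, Observation~\ref{obs:k_nearest_site}, and the black-box algorithms of Oh and Ahn~\cite{oh_ahn2017voronoi} and Guibas and Hershberger~\cite{guibas1989query}; the neighbor-set summation $\sum_H |Q_H| = O(k(n-k))$ is the only other place where care is required, but it follows directly from the edge count of $\VD_{k-1}(S)$.
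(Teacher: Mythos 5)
Your space and construction-time analyses match the paper's argument almost exactly. The space bound is a direct consequence of Lemma~\ref{lem:projected_k_level} plus the observation that the vertical decomposition only adds $O(1)$ features per vertex; the construction time is obtained by invoking Oh and Ahn to build $\VD_{k-1}(S)$, then per region building $\VD(Q)$ for the neighbor set $Q$, clipping by BFS, and inserting vertical extensions by point location. Your refinement that $\sum_H |Q_H| = O(k(n-k))$, so the per-region phase only costs $O(kn\log n\log^2 m)$, is correct and a bit sharper than what the paper says for that phase, but it changes nothing since the total is dominated by the $\VD_{k-1}(S)$ construction.

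The place where you diverge is the query time, and there your sketch has a real gap — one you yourself flag. You cannot make a side-of-bisector test constant time by preprocessing only the query point $q$ in the Guibas--Hershberger structure: deciding whether $q$ is closer to $s$ or to $t$ means evaluating $\geodlen(s,q)$ and $\geodlen(t,q)$, which are two-point shortest-path queries whose $O(\log m)$ cost depends on \emph{both} endpoints, and the pair $(s,t)$ changes at every comparison. No single-endpoint preprocessing is known that collapses arbitrary subsequent two-point queries to $O(1)$. The honest per-comparison cost therefore remains $O(\log m)$ and the point location costs $O(\log n\log m)$. Notably, the paper itself never argues for $O(\log n + \log m)$ either: the preceding paragraph only establishes that, once the region is located, the $z$-coordinate can be recovered in an additional $O(\log m)$. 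Everywhere else the paper does this same kind of point location it states $O(\log n\log m)$ --- Lemma~\ref{lem:reconstruct_v}, Lemma~\ref{lem:compute_k-level}, and the proof of Theorem~\ref{thm:fully_dynamic_ds_polylog}, which plugs in $Q(n,k)=O(\log n\log m)$. So the $O(\log n + \log m)$ in Lemma~\ref{lem:implicit_representation} reads as a slip for $O(\log n\log m)$; your instinct to distrust it was right, but the remedy is to state the correct bound, not to conjure a preprocessing step that the data structure does not support.
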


In Section~\ref{sec:computing_implict_representations} we will show that if the
sites defining the functions in $F$ lie in one half of the polygon and we
restrict the functions to the other half we can improve these
results. Moreover, we can then compute an implicit representation of a
$k$-shallow cutting of $F$.

\section{Approximating the $k$-level}
\label{sec:Approximating_the_k-Level}

% We now argue that we can construct a small, implicitly represented subdivision
% of $P$ that can be used as an approximation of the $k$-level in the arrangement
% $\A(F)$ of geodesic distance functions. Following Kaplan
% \etal~\cite{kaplan2017dynamic} we take a random sample $R$ from $F$, and argue
% that (with high probability) the $t$-level of $\A(R)$, for some appropriate
% $t$, is an $\eps$-approximation of the $k$-level of $\A(F)$ and has low
% complexity. The function $L_t(R)$ is an $\eps$-approximation of $L_k(F)$ if
% (and only if) any point on $L_t(R)$ lies above $L_k(F)$ and below
% $L_{k(1+\eps)}(F)$.

% \frank{Do we want to drag the $\eps$-around everywhere in here?}

% Note that the complexity of $L_t(R)$ depends on the complexity of the polygon,
% $m$. However, for our purpose it suffices if it has low topological
% complexity. In particular, we show that the topological complexity is at most
% $O((cn/k\eps^5)\log^2 n)$.

An $xy$-monotone surface $\Gamma$ is an \emph{$\eps$-approximation} of $L_k(F)$
if and only if $\Gamma$ lies between $L_k(F)$ and $L_{(1+\eps)k}(F)$. Following
the same idea as in Kaplan \etal~\cite{kaplan2017dynamic} we construct an
$\eps$-approximation of $L_k(F)$ as follows. We choose a random sample $R$ of
$F$ of size $r=(cn/k\eps^2)\log n$ and consider the $t$-level of $\A(R)$ for
some randomly chosen level $t$ in the range $[(1+\eps/3)h,(1+\eps/2)h]$. Here
$c$ and $c'$ are some constants, $h = c'/\eps^2\log n$, and $\eps \in [0,1/2]$
is the desired approximation ratio. We now argue that $L_t(R)$ is an
$\eps$-approximation of $L_k(F)$.

Consider the range space $\S=(F,\RR)$, where each range in $\RR$ is the subset
of functions of $F$ intersected by a downward vertical ray in the
$(-z)$-direction. See Har-Peled~\cite{harPeled2011geometric} for details on
range spaces. An important concept is the \emph{VC-dimension} of \S, defined as
the size of the largest subset $F' \subseteq F$ for which the number of sets in
$\{ F' \cap F_\rho \mid F_\rho \in \RR\}$ is $2^{|F'|}$.

\begin{lemma}[Lemma 2.3.5 of Aronov~\etal\cite{aronov1993furthest}]
  \label{lem:intersect_once}
  Let $s$, $t$, and $u$ be three sites in $P$. Their bisectors $b_{st}$ and
  $b_{tu}$ intersect in at most a single point.
\end{lemma}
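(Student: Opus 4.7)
The plan is to argue by contradiction. Suppose $b_{st}\cap b_{tu}$ contains two distinct points $p$ and $q$. At each, all three geodesic distances to $s$, $t$, $u$ coincide, so $p$ and $q$ also lie on $b_{su}$. Our general-position assumption rules out any bisector passing through a vertex of $P$, so both $b_{st}$ and $b_{tu}$ are simple curves in $P$; the subarcs $\alpha\subset b_{st}$ and $\beta\subset b_{tu}$ joining $p$ and $q$ form a Jordan curve $C=\alpha\cup\beta$ bounding a topological disk $D\subseteq P$.

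The key ingredient is a one-sided monotonicity derived from the triangle inequality: for any $x$ on the geodesic $\Pi(p,t)$,
\[
\pi(x,s)\;\geq\;\pi(p,s)-\pi(p,x)\;=\;\pi(p,t)-\pi(p,x)\;=\;\pi(x,t),
\]
so $\Pi(p,t)$ immediately enters the $t$-side of $b_{st}$, and by the same argument the $t$-side of $b_{tu}$. Analogous inequalities pin down, for each of $\Pi(p,s)$, $\Pi(p,u)$ and for the three geodesics emanating from $q$, into which sector of the local cross formed by $\alpha$ and $\beta$ the geodesic initially enters. Combining this with the connectedness of geodesic Voronoi regions in a simple polygon, I expect to show that in each of the four possible orientations of $D$ (determined by whether $D$ lies on the $s$- or $t$-side of $\alpha$, and on the $t$- or $u$-side of $\beta$) exactly one of the sites $s,t,u$ is trapped inside $D$ while the other two lie outside.

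To close the argument, I would then consider the geodesic from an outside site to the inside one; since it joins points on opposite sides of $C$, it must cross $\alpha$ or $\beta$. Feeding any such crossing back into the triangle-inequality bound above yields an equality that is incompatible with the entry-direction data at the corresponding corner $p$ or $q$, producing a contradiction. The main obstacle is precisely this final case analysis: handling all four orientations of $D$ cleanly, and ensuring that the local entry data at $p$ and $q$ cannot simultaneously be realized, requires careful bookkeeping of which side of each bisector each of $s$, $t$, $u$ lies on. Each case, however, collapses to a short contradiction once the monotonicity tool above is in hand.
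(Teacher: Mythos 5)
The paper does not prove this lemma at all: it is imported verbatim as Lemma~2.3.5 of Aronov, Fortune and Wilfong~\cite{aronov1993furthest}, so there is no in-paper argument to compare against. That said, your proposal has a genuine gap at its load-bearing step, the claim that ``exactly one of $s,t,u$ is trapped inside $D$ while the other two lie outside.'' You never establish this, and it is not true in all four orientations. Since the interior of $D$ meets neither $b_{st}$ nor $b_{tu}$, it lies in one of four ``quadrants''; write $A,B$ for the $s$- and $t$-sides of $b_{st}$, and $C,E$ for the $t$- and $u$-sides of $b_{tu}$. In the quadrant $B\cap C$ one can indeed show $\mathring D = R_t$ and hence $t\in D$ (this is the good case your sketch is implicitly imagining). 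But in the quadrant $A\cap E$, where $t$ is the \emph{farthest} site throughout $D$, the bisector $b_{su}$ enters $\mathring D$ at $p$ (and at $q$) and splits $D$ into an $R_s$-part and an $R_u$-part, and none of $s,t,u$ need lie in $D$; the trapping claim simply fails there. So the case analysis you are deferring is not bookkeeping: one whole branch of it contradicts the claim you want to use.

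Even in the favorable case $B\cap C$ the closing contradiction does not go through as sketched. If $t\in\mathring D$ and we take the geodesic from $s\notin D$ to $t$, it does cross $\partial D=\alpha\cup\beta$, but crossing $\alpha\subset b_{st}$ at a point $y$ gives only $\pi(y,s)=\pi(y,t)$, i.e.\ $y$ is the geodesic midpoint of $s$ and $t$ --- perfectly consistent, not a contradiction with your triangle-inequality entry directions. The known arguments for this fact lean on structural properties of geodesics in simple polygons that your sketch never invokes, e.g.\ Aronov's Lemma~3.28 (a shortest path from $s$ meets $b_{st}$ at most once, which the present paper does cite for other purposes) or, equivalently, the pseudo-disk property of geodesic disks: two geodesic circles cross at most twice, whereas $\partial D_p$ and $\partial D_q$ would both pass through $s$, $t$ and $u$. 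Without some such input, the Jordan-curve setup alone cannot close the argument.
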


\begin{lemma}
  \label{lem:finite_vc_dimension}
  The VC-dimension of the range space \S is finite.
\end{lemma}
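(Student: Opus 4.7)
The plan is to bound the primal shatter function $\pi_{\S}(k) = \max_{F' \subseteq F,\, |F'| = k}|\{F' \cap F_\rho : F_\rho \in \RR\}|$ by a polynomial in $k$ and $m$; Sauer's lemma then forces the VC-dimension of \S to be finite.

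Fix any $F' = \{f_{s_1}, \dots, f_{s_k}\} \subseteq F$. Every range has the form $F_\rho = \{f_s \in F : f_s(q) \le z\}$ for some $(q,z) \in P \times \R$, so the number of distinct intersections $F' \cap F_\rho$ equals the number of sign patterns of $(f_{s_1}(q) - z, \dots, f_{s_k}(q) - z)$ over $(q,z) \in P \times \R$, which is the number of cells of the arrangement $\A(F')$ in $\R^3$. First I would project $\A(F')$ vertically onto $P$ and observe that inside a single face of the planar bisector arrangement $\underline{\A(F')}$ (formed by the $\binom{k}{2}$ bisectors $\{b_{s_is_j}\}_{i<j}$) the total order of $f_{s_1}(q), \dots, f_{s_k}(q)$ is invariant, because two values can exchange rank only when $q$ crosses the corresponding bisector. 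Fixing such a face and letting $z$ vary therefore yields at most $k+1$ distinct subsets of $F'$, namely the prefixes of this common order.

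Next I would bound $|\underline{\A(F')}|$. Each of the $\binom{k}{2}$ bisectors is a piecewise-algebraic curve in $P$ with $O(m)$ breakpoints, contributing $O(k^2 m)$ internal vertices. By Lemma~\ref{lem:intersect_once}, two bisectors sharing a common site meet at a single point, adding only $O(k^3)$ vertices. Two bisectors sharing no common site meet in $O(m)$ points --- as each is built from $O(m)$ algebraic pieces of constant description complexity --- contributing $O(k^4 m)$ vertices in total. Thus $|\underline{\A(F')}| = O(k^4 m)$, which, combined with the per-face count from the previous step, gives $\pi_{\S}(k) \le (k+1) \cdot O(k^4 m) = O(k^5 m)$.

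Any shattered set of size $k$ satisfies $2^k \le \pi_{\S}(k) = O(k^5 m)$, which forces $k = O(\log m)$. Hence the VC-dimension of \S is finite (in fact $O(\log m)$). The hard part will be justifying the pairwise bisector intersection bound for pairs not covered by Lemma~\ref{lem:intersect_once}: one must argue, via the piecewise-algebraic structure of geodesic bisectors inside a simple polygon, that each such pair contributes only polynomially many intersections. Everything else is a routine application of the invariance of $f_{s_i}(q)$-orderings within a face of the bisector arrangement and a standard counting bound on planar arrangements of curves.
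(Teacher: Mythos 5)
Your approach is genuinely different from the paper's and, even set aside the technical gaps noted below, it yields a weaker bound that would not suffice for the paper's purposes. You bound the primal shatter function $\pi_{\S}(k)$ by the complexity of the bisector arrangement $\underline{\A(F')}$ and get $\pi_{\S}(k)=\mathrm{poly}(k)\cdot m$, which by Sauer's lemma only gives VC-dimension $O(\log m)$. The paper instead observes that $\RR$ is exactly the range space of geodesic disks: $\D=\{\{s\in S_\ell : \geodlen(p,s)\le z\} : p\in P,\ z\ge 0\}$. Lemma~\ref{lem:intersect_once} implies that any three sites determine \emph{at most one} geodesic disk passing through all three (centered at $b_{st}\cap b_{tu}$), which is precisely the structural property that makes Euclidean disks have shattering dimension~$3$. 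Applying the same argument as Har-Peled's Lemma~5.15 verbatim then gives a shattering dimension of~$3$, hence a constant VC-dimension. The distinction matters: the sample size $r=(cn/k\eps^2)\log n$ used for the relative $(p,\eps/3)$-approximation in Section~\ref{sec:Approximating_the_k-Level} hides the VC-dimension as a constant. If the VC-dimension were $\Theta(\log m)$, every bound downstream of it would pick up an extra $\log m$ factor; so while your proof would technically establish the literal word ``finite,'' it would not support the way the lemma is used.

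There is also a concrete gap in your step bounding $|\underline{\A(F')}|$. You claim two bisectors $b_{s_is_j}$ and $b_{s_ks_\ell}$ with no common site meet in $O(m)$ points because each has $O(m)$ algebraic pieces, but two curves with $O(m)$ constant-degree pieces apiece can intersect in $\Theta(m^2)$ points by the obvious pairing, and you give no argument that geodesic bisectors are better behaved in this regard. (They are $x$-monotone only after restriction to a subpolygon $P_r$, which is not assumed here, and Lemma~\ref{lem:intersect_once} covers only the shared-site case.) This does not destroy finiteness --- $O(m^2)$ would still give $\pi_{\S}(k)=\mathrm{poly}(k)\cdot m^2$ --- but it confirms that the arrangement-complexity route leads to a bound that depends on $m$, whereas the disk-range argument via Lemma~\ref{lem:intersect_once} is both shorter and gives a truly constant bound.
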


\begin{proof}
  The range space $(F,\RR)$ is equivalent to $(S_\ell, \D)$ where
  $\D \subseteq 2^{S_\ell}$ is the family of subsets of $S_\ell$ that lie
  within some geodesic distance of some point $p \in P$. That is
  $\D = \{ \{ s \in S_\ell \mid \geodlen(p,s) \leq z \} \mid p\in P, z \geq
  0\}$. We now observe that any three points $s$, $t$, and $u$, define at most
  one geodesic disk (Lemma~\ref{lem:intersect_once}). Namely, the disk centered
  at the intersection point $p=b_{st} \cap b_{tu}$ and radius
  $\geodlen(p,s)=\geodlen(p,t)=\geodlen(p,u)$. This means the same argument
  used by as Har-Peled~\cite[Lemma 5.15]{harPeled2011geometric} now gives us
  that the shattering dimension of \S (see~\cite{harPeled2011geometric}) is
  constant (three). It then follows that the VC-dimension of \S is finite.
\end{proof}

Since \S has finite VC-dimension (Lemma~\ref{lem:finite_vc_dimension}), and $R$
has size
$r=(cn/k\eps^2)\log n \geq
\left(\frac{c}{\eps^2p}\left(\log\frac{1}{p}+\log\frac{1}{q}\right)\right)$,
for $p=\frac{k}{2n}$ and $q=1/n^b$ for some sufficiently large constant $b$, it
follows that with high probability, $R$ is a \emph{relative
  $(p,\frac{\eps}{3})$-approximation} for
$\S=(F,\RR)$~\cite{harPeled2011relative}. So, for every range $H \in \RR$, we
have (whp.) that
\begin{equation}
  \left| \frac{|H|}{|F|}-\frac{|H \cap R|}{|H|}
  \right|
  \leq
  \begin{cases}
    \frac{\eps}{3}\frac{|H|}{|F|}, & \text{if } |H| \geq p|F|, \text{ and }\\
    \frac{\eps}{3}p                & \text{if } |H| <    p|F|.
  \end{cases}
\end{equation}
Using exactly the same argument as Kaplan \etal~\cite{kaplan2017dynamic} we
then obtain the following result.

\begin{lemma}
  \label{lem:sample_is_approx_level}
  The level $L_t(R)$ is an $\eps$-approximation of the $k$-level $L_k(F)$.
\end{lemma}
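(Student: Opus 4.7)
The plan is to derive the lemma directly from the relative $(p,\eps/3)$-approximation property of $R$ stated just above, by translating between the level of a point in $\A(R)$ and its level in $\A(F)$. I would fix an arbitrary point $q$ lying on the surface $L_t(R)$ and let $H_q \subseteq F$ denote the set of functions whose graphs pass strictly below $q$. Since $H_q$ is precisely the conflict list of the downward vertical ray from $q$, it is a range in $\RR$, so the approximation inequality applies to it. By construction $|H_q \cap R| = t$, and the level of $q$ in $\A(F)$ is the quantity $k_q := |H_q|$ that we must pin down; establishing $k \leq k_q \leq (1+\eps)k$ then places $q$ between $L_k(F)$ and $L_{(1+\eps)k}(F)$ and hence shows that $L_t(R)$ is an $\eps$-approximation, which is exactly the definition of the goal.

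Next, I would apply the approximation inequality to $H_q$ with $p = k/(2n)$, splitting into the two cases of the displayed inequality. In the small-$k_q$ case where $|H_q| < pn = k/2$, the inequality gives $t/r \leq k/(2n) + (\eps/3)\cdot k/(2n)$; substituting the specified values $r = (cn/k\eps^2)\log n$ and $h = (c'/\eps^2)\log n$, this upper bound on $t$ contradicts the sampled interval $t \in [(1+\eps/3)h, (1+\eps/2)h]$, provided $c$ and $c'$ are chosen appropriately. So this case is vacuous and we may assume $k_q \geq k/2$. In the remaining case the inequality becomes a two-sided bound $(1-\eps/3)\, k_q/n \leq t/r \leq (1+\eps/3)\, k_q/n$, and combining this with $t \in [(1+\eps/3)h, (1+\eps/2)h]$ and solving for $k_q$ yields $k \leq k_q \leq (1+\eps)k$.

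The main obstacle is the bookkeeping of constants in the last step: one must pick $c$ and $c'$ so that the sample size $r$ is large enough to drive the sandwich through while keeping $t$ at a level whose expected $F$-counterpart lies in $[k,(1+\eps)k]$. This is a routine but finicky calculation, and since we have already set up the same range space, the same sampling regime, and the same relative-approximation guarantee as Kaplan \etal~\cite{kaplan2017dynamic}, I would simply invoke their identical computation rather than reproduce it, as the paper's own proof preamble already announces.
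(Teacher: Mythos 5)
Your proposal is correct and takes the same route as the paper: the paper's proof of Lemma~\ref{lem:sample_is_approx_level} consists of the single sentence ``Using exactly the same argument as Kaplan~\etal\ we then obtain the following result,'' relying on the relative $(p,\eps/3)$-approximation setup established immediately beforehand, which is precisely what you invoke. You have in fact unpacked the cited argument into its constituent steps (fix $q \in L_t(R)$, identify $H_q$ as a range, split on whether $|H_q| < pn$ to rule out the small case, then sandwich $k_q$ via the two-sided bound), so the only thing you defer to Kaplan~\etal\ is the constant bookkeeping, which is also all the paper defers.
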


What remains is to show that the (expected) topological complexity of the
$t$-level $L_t(R)$ is small, that is, that the expected number of degree three
and degree one vertices is at most $O((n/k\eps^5)\log^2 n)$.

\begin{lemma}
  \label{lem:complexity_level}
  The expected topological complexity of level $L_t(R)$ is
  $O((n/k\eps^5)\log^2 n)$.
\end{lemma}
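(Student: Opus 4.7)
The plan is to bound the expected complexity by averaging over the random choice of the level $t$ in $T = [(1+\eps/3)h, (1+\eps/2)h]$. Since $t$ is drawn uniformly from $T$, which contains $\Theta(\eps h)$ integer values, we have, for any fixed sample $R$ of size $r$,
\[
  E_t\bigl[|L_t(R)|\bigr] \;=\; \frac{1}{|T|}\sum_{j\in T} |L_j(R)|.
\]
So the task reduces to bounding $\sum_{j\in T} |L_j(R)|$ and dividing by $|T|$.

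First I would apply Lemma~\ref{lem:projected_k_level} to the random sample $R$ itself, viewed as a set of $r$ sites inside $P$. This gives the per-level bound $|L_j(R)| = O(j(r-j))$ for every integer level $j$. Every $j \in T$ satisfies $j = O(h)$, and since $r = \Theta((n/k\eps^2)\log n)$ while $h = \Theta(\eps^{-2}\log n)$, we are in the regime $j \leq r/2$, so this simplifies to $|L_j(R)| = O(h r)$.

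Summing this bound over the $\Theta(\eps h)$ integer values of $j \in T$ yields $\sum_{j\in T}|L_j(R)| = O(\eps\, h^2 r)$, and dividing by $|T| = \Theta(\eps h)$ gives $E_t[|L_t(R)|] = O(hr)$. Plugging in the parameter values yields $O(hr) = O((n/k)\eps^{-4}\log^2 n)$, which is subsumed by the claimed bound $O((n/k\eps^5)\log^2 n)$ using $\eps \leq 1/2$. Since the per-$R$ bound depends only on $|R| = r$ and not on the specific sites sampled, taking expectation over the random sample $R$ as well leaves the bound unchanged.

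The only subtle point, and the main obstacle, is verifying that Lemma~\ref{lem:projected_k_level} applies cleanly to the sub-sample $R$; this is immediate since the lemma makes no assumption on the site set beyond being a set of point sites in $P$. Beyond this, the argument is pure bookkeeping --- no Clarkson--Shor moment inequality is needed, because the per-level combinatorial bound we rely on is uniform in the randomness of $R$.
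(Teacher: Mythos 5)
Your proof is correct, but it takes a genuinely different route from the paper's. The paper follows Kaplan~\etal: it bounds the lower envelope $L_0(R)$ by $O(r)$, invokes the Clarkson--Shor moment bound to get $O(rh^2)$ for the \emph{total} complexity of all levels up to $O(h)$, and then averages over the $\Theta(\eps h)$ candidate levels in the range to obtain an \emph{expected} per-level complexity of $O(rh/\eps) = O((n/k\eps^5)\log^2 n)$ --- so the randomness of $t$ is essential to its argument. You instead apply the deterministic per-level bound of Lemma~\ref{lem:projected_k_level} (which ultimately rests on the $O(j(r-j))$ complexity of the $j^\mathrm{th}$-order geodesic Voronoi diagram) directly to the sample $R$, getting $|L_j(R)| = O(hr) = O((n/k\eps^4)\log^2 n)$ uniformly for every $j$ in the range; your averaging step is then vacuous, and you obtain a bound that is in fact a factor $1/\eps$ \emph{stronger} than the stated one (and subsumes it since $\eps \le 1/2$). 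Both arguments are sound; the paper's Clarkson--Shor route is the one that generalizes to settings (planes, algebraic surfaces) where no good deterministic single-level bound is available, which is why Kaplan~\etal need it, whereas the geodesic setting happens to admit the clean worst-case bound you exploit. Your application of Lemma~\ref{lem:projected_k_level} to the subset $R$ is legitimate: the lemma holds for any set of sites in general position, and $R$ inherits that from $S$. One small caveat worth stating explicitly: the random choice of $t$ is still needed elsewhere (for the sandwiching argument of Lemma~\ref{lem:sample_is_approx_level} and the cutting construction), so your observation removes the randomness only from this complexity bound, not from the overall scheme.
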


\begin{proof}
  The lower envelope $L_0(R)$ of $R$ has topological complexity $O(r)$, so by
  Clarkson and Shor~\cite{cs-arscg-89} the total topological complexity of all
  levels in the range $[(1+\eps/3)h,(1+\eps/2)h]$ is $O(rh^2)$. Hence, the
  expected topological complexity of a level $L_t(R)$, with $t$ randomly chosen
  from this range, is $O(rh^2/h\eps)=O(rh/\eps)$. Substituting
  $r=(cn/k\eps^2)\log n$ and $h=c'/\eps^2\log n$, we get $O(nk/\eps^5\log^2n)$
  as claimed.
\end{proof}

Again as in Kaplan \etal~\cite{kaplan2017dynamic}, if $k < (1/\eps^2)\log n$,
we can skip the sampling of the set $R$, and directly take a random level $t$
in $\A(F)$ in the range $[k,k(1+\eps)]$. This level has also an expected
topological complexity of at most $O((n/k\eps^5)\log^2 n)$. Using
Lemma~\ref{lem:implicit_representation} (and restarting the computation if the
size of the cutting exceeds $O((n/k\eps^5)\log^2 n)$)
we then get:

\begin{lemma}
  \label{lem:approx_k_level}
  An $\eps$-approximation of the $k$-level of $\A(F)$ that has topological
  complexity $O((n/k\eps^5)\log^2 n)$ can be computed in expected
  $O((n/k\eps^6)\log^3 n\log^2m)$ time.
\end{lemma}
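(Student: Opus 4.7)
\proofcmd
The plan is to instantiate the construction sketched in Section~\ref{sec:Approximating_the_k-Level} and then feed the result into Lemma~\ref{lem:implicit_representation}, wrapping everything in a restart loop to control the complexity. Concretely, I first take a random sample $R\subseteq F$ of size $r=(cn/k\eps^{2})\log n$ and pick a random integer level $t$ uniformly from the range $[(1+\eps/3)h,(1+\eps/2)h]$ with $h=(c'/\eps^{2})\log n$. I then invoke the algorithm of Lemma~\ref{lem:implicit_representation} on $R$ with parameter $t$ to compute an implicit representation $\underline{L_t^{\nabla}(R)}$ of the $t$-level of $\A(R)$.

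Correctness follows immediately from Lemma~\ref{lem:sample_is_approx_level}: with high probability $L_t(R)$ lies between $L_k(F)$ and $L_{(1+\eps)k}(F)$, i.e.\ it is an $\eps$-approximation of the $k$-level. For the complexity bound I use Lemma~\ref{lem:complexity_level}, which gives that the expected topological complexity of $L_t(R)$ is $O((n/k\eps^{5})\log^{2}n)$. To turn this expected bound into a worst-case guarantee, I wrap the whole procedure in a Las Vegas restart loop: after constructing $\underline{L_t^{\nabla}(R)}$ I count its degree-one and degree-three vertices, and if this number exceeds a large enough constant times $(n/k\eps^{5})\log^{2}n$, I throw the construction away and resample $(R,t)$. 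By Markov's inequality the loop halts after $O(1)$ iterations in expectation, and the output satisfies both the approximation and size guarantees claimed in the statement.

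For the running time, a single round costs what Lemma~\ref{lem:implicit_representation} spends on a sample of size $r$ with level $t$, namely $O(t^{2}r\log r\log^{2}m)$. Substituting $t=O((1/\eps^{2})\log n)$ and $r=O((n/k\eps^{2})\log n)$ and simplifying $\log r=O(\log n)$ yields the bound $O((n/k\eps^{6})\log^{3}n\log^{2}m)$ advertised in the lemma; since the expected number of rounds is $O(1)$, the overall expected time is the same. The only remaining case is $k<(1/\eps^{2})\log n$; here, exactly as in Kaplan~\etal~\cite{kaplan2017dynamic}, I skip the sampling stage and instead draw $t$ uniformly from $[k,(1+\eps)k]$, computing $L_t(F)$ directly via Lemma~\ref{lem:implicit_representation}. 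The same Clarkson--Shor averaging argument that underlies Lemma~\ref{lem:complexity_level} applies, giving the same expected topological complexity, while the running time bound is only better because there is no extra $r$ factor.

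The main delicate point is the restart loop: the topological-complexity bound coming from Lemma~\ref{lem:complexity_level} holds only in expectation (it is derived by an averaging argument over the chosen level $t$), and the $\eps$-approximation property from Lemma~\ref{lem:sample_is_approx_level} holds only with high probability. One has to check that the failure probability of the approximation property is small enough (polynomially small in $n$, which is exactly what the relative $(p,\eps/3)$-approximation bound gives) so that conditioning on success does not destroy the $O(1)$-expected-iterations Markov argument. Everything else is a routine substitution of parameters.
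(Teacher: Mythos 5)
Your proof follows the paper's route exactly: random sample $R$ of size $r=(cn/k\eps^2)\log n$, random level $t\in[(1+\eps/3)h,(1+\eps/2)h]$, build $L_t(R)$ via Lemma~\ref{lem:implicit_representation}, restart when the topological complexity exceeds $O((n/k\eps^5)\log^2 n)$, and, when $k<(1/\eps^2)\log n$, skip the sampling and pick a level in $\A(F)$ directly. Lemmas~\ref{lem:sample_is_approx_level} and~\ref{lem:complexity_level} are invoked just as the paper does, and your Las Vegas restart (with the observation that one must condition on the relative-approximation event holding, which is fine because it fails with only polynomially small probability) is the intended mechanism for turning the expected complexity bound into a certified output size; you actually spell out more of the argument than the paper, whose ``proof'' is essentially the preceding paragraph of prose.

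The one step that does not check out, in your writeup and in the paper's statement alike, is the final substitution. From Lemma~\ref{lem:implicit_representation} a single round costs $O(t^2 r\log r\log^2 m)$. Plugging in $t=O((1/\eps^2)\log n)$ and $r=O((n/k\eps^2)\log n)$ already gives $t^2 r=O((n/k\eps^6)\log^3 n)$, and the remaining $\log r=\Theta(\log n)$ factor pushes the total to $O((n/k\eps^6)\log^4 n\log^2 m)$, not $\log^3 n$ as you (and the lemma) claim. You write that the substitution ``yields'' the $\log^3 n$ bound, but it does not; the extra $\log n$ survives. For comparison, in Theorem~\ref{thm:compute_shallow_cutting} the paper performs the analogous substitution using the sharper Lemma~\ref{lem:compute_k-level} (running time $O(t^2 r(\log r+\log^2 m))$) and correctly retains a $\log^4 n$ term. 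So the slip is inherited from the paper's own Lemma~\ref{lem:approx_k_level} rather than introduced by you; still, as written your last line is a non-sequitur, and a standalone correct proof would either have to weaken the time bound by a $\log n$ factor or appeal to the faster construction of Lemma~\ref{lem:compute_k-level}.
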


The main difference between our approach and that of
Kaplan~\etal~\cite{kaplan2017dynamic} is the range space used. In our approach,
the ranges are defined by downward vertical rays, whereas in Kaplan~\etal the
ranges are defined by more general objects. For example, their range space
includes a range consisting of the functions intersected by some other constant
complexity algebraic function. This allows them to directly turn their
approximate level into a shallow cutting. Unfortunately, this idea does not
directly extend to the geometric setting, as the VC-dimension of such a range
space may again depend on the complexity of the polygon. Therefore, we will use
a different approach in Section~\ref{sec:implicit_cutting}.

% the arguments stating that we can approximate the $k$-level.

\section{Computing implicit representations in subpolygon $P_r$}
\label{sec:computing_implict_representations}

Consider a diagonal $d$ that splits the polygon $P$ into two subpolygons
$P_\ell$ and $P_r$, and assume without loss of generality that $d$ is vertical
and that $P_r$ lies right of $d$. We consider only the sites $S_\ell$ in
$P_\ell$, and we restrict their functions
$F=\{f_s \cap (P_r \times \R) \mid s \in S_\ell\}$ to $P_r$. We now present a
more efficient algorithm to compute the implicit representation of $L_k(F)$ in
this setting. To this end, we show that the two-point shortest path query data
structure of Guibas and Hershberger~\cite{guibas1989query} essentially gives us
an efficient way of accessing the bisector $b_{st}$ between a pair of sites
without explicitly computing it. See Appendix~\ref{app:Bisector}. In
Section~\ref{sub:implicit_voronoi} we first use this to compute an implicit
representation of the Voronoi diagram of $S_\ell$ in $P_r$. Building on these
results, we can compute an implicit representation of the $k^\mathrm{th}$-order
Voronoi diagram $\VD_k(S_\ell)$ in $P_r$
(Section~\ref{sub:Computing_an_Implicit_k-order_Voronoi_Diagram}), the
$k$-level $L_k(F)$ of $F$ in $P_r \times \R$
(Section~\ref{sub:Computing_an_Implicit_Vertical_Decomposition}), and finally
an implicit vertical decomposition $\underline{L_k^\nabla(F)}$ of
$\underline{L_k(F)}$ in $P_r$
(Section~\ref{sub:Computing_an_Implicit_Vertical_Decomposition}).

\subsection{Computing an implicit Voronoi diagram}
\label{sub:implicit_voronoi}

The Voronoi diagram $\VD=\VD(S_\ell)$ in $P_r$ is a
forest~\cite{aronov1989geodesic}. We now show that we can compute (the topology
of) this forest efficiently by considering it as an abstract Voronoi
diagram~\cite{klein1993avd}. Our forest stores only the locations of the degree
one and degree three vertices and their adjacencies. This turns out to be
sufficient to still answer point location queries efficiently.

Assuming that certain geometric primitives like computing the intersections
between ``related'' bisectors take $O(X)$ time we can construct an abstract
Voronoi diagram of $n$ sites in expected $O(Xn\log n)$
time~\cite{klein1993avd}. We will show that \VD is a actually a
\emph{Hamiltonian} abstract voronoi diagram, which means that it can be
constructed in $O(Xn)$ time~\cite{klein1994hamiltonian_vd}. We show this in
Section~\ref{sub:Hamiltonian_avd}. In Section~\ref{sub:Geometric_Primitives} we
discuss the geometric primitives used by the algorithm of Klein and
Lingas~\cite{klein1994hamiltonian_vd}; essentially computing (a representation
of) the concrete Voronoi diagram of five sites. We show that we can implement
these primitives in $O(\log^2 m)$ time by computing the intersection point
between two ``related'' bisectors $\br_{st}$ and $\br_{tu}$. This then gives us
an $O(n\log^2 m)$ time algorithm for constructing \VD. Finally, in
Section~\ref{sub:query} we argue that having only the topological structure \VD
is sufficient to find the site in $S_\ell$ closest to a query point
$q \in P_r$.

\subsubsection{Hamiltonian abstract Voronoi diagrams}
\label{sub:Hamiltonian_avd}

In this section we show that we can consider $\VD$ as a Hamiltonian abstract
Voronoi diagram. A Voronoi diagram is \emph{Hamiltonian} if there is a curve
--in our case the diagonal $d$-- that intersects all regions exactly once, and
furthermore this holds for all subsets of the
sites~\cite{klein1994hamiltonian_vd}. Let $S_\ell$ be the set of sites in $P_\ell$
that we consider, and let $T_\ell$ be the subset of sites from $S_\ell$ whose Voronoi
regions intersect $d$, and thus occur in \VD.

\begin{lemma}
  \label{lem:hamiltonian_avd}
  The Voronoi diagram $\VD(T_\ell)$ in $P_r$ is a Hamiltonian abstract Voronoi diagram.
\end{lemma}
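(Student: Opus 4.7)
The plan is to verify both defining properties of a Hamiltonian abstract Voronoi diagram in the sense of Klein and Lingas: first, that $\VD(T_\ell)$ restricted to $P_r$ satisfies the standard AVD axioms, and second, that for every subset $T' \subseteq T_\ell$ the diagonal $d$ meets each region of $\VD(T')$ in exactly one connected piece.

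For the AVD axioms, I would appeal to standard properties of geodesic Voronoi diagrams in simple polygons \cite{aronov1989geodesic}: bisectors between two sites are Jordan arcs that split $P$ into the two ``closer-to'' halves, and each Voronoi region is simply connected. Lemma~\ref{lem:intersect_once} states that any two bisectors $b_{st}$ and $b_{tu}$ sharing a common site meet in at most one point, which is precisely the bisector axiom Klein's framework requires. Together these facts place $\VD$ inside Klein's abstract Voronoi diagram framework.

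For the Hamiltonian property, the key geometric input is that $\geodlen(s,\cdot)$ is convex along any straight segment contained in $P$, and hence in particular along the diagonal $d$. I would first establish the single-interval property for $T' = T_\ell$ by contradiction: suppose $V(s, T_\ell) \cap d$ consists of two disjoint arcs separated on $d$ by $V(t, T_\ell) \cap d$, so that $b_{st}$ crosses $d$ at least twice. Picking a third site $u \in T_\ell$ whose region is responsible for the geometry near one of these re-entries, I would combine the convexity of $\geodlen(s,\cdot) - \geodlen(t,\cdot)$ along $d$ with Lemma~\ref{lem:intersect_once} applied to the pair $(b_{st},b_{su})$ (or $(b_{st},b_{tu})$) to derive an arrangement of bisector intersections that Lemma~\ref{lem:intersect_once} forbids. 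The extension to arbitrary $T' \subseteq T_\ell$ is then a simple observation: deleting a site from $T'$ causes its interval on $d$ to be absorbed into the intervals of its neighbors on $d$, so any resulting chain of consecutive intervals collapses into a single interval, preserving the single-interval property.

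The main obstacle is the single-interval claim itself. Convexity of the individual distance functions along $d$ does not by itself suffice, since the difference of two convex functions need not be unimodal, and $\geodlen(s,\cdot) - \geodlen(t,\cdot)$ could in principle change sign more than once along $d$. The argument therefore needs to combine the convexity along $d$ with the combinatorial information supplied by Lemma~\ref{lem:intersect_once}, choosing the auxiliary third site $u$ carefully so that the required bisector intersection configuration is actually attained and yields the contradiction.
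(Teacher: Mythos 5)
Your proposal correctly identifies the crux of the lemma (the single-interval property of $V(s,T') \cap d$), but it leaves exactly that crux unproven, and you say as much yourself in the final paragraph. The plan to introduce an auxiliary third site $u$ and combine the convexity of $\geodlen(s,\cdot)-\geodlen(t,\cdot)$ with Lemma~\ref{lem:intersect_once} is not carried out, and it is not clear it can be: for two arcs of $V(s,T_\ell)\cap d$ separated by a piece of $V(t,T_\ell)\cap d$, the bisector forced to recross $d$ is already $b_{st}$, so the relevant obstruction is a property of a \emph{single} bisector versus the diagonal, and bringing in a third site only complicates matters. Moreover, since the difference of two convex functions need not be quasi-convex, convexity along $d$ by itself gives you nothing here, as you observe.

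The paper avoids this entirely by proving a dedicated lemma (Lemma~\ref{lem:bst_intersect_d}): the bisector $b_{st}$ meets $d$ in at most one point. Its proof is a direct geometric argument using Aronov's facts that $b_{st}$ is a simple arc with endpoints on $\partial P$ (Lemma~\ref{lem:bisector_prop}) and that $b_{st}$ meets any shortest path from $s$ or $t$ at most once (Lemma~\ref{lem:bisector_sp_intersect}); assuming two crossings $w_1,w_2$ with $d$, one shows $s$ would have to sit inside a pocket and the shortest path from $s$ to $w_2$ would then be strictly longer than $\|tw_2\|$, a contradiction. Given that lemma, the Hamiltonian property is immediate: for each $t\neq s$, the set $\{\lambda: \geodlen(s,x(\lambda))<\geodlen(t,x(\lambda))\}$ along $d$ is a single (possibly empty or full) subinterval since the sign can change only at the unique crossing; $V(s,T')\cap d$ is the intersection of such subintervals and hence connected, for every subset $T'\subseteq T_\ell$. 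The heredity to subsets is also handled differently but equivalently: the paper simply notes $V(s,T_1\cup T_2)\subseteq V(s,T_1)$, which preserves nonempty intersection with $d$ when deleting sites; your ``absorbed into neighbors'' phrasing is an informal restatement of the same point. In short: replace your third-site/convexity program with the pair-only Lemma~\ref{lem:bst_intersect_d} and the proof closes.
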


\begin{proof}
  By Lemma~\ref{lem:bst_intersect_d} any bisector $b_{st}$ intersects the
  diagonal $d$ at most once. This implies that for any subset of sites
  $T \subseteq S_\ell$, so in particular for $T_\ell$, the diagonal $d$ intersects
  all Voronoi regions in $\VD(T)$ at most once. By definition, $d$ intersects
  all Voronoi regions of the sites in $T_\ell$ at least once. What remains is to
  show that this holds for any subset of $T_\ell$.  This follows since the Voronoi
  region $V(s,T_1 \cup T_2)$ of a site $s$ with respect to a set $T_1 \cup T_2$
  is contained in the voronoi region $V(s,T_1)$ of $s$ with respect to $T_1$.
\end{proof}

\subparagraph{Computing the order along $d$.} We will use the algorithm of Klein %
and Lingas~\cite{klein1994hamiltonian_vd} to construct
$\VD=\VD(S_\ell)=\VD(T_\ell)$. To this end, we need the set of sites $T_\ell$ whose
Voronoi regions intersect $d$, and the order in which they do so. Next, we show
that we can maintain the sites in $S_\ell$ so that we can compute this information
in $O(n\log^2 m)$ time.

% To this end, we define a total order on all sites in
% $S_\ell$, such that the set of sites $T_\ell$ whose Voronoi regions intersect $d$
% (and thus contribute to \VD) from a subsequence in this order. We then maintain
% $S_\ell$ in this order, and show that we can extract $T_\ell$ from $S_\ell$ in
% $O(k\log^2 m)$ time.

\begin{lemma}
  \label{lem:order}
  Let $s_1,..,s_n$ denote the sites in $S_\ell$ ordered by increasing distance from the
  bottom-endpoint $p$ of $d$, and let $t_1,..,t_z$ be the subset $T_\ell
  \subseteq S_\ell$ of sites whose Voronoi regions intersect $d$, ordered along $d$
  from bottom to top. For any pair of sites $t_a=s_i$ and $t_c=s_j$, with
  $a < c$, we have that $i < j$.
\end{lemma}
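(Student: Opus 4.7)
The plan is to use the Hamiltonian structure of $\VD(T_\ell)$ together with the fact that each bisector crosses $d$ at most once (Lemma~\ref{lem:bst_intersect_d}) to reduce the claim to a simple continuity argument along $d$.

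First I would argue that for each site $t_k\in T_\ell$, the intersection $V(t_k,T_\ell)\cap d$ is a single (possibly degenerate) sub-segment $I_k$ of $d$, and that by definition of the ordering $t_1,\ldots,t_z$, these sub-segments appear along $d$ from bottom to top in the order $I_1,\ldots,I_z$. The Voronoi regions are pairwise disjoint in their interiors, so the $I_k$ are pairwise interior-disjoint. This reordering from ``region-incidence with $d$'' to ``intervals on $d$'' is immediate once we invoke Lemma~\ref{lem:hamiltonian_avd}.

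Next, fix $a<c$, so $I_a$ lies below $I_c$ on $d$. I would compare $t_a$ and $t_c$ along $d$ using the bisector $b_{t_a t_c}$: for any point $q\in I_a$ we have $\geodlen(q,t_a)<\geodlen(q,t_c)$ since $t_a$ is the overall closest site in $T_\ell$ to $q$, and symmetrically $\geodlen(q,t_c)<\geodlen(q,t_a)$ for $q\in I_c$. By continuity of the geodesic distance function along $d$, the function $q\mapsto \geodlen(q,t_a)-\geodlen(q,t_c)$ must vanish somewhere on the portion of $d$ strictly between $I_a$ and $I_c$, hence $b_{t_a t_c}$ crosses $d$ at some point $q^\star$ lying above $I_a$.

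By Lemma~\ref{lem:bst_intersect_d} this point $q^\star$ is the \emph{only} crossing of $b_{t_at_c}$ with $d$. Hence the sign of $\geodlen(q,t_a)-\geodlen(q,t_c)$ is constant on each of the two pieces of $d$ separated by $q^\star$, and from the $I_a$-side we know this sign is negative. The bottom endpoint $p$ lies below $I_a$, hence below $q^\star$, so $\geodlen(p,t_a)<\geodlen(p,t_c)$. By the definition of the ordering $s_1,\ldots,s_n$ by increasing geodesic distance from $p$, this yields $i<j$, as claimed. The main (though minor) subtlety is making sure the single-crossing property of $b_{t_at_c}$ combined with the Hamiltonian property really forces $p$ and $I_a$ to lie on the same side of $q^\star$; once phrased in terms of $q^\star$ this is immediate.
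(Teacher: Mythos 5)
Your proof is correct and follows essentially the same route as the paper's: locate the unique crossing $w$ (your $q^\star$) of the bisector $b_{t_at_c}$ with $d$ between the two Voronoi intervals, observe (via Lemma~\ref{lem:bst_intersect_d}) that $t_a$ is closer on the entire portion of $d$ below that crossing, and conclude by noting that $p$ lies there. The paper compresses this into three sentences; you just make the single-crossing and continuity steps explicit.
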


\begin{proof}
  Since $t_a$ and $t_c$ both contribute Voronoi regions intersecting $d$,
  their bisector must intersect $d$ in some point $w$ in between these two
  regions. Since $a < c$ it then follows that all points on $d$ below $w$,
  so in particular the bottom endpoint $p$, are closer to $t_a=s_i$ than to
  $t_c = s_j$. Thus, $i < j$.
\end{proof}

Lemma~\ref{lem:order} suggests a simple iterative algorithm for
extracting $T_\ell$ from $S_\ell=s_1,..,s_n$.

\begin{lemma}
  \label{lem:compute_sequence}
  Given $S_\ell=s_1,..,s_n$, $T_\ell$ can be computed from $S_\ell$ in $O(n\log^2 m)$
  time.
\end{lemma}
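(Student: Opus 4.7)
The plan is an incremental sweep reminiscent of Graham's scan. Process the sites in the given order $s_1,\ldots,s_n$ while maintaining a stack $\Sigma=(t_1,\ldots,t_k)$ of the sites whose Voronoi regions currently meet $d$, listed from bottom to top along $d$. With each entry $t_j$ we store the point $w_{j-1}\in d$ at which its region begins, namely the intersection $b_{t_{j-1}t_j}\cap d$ (or the lower endpoint $p$ when $j=1$). By Lemma~\ref{lem:order}, any newly inserted site that belongs to $T_\ell$ has its region strictly above every surviving region along $d$, so it can only interact with the top of $\Sigma$.

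To process $s_i$, let $t_k$ be the current top of $\Sigma$, with starting point $w_{k-1}$. Using the bisector primitive from Appendix~\ref{app:Bisector} we determine in $O(\log^2 m)$ time whether $b_{t_k s_i}$ meets $d$, and if so the intersection point $w$. Since $s_i$ is farther than $t_k$ from $p\in d$, $t_k$ is closer than $s_i$ at $p$, and by Lemma~\ref{lem:bst_intersect_d} the bisector crosses $d$ at most once; hence, when $w$ exists, $t_k$ is closer below $w$ and $s_i$ is closer above. We then act as follows: \emph{(i)} if $b_{t_k s_i}$ does not meet $d$, then $t_k$ dominates $s_i$ throughout $d$, so $s_i\notin T_\ell$ and we discard it; \emph{(ii)} if $w\le w_{k-1}$, then $s_i$ is closer than $t_k$ on the entire interval currently assigned to $t_k$, so we pop $t_k$ and repeat with the new top; \emph{(iii)} otherwise $w_{k-1}<w$ lies strictly inside the remaining portion of $d$, and we push $s_i$ onto $\Sigma$ with starting point $w$ and move on to $s_{i+1}$.

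Correctness follows because each region of $\VD(T_\ell)$ meets $d$ in a single interval (Lemma~\ref{lem:hamiltonian_avd}), so a popped site never contributes again; case~\emph{(iii)} correctly records the lower endpoint of the interval of $s_i$ whenever $s_i\in T_\ell$, while case~\emph{(i)}, possibly reached after a chain of case~\emph{(ii)} pops, correctly discards $s_i$ otherwise. For the running time, each of the $n$ sites is pushed at most once and popped at most once, giving $O(n)$ calls to the bisector primitive and thus $O(n\log^2 m)$ total time. The main obstacle is justifying that comparing $s_i$ only against the current top of $\Sigma$ is sufficient: this is forced by Lemma~\ref{lem:order}, which requires the interval of $s_i$ (if any) to lie above all surviving intervals, so any ``kill'' must propagate downward from the top through repeated case~\emph{(ii)} pops rather than occurring in the interior of $\Sigma$.
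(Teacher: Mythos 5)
Your proposal is correct and takes essentially the same approach as the paper: a Graham-scan-style stack sweep over the sites in order of distance to the bottom endpoint $p$, justified by Lemma~\ref{lem:order} and Lemma~\ref{lem:bst_intersect_d}, using the $O(\log^2 m)$ bisector-diagonal intersection primitive of Lemma~\ref{lem:finding_w}. The only cosmetic difference is that the paper detects the discard case with a cheaper $O(\log m)$ distance comparison at the top endpoint $q$ (equivalent, since $p$ is closer to the top of the stack and the bisector crosses $d$ at most once) and recomputes the lower boundary $c$ of the top interval on the fly rather than storing it with each stack entry.
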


% \begin{proof} Sketch. We consider the
%   sites in $S_\ell$ in increasing order, while maintaining $T_\ell$ as a stack. More
%   specifically, we maintain the invariant that when we start to process
%   $s_{j+1}$, $T_\ell$ contains exactly those sites among $s_1,..,s_j$ whose
%   Voronoi region intersects $d=\overline{pq}$, in order along $d$ from bottom to top.

%   Let $s_{j+1}$ be the next site that we consider, and let $t$ and $t'$, $t$
%   above $t'$, be the topmost two sites on the stack. We test if the Voronoi
%   region of $s_{j+1}$ intersects $d$ by comparing $\geodlen(s_{j+1},q)$ with
%   $\geodlen(t,q)$. If so, we test if the Voronoi region of $t$ still intersects
%   $d$ by comparing the intersection points of $d$ with the bisectors
%   $b_{s_{j+1}t}$ and $b_{tt'}$. If the Voronoi region of $t$ still intersects
%   $d$ we push $s_{j+1}$ onto $T_\ell$, otherwise we remove $t$ from the stack, and
%   repeat this procedure until we add $s_{j+1}$.

%   Since every site is added to and deleted from $T_\ell$ at most once the
%   algorithm takes a total of $O(n)$ steps. Computing $\geodlen(s_{j+1},q)$ takes
%   $O(\log m)$ time, and finding the intersection between $d$ and the bisector
%   of $s_{j+1}$ and $t$ takes $O(\log^2 m)$ time
%   (Lemma~\ref{lem:finding_w}). The lemma follows.
% \end{proof}

\begin{proof}
  We consider the sites in $S_\ell$ in increasing order, while maintaining $T_\ell$
  as a stack. More specifically, we maintain the invariant that when we start
  to process $s_{j+1}$, $T_\ell$ contains exactly those sites among $s_1,..,s_j$
  whose Voronoi region intersects $d$, in order along $d$ from bottom to top.

  Let $s_{j+1}$ be the next site that we consider, and let $t=s_i$, for some
  $i \leq j$, be the site currently at the top of the stack. We now compute the
  distance $\geodlen(s_{j+1},q)$ between $s_{j+1}$ and the topmost endpoint $q$ of
  $d$. If this distance is larger than $\geodlen(t,q)$, it follows that the
  Voronoi region of $s_{j+1}$ does not intersect $d$: since the bottom endpoint
  $p$ of $d$ is also closer to $t=s_i$ than to $s_{j+1}$, all points on $d$ are
  closer to $t$ than to $s_{j+1}$.

  If $\geodlen(s_{j+1},q)$ is at most $\geodlen(t,q)$ then the Voronoi region of
  $s_{j+1}$ intersects $d$ (since $t$ was the site among $s_1,..,s_j$ that was
  closest to $q$ before). Furthermore, since $p$ is closer to $t=s_i$ than to
  $s_{j+1}$ the bisector between $s_{j+1}$ and $t$ must intersect $d$ in some
  point $a$. If this point $a$ lies above the intersection point $c$ of $d$
  with the bisector between $t$ and the second site $t'$ on the stack, we have
  found a new additional site whose Voronoi region intersects $d$. We push
  $s_{j+1}$ onto $T_\ell$ and continue with the next site $s_{j+2}$. Note that the
  Voronoi region of every site intersects $d$ in a single segment, and thus
  $T_\ell$ correctly represents all sites intersecting $d$. If $a$ lies below $c$
  then the Voronoi region of $t$, with respect to $s_1,..s_{j+1}$, does not
  intersect $d$. We thus pop $t$ from $T_\ell$, and repeat the above procedure,
  now with $t'$ at the top of the stack.

  Since every site is added to and deleted from $T_\ell$ at most once the
  algorithm takes a total of $O(n)$ steps. Computing $\geodlen(s_{j+1},q)$ takes
  $O(\log m)$ time, and finding the intersection between $d$ and the bisector
  of $s_{j+1}$ and $t$ takes $O(\log^2 m)$ time
  (Lemma~\ref{lem:finding_w}). The lemma follows.
\end{proof}

We now simply maintain the sites in $S_\ell$ in a balanced binary search tree on
increasing distance to the bottom endpoint $p$ of $d$. It is easy to maintain
this order in $O(\log m + \log k)$ time per upate. We then extract the set of sites $T_\ell$
that have a Voronoi region intersecting $d$, and thus $P_r$, ordered along $d$
using the algorithm from Lemma~\ref{lem:compute_sequence}.

\subsubsection{Implementing the required geometric primitives}
\label{sub:Geometric_Primitives}

\begin{wrapfigure}[26]{r}{0.29\textwidth}
  \centering
  \vspace{-.5\baselineskip}
  \includegraphics{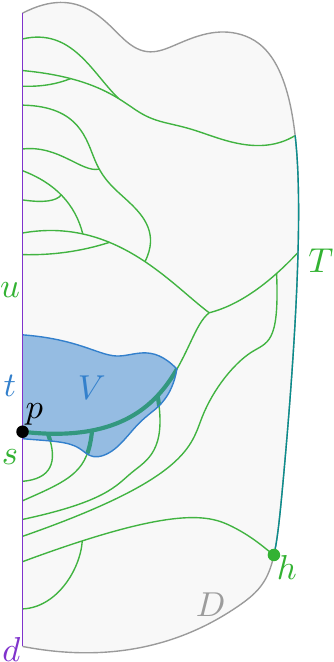}
  \caption{The part of $T$, the tree representing the Hamiltonian Voronoi
    diagram (green), that lies inside the Voronoi region $V$ (blue) of a new
    site $t$ is a subtree $T'$ (fat). We can compute $T'$, by exploring $T$
    from a point $p$ inside $V$. In case $t$ is the first site in the ordering
    along $d$ we can start from the root $h$ of the ``first'' tree in \VD.
  }
  \label{fig:insert_avd}
\end{wrapfigure}
In this section we discuss how to implement the geometric primitives needed by
the algorithm of Klein and Lingas~\cite{klein1994hamiltonian_vd}. % Their
% algorithm picks a large subset of ``independent'' sites $R$, recursively
% computes the Voronoi diagram of the remaining sites, and then inserts the sites
% in $R$ one by one.
They describe their algorithm in terms of the following two basic operations:
(i) compute the concrete Voronoi diagram of five sites, and (ii) insert a new
site $s$ into the existing Voronoi diagram $\VD(S)$. In their analysis, this
first operation takes constant time, and the second operation takes time
proportional to the size of $\VD(S)$ that lies inside the Voronoi region of $t$
in $\VD(S\cup\{t\})$. We observe that that to implement these operations it is
sufficient to be able to compute the intersection between two ``related''
bisectors $b_{st}$ and $b_{tu}$ --essentially computing the Voronoi diagram of
three sites-- and to test if a given point $q$ lies on the $s$-side of the
bisector $b_{st}$ (i.e.~testing if $q$ is ``closer to'' $s$ than to $t$). We
then show that in our setting we can implement these operations in
$O(\log^2 m)$ time, thus leading to an $O(n\log^2 m)$ time algorithm to compute
$\VD$.

\subparagraph{Inserting a new site.} Klein and
Lingas~\cite{klein1994hamiltonian_vd} sketch the following algorithm to insert
a new site $t$ into the Hamiltonian Voronoi diagram $\VD(S)$ of a set of sites
$S$. We provide some missing details of this procedure, and briefly argue that
we can use it to insert into a diagram of three sites. Let $D$ denote the
domain in which we are interested in $\VD(S)$ (in our application, $D$ is the
subpolygon $P_r$) and let $d$ be the curve that intersects all regions in
$\VD(S)$. Recall that $\VD(S)$ is a forest. We root all trees such that the
leaves correspond to the intersections of the bisectors with $d$. The roots of
the trees now corresponds to points along the boundary $\partial D$ of $D$. We
connect them into one tree $T$ using curves along $\partial D$. Now consider
the Voronoi region $V$ of $t$ with respect to $S\cup\{t\}$, and observe that
$T \cap V$ is a subtree $T'$ of $T$. Therefore, if we have a starting point $p$
on $T$ that is known to lie in $V$ (and thus in $T'$), we can compute $T'$
simply by exploring $T$. To obtain $\VD(S \cup \{t\})$ we then simply remove
$T'$, and connect up the tree appropriately. See Fig.~\ref{fig:insert_avd} for
an illustration. We can test if a vertex $v$ of $T$ is part of $T'$ simply by
testing if $v$ lies on the $t$-side of the bisector between $t$ and one of
the sites defining $v$. We can find the exact point $q$ where an edge
$(u,v)$ of $T$, representing a piece of a bisector $b_{su}$ leaves $V$ by
computing the intersection point of $b_{su}$ with $b_{tu}$ and $b_{st}$.

We can find the starting point $p$ by considering the order of the Voronoi
regions along $d$. Let $s$ and $u$ be the predecessor and successor of $t$ in
this order. Then the intersection point of $d$ with $b_{su}$ must lie in
$V$. This point corresponds to a leaf in $T$. In case $t$ is the first site in
the ordering along $d$ we start from the root $h$ of the tree that contains the
bisector between the first two sites in the ordering; if this point is not on
the $t$-side of the bisector between $t$ and one of the sites defining $h$ then
$b_{tu}$ forms its own tree (which we then connect to the global root). We do
the same when $t$ is the last point in the ordering. This procedure requires
$O(|T'|)$ time in total (excluding the time it takes to find $t$ in the
ordering of $S$; we already have this information when the procedure is used in
the algorithm of Klein and Lingas~\cite{klein1994hamiltonian_vd}).

We use the above procedure to compute the Voronoi diagram of five sites in
constant time: simply pick three of the sites $s$, $t$, and $u$, ordered along
$d$, compute their Voronoi diagram by computing the intersection of $b_{st}$
and $b_{tu}$ (if it exists), and insert the remaining two sites. Since the
intermediate Voronoi diagrams have constant size, this takes constant time.

\subparagraph{Computing the intersection of bisectors $\br_{st}$ and $\br_{tu}$. }
Since \VD is a Hamiltonian Voronoi diagram, any any pair of bisectors
$\br_{st}$ and $\br_{tu}$, with $s,t,u \in T_\ell$, intersect at most once
(Lemma~\ref{lem:intersect_once}). Next, we show how to compute
this intersection point (if it exists).
% By Lemma~4.1 of Aronov~\cite{aronov1989geodesic} the Voronoi diagram of
% $\{s,t,u\}$ in $P_r$ is a forest, and thus

% \begin{lemma}
%   \label{lem:bisectors_intersect_once}
%   Let $s$, $t$, and $u$ be three sites in $S_\ell$. The bisectors
%   $\br_{st}$ and $\br_{tu}$ intersect in a single point $p$.
% \end{lemma}

% \begin{proof}
%   Consider an intersection point $p$ between $\br_{st}$ and $\br_{tu}$. This
%   means we have that $\geodlen(p,s) = \geodlen(p,t)$ and $\geodlen(p,t) = \geodlen(p,u)$,
%   and thus $p$ is equidistant to all three sites $s$, $t$, and $u$. It follows
%   that $p$ appears as a vertex in $\VD(\{s,t,u\})$. The geodesic Voronoi
%   diagram $\VD(\{s,t,u\}) \cap P_r$ is a tree that partitions $P_r$ into three
%   simply connected regions, one for each site~\cite{aronov1989geodesic}. It
%   follows that there can be at most one degree three vertex in
%   $\VD(\{s,t,u\})\cap P_r$, otherwise we get more than three regions in $P_r$.
% \end{proof}

\begin{lemma}
  \label{lem:binary_search_funnel}
  Given $\hat{\P}(z,s,t)$ and $\hat{\P}(z',t,u)$, finding the
  intersection point $p$ of $\br_{st}$ and $\br_{tu}$ (if it exists) takes
  $O(\log^2 m)$ time.
\end{lemma}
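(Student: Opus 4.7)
The plan is to perform a double binary search over the two bisectors, exploiting the fact that each is a simple curve composed of $O(m)$ hyperbolic arcs listed in a canonical order along the curve, and that the preprocessing underlying $\hat{\P}(z,s,t)$ and $\hat{\P}(z',t,u)$ allows us to access an arbitrary arc of $\br_{st}$ or $\br_{tu}$ in $O(\log m)$ time (the machinery of Appendix~\ref{app:Bisector}).

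The key observation driving the search is as follows. Define $\sigma(x)=\mathrm{sign}\bigl(\geodlen(x,t)-\geodlen(x,u)\bigr)$; by the preprocessed two-point shortest path data structure of Guibas and Hershberger, $\sigma$ can be evaluated at any $x\in P$ in $O(\log m)$ time. On $\br_{st}$ we have $\geodlen(\cdot,s)=\geodlen(\cdot,t)$, so a point $x\in\br_{st}$ lies on $\br_{tu}$ precisely when $\sigma(x)=0$. By Lemma~\ref{lem:intersect_once} the two bisectors meet in at most a single point, hence $\sigma$ undergoes at most one sign change as we traverse $\br_{st}$ in its canonical order.

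First I would carry out an outer binary search over the ordered list of arcs of $\br_{st}$: at each step, fetch the endpoint of the median arc (in $O(\log m)$ time via $\hat{\P}(z,s,t)$), evaluate $\sigma$ there (another $O(\log m)$), and recurse into the half in which the sign change must occur. After $O(\log m)$ iterations this isolates the unique arc $\alpha$ of $\br_{st}$ that can contain $p$. Next, I would run an inner binary search over the arcs of $\br_{tu}$ to pin down the unique arc $\beta$ whose supporting hyperbola meets $\alpha$: here one uses the location of $\alpha$'s endpoints within the shortest-path cell decomposition carried by $\hat{\P}(z',t,u)$ as the search key, again at cost $O(\log m)$ per comparison. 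Once $\alpha$ and $\beta$ are identified, both are constant-complexity algebraic arcs, so their intersection $p$ is computed in $O(1)$ time. The total cost is $O(\log m)\cdot O(\log m)=O(\log^2 m)$.

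The main obstacle is correctness rather than the running time accounting. The outer search works only because $\sigma$ has at most one sign change along $\br_{st}$; this in turn rests on Lemma~\ref{lem:intersect_once} together with the general position assumption, and requires a preliminary comparison of $\sigma$ at the two endpoints of $\br_{st}$ in order to detect and report the case where no intersection exists. The inner search similarly relies on the observation that the endpoints of $\alpha$ delimit a contiguous interval of arcs of $\br_{tu}$ that $\alpha$ could possibly cross, within which the sign of the analogous comparison function is again monotone, so the search key is well defined.
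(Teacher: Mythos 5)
Your proposal matches the paper's proof in essence: you binary search over the $O(m)$ vertices of $\br_{st}$ using the monotone predicate $\mathrm{sign}(\geodlen(\cdot,t)-\geodlen(\cdot,u))$ (monotonicity coming from Lemma~\ref{lem:intersect_once}), you access the $i$-th vertex of a bisector in $O(\log m)$ time via the machinery of Appendix~\ref{app:Bisector} (Lemma~\ref{lem:random_access_bst_final}), and you then repeat for $\br_{tu}$ and intersect the two constant-complexity arcs, for a total of $O(\log^2 m)$. The paper's inner search is simply the symmetric one -- binary search over the vertices of $\br_{tu}$ with the predicate $\mathrm{sign}(\geodlen(\cdot,s)-\geodlen(\cdot,t))$ -- which is cleaner than your description of searching against $\alpha$'s endpoints, but this is a presentational difference rather than a gap.
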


\begin{wrapfigure}[12]{r}{0.47\textwidth}
  \centering
  % \vspace{-2\baselineskip}
  \includegraphics{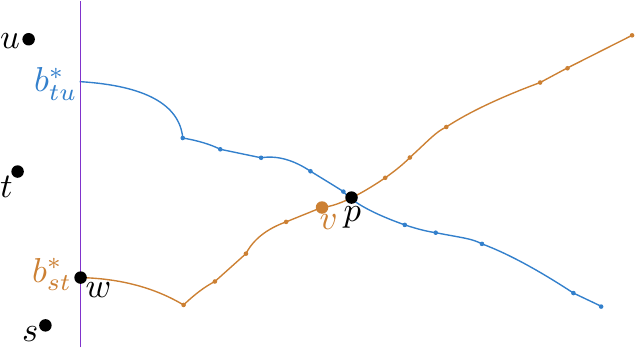}
  \caption{We find the intersection point of the two bisectors by binary
    searching along $\br_{st}$. }
  \label{fig:find_intersection_binsearch}
\end{wrapfigure}
\proofcmd
% \begin{proof}
  We will find the edge of $\br_{st}$ containing the intersection point $p$ by
  binary searching along the vertices of $\br_{st}$. Analogously we find the
  edge of $\br_{tu}$ containing $p$. It is then easy to compute the exact
  location of $p$ in constant time.

  Let $w$ be the starting point of $\br_{st}$, i.e.~the intersection of
  $b_{st}$ with $d$, and assume that $t$ is closer to $w$ than $u$, that is,
  $\geodlen(t,w) < \geodlen(u,w)$ (the other case is symmetric). In our binary
  search, we now simply find the last vertex $v=v_k$ for which
  $\geodlen(t,v) < \geodlen(u,v)$. It then follows that $p$ lies on the edge
  $(v_k,v_{k+1})$ of $\br_{st}$. See
  Fig.~\ref{fig:find_intersection_binsearch}. Using
  Lemma~\ref{lem:random_access_bst_final} we can access any vertex of
  $\br_{st}$ in $O(\log m)$ time. Thus, this procedure takes $O(\log^2 m)$ time
  in total.  \hfill\qed
% \end{proof}

Note that we can easily extend the algorithm from
Lemma~\ref{lem:binary_search_funnel} to also return the actual edges of
$\br_{st}$ and $\br_{tu}$ that intersect. With this information we can
construct the cyclic order of the edges incident to the vertex of \VD
representing this intersection point. It now follows that for every group $S_\ell$
of sites in $P_\ell$, we can compute a representation of \VD of size $O(k)$ in
$O(k\log^2 m)$ time.

% It now follows that we can compute the
% (combinatorial) embedding of \VD in $O(k\log^2 m)$
% time~\cite{klein1994hamiltonian_vd}.

% Let $s_1,..,s_5$ be five sites in the order in which their Voronoi regions
% intersect $d$. Lemma~\ref{lem:binary_search_funnel} essentially allows us to
% construct a representation of the Voronoi diagram of (any) three sites. So, we
% can obtain the Voronoi diagram $\VD(\{s_1,..,s_i\})$ of all five sites
% $s_1,..,s_5$ simply by inserting $s_4$ and $s_5$ into $\VD(\{s_1,..,s_3\})$. We
% briefly sketch the algorithm to do so that runs in time proportional

\subsubsection{Planar point location in \VD}
\label{sub:query}

In this section we show that we can efficiently answer point location queries,
and thus nearest neighbor queries using \VD.

% Consider a diagonal $d$ that splits $P$ into $P_\ell$ and $P_r$, and rotate the
% polygon and its sites such that $d$ is vertical, and $P_r$ lies right of
% $d$.

\begin{lemma}
  \label{lem:bisectors_x-monotone}
  For $s,t \in S_\ell$, the part of the bisector $\br_{st} = b_{st} \cap P_r$
  that lies in $P_r$ is $x$-monotone.
\end{lemma}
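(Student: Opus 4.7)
The plan is to show that every vertical line $\ell$ that meets $P_r$ intersects $\br_{st}$ in at most one point; this is exactly the definition of $x$-monotonicity of $\br_{st}$ as a planar curve. Since $\ell$ is parallel to $d$ and lies strictly to the right of it, the intersection $\ell \cap P$ is a disjoint union of maximal vertical chords of $P$, each contained entirely in $P_r$ and having both endpoints on $\partial P$ (never on $d$, since $\ell \cap d = \emptyset$). Call one such maximal chord $\sigma$. Because $s, t \in S_\ell \subseteq P_\ell$, both sites have $x$-coordinate at most $x_d < x_\ell$, so $s$ and $t$ lie in the connected component of $P \setminus \sigma$ that contains $P_\ell$; in particular, both sites lie on the same side of $\sigma$.

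The key idea is to apply a mild generalization of Lemma~\ref{lem:bst_intersect_d} to $\sigma$. That lemma states that $b_{st}$ meets $d$ at most once, and its proof relies only on the two facts that (i)~$d$ is a vertical chord of $P$ and (ii)~both sites lie on the same side of $d$. Both conditions are equally satisfied by $\sigma$, so the same argument gives $|b_{st} \cap \sigma| \leq 1$. This immediately rules out two intersections of $\br_{st}$ with $\ell$ lying on the same chord of $\ell \cap P$.

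It remains to exclude the case where $\br_{st}$ meets $\ell$ in points $p$ and $q$ lying on two distinct maximal chords $\sigma_p, \sigma_q$ of $\ell \cap P$. Here I would use that $\br_{st}$ is a single connected sub-arc of the simple curve $b_{st}$: indeed, $b_{st}$ is simple and by Lemma~\ref{lem:bst_intersect_d} crosses $d$ at most once, so its restriction to $P_r$ is connected. A sub-arc of $\br_{st}$ from $p$ to $q$ would then be a continuous curve in $P_r$ connecting two distinct components of $\ell \cap P_r$, which, combined with the single-crossing bound on every chord it meets, leads to a parity/topology contradiction.

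The main obstacle is really the first step: convincing oneself that the proof of Lemma~\ref{lem:bst_intersect_d} is robust enough to apply to any vertical chord of $P$ with both sites on one side. This is plausible because such proofs typically reduce to showing that $f_s - f_t$ has at most one sign change along the chord, a property governed by the chord's verticality and the geometry of the sites relative to it rather than by any special role of $d$. Once the generalization is in hand, the reduction to the single-chord case is immediate and the multi-chord case is handled by connectedness of $\br_{st}$.
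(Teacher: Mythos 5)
Your proposal takes a genuinely different route from the paper, and the gap is in exactly the step you flag as ``the main obstacle.'' The paper argues locally: if $\br_{st}$ were not $x$-monotone, it would have a point $p$ at which $p_x$ is a local maximum; picking a point $q$ in the pocket enclosed by the piece of $\br_{st}$ from $p$ to the next point $p'$ with $p'_x = p_x$ and the segment $\overline{pp'}$, it shows that one of $\geod(s,q)$, $\geod(t,q)$ is forced to be non-$x$-monotone and to pass through $p$, so that $p$ must be a reflex vertex of $P$ lying on $b_{st}$, contradicting the general-position assumption made in Section~\ref{sec:overview}. That argument is inherently local, and it \emph{needs} non-degeneracy: the caption of Fig.~\ref{fig:non_x-monotone} points out that a non-$x$-monotone $\br_{st}$ actually can occur for degenerate inputs, so any correct proof must invoke general position somewhere.

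Your route instead exports Lemma~\ref{lem:bst_intersect_d} from the fixed diagonal $d$ to an arbitrary vertical chord $\sigma$ of $P$ lying in $P_r$. But the proof of Lemma~\ref{lem:bst_intersect_d} uses more than ``$d$ is a vertical chord with both sites on one side.'' After assuming two intersection points $w_1,w_2$ of $b_{st}$ with $d$, that proof forms the region $R$ bounded by the arc of $b_{st}$ between $w_1,w_2$ in $P_\ell$ and the segment $\overline{w_1w_2}$, shows $R$ is empty of $\partial P$, places $t$ \emph{inside} $R$, and concludes that $\geod(t,w_i)=\overline{tw_i}$ are straight segments in $R$; the contradiction is then the comparison $\geodlen(s,w_2) > \|tw_2\| = \geodlen(t,w_2)$, which rests on $t\in R$. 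For a chord $\sigma$ deep in $P_r$, the analogous region $R$ is bounded by an arc of $b_{st}$ that can lie entirely in $P_r$ --- that is precisely the configuration that a non-$x$-monotone $\br_{st}$ would create --- so $t\notin R$, the paths $\geod(t,w_i)$ are not straight, and the comparison has no analogue. In effect, ``$b_{st}$ meets every vertical chord of $P_r$ with the sites on its left at most once'' is the content of the lemma being proved, not a previously available tool, so the reduction is circular without a new argument at that chord. Your second step (ruling out two intersections on distinct chords of $\ell\cap P_r$) is likewise only sketched: connectedness of $\br_{st}$ alone does not yield a Jordan-type contradiction, since the sub-arc joining the two points may exit the half-plane $\{x\le x_\ell\}$, and the piece of $\ell$ between the two chords lies outside $P$, so there is nothing inside $P$ against which to close a curve.
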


  \begin{wrapfigure}[15]{r}{0.31\textwidth}
    \centering
    \includegraphics{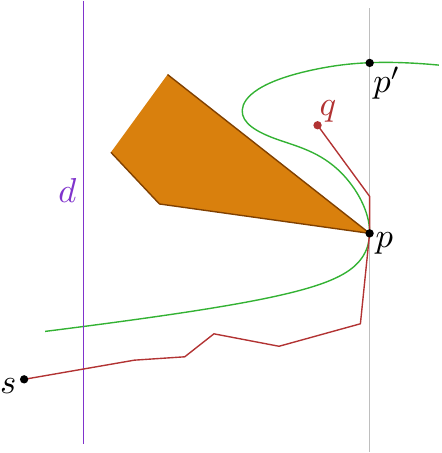}
    \caption{A non $x$-monotone bisector can occur only in \mbox{degenerate} inputs.}
    \label{fig:non_x-monotone}
  \end{wrapfigure}

% \begin{proof}
\proofcmd
  Assume, by contradiction, that $b_{st}$ is not $x$-monotone in $P_r$, and let
  $p$ be a point on $b_{st}$ such that $p_x$ is a local maximum. Since
  $\br_{st}$ is not $x$-monotone, it intersects the vertical line through $p$
  also in another point $p'$ further along $\br_{st}$. Let $q$ be a point in
  the region enclosed by the subcurve along $\br_{st}$ from $p$ to $p'$ and
  $\overline{pp'}$. See Fig.~\ref{fig:non_x-monotone}. This means that either
  $\geod(s,q)$ or $\geod(t,q)$ is non $x$-monotone. Assume without loss of
  generality that it is $\geod(s,q)$. It is now easy to show that $\geod(s,q)$
  must pass through $p$. However, that means that $\br_{st}$ (and thus
  $b_{st}$) touches the polygon boundary in $p$. By the general position
  assumption $b_{st}$ has no points in common with $\partial P$ other than its
  end points. Contradiction.
\hfill\qed
% \end{proof}

Since the (restriction of the) bisectors are $x$-monotone
(Lemma~\ref{lem:bisectors_x-monotone}) we can preprocess \VD for point location
using the data structure of Edelsbrunner and
Stolfi~\cite{edelsbrunner_optimal_1986}. Given the combinatorial embedding of
\VD, this takes $O(|\VD|)$ time. To decide if a query point $q$ lies above or
below an edge $e \in \VD$ we simply compute the distances $\geodlen(s,q)$ and
$\geodlen(t,q)$ between $q$ and the sites $s$ and $t$ defining the bisector
corresponding to edge $e$. This takes $O(\log m)$ time. Point $q$ lies on the
side of the site that has the shorter distance. It follows that we can
preprocess \VD in $O(k)$ time, and locate the Voronoi region containing a query
point $q$ in $O(\log k\log m)$ time. We summarize our results in the following Lemma.

\begin{lemma}
  \label{lem:reconstruct_v}
  Given a set of $n$ sites $S_\ell$ in $P_\ell$, ordered by increasing distance
  from the bottom-endpoint of $d$, the forest \VD representing the Voronoi
  diagram of $S_\ell$ in $P_r$ can be computed in $O(n\log^2 m)$ time. Given
  \VD, finding the site $s \in S_\ell$ closest to a query point $q \in P_r$
  requires $O(\log n\log m)$ time.
\end{lemma}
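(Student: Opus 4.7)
The proof will be essentially an assembly of the results established throughout Section~\ref{sub:implicit_voronoi}. The plan is to argue separately about construction and about queries, and to point out that the full diagram $\VD(S_\ell)$ in $P_r$ coincides with $\VD(T_\ell)$ in $P_r$, so that only the sites in $T_\ell$ need to be handed to the abstract-Voronoi-diagram algorithm.

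For the construction bound, I would first apply Lemma~\ref{lem:compute_sequence} to the given ordered sequence $S_\ell=s_1,\dots,s_n$ to produce $T_\ell$, together with the order in which its sites appear along $d$, in $O(n\log^2 m)$ time. By Lemma~\ref{lem:hamiltonian_avd} the diagram $\VD(T_\ell)$ in $P_r$ is a Hamiltonian abstract Voronoi diagram with $d$ as the Hamiltonian curve, so the algorithm of Klein and Lingas~\cite{klein1994hamiltonian_vd} builds it in $O(X|T_\ell|)$ time, where $X$ is the cost of the geometric primitives. The two primitives required by that algorithm---computing the concrete Voronoi diagram of five sites and inserting one site into an existing diagram---were reduced in Section~\ref{sub:Geometric_Primitives} to (a) testing on which side of a bisector a point lies and (b) computing the intersection of two related bisectors. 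Primitive (a) takes $O(\log m)$ time by two two-point shortest path queries, and primitive (b) takes $O(\log^2 m)$ time by Lemma~\ref{lem:binary_search_funnel}, so $X=O(\log^2 m)$. This yields an overall $O(n\log^2 m)$ construction bound. I will also remark that the sites in $S_\ell\setminus T_\ell$ contribute no region to $\VD$ in $P_r$, so ignoring them is correct.

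For the query bound, I would combine Lemma~\ref{lem:bisectors_x-monotone} with the point-location structure of Edelsbrunner, Guibas, and Stolfi~\cite{edelsbrunner_optimal_1986}: the restricted bisectors $\br_{st}$ are $x$-monotone, and the combinatorial embedding of the forest $\VD$ is already produced by the construction, so their structure can be built in $O(|\VD|)=O(n)$ time and supports point location in $O(\log n)$ comparisons. Each comparison, deciding which side of an edge's bisector the query point $q$ lies on, reduces to comparing $\geodlen(s,q)$ and $\geodlen(t,q)$, which costs $O(\log m)$ per bisector using the two-point shortest path structure. Multiplying these gives the claimed $O(\log n\log m)$ query time.

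The main obstacle is not any single step but making the reduction to a Hamiltonian abstract Voronoi diagram go through cleanly: one must verify that the ingredients the Klein--Lingas framework assumes (a Hamiltonian transversal that remains a transversal for every subset of sites, together with the ability to recompute small subdiagrams and to locate a single bisector intersection) are exactly what Lemma~\ref{lem:hamiltonian_avd} and Lemma~\ref{lem:binary_search_funnel} supply, and that the geometric primitive cost $X$ can be plugged into their analysis without hidden dependencies on $m$ beyond the $O(\log^2 m)$ factor. Once that correspondence is spelled out, the rest is bookkeeping.
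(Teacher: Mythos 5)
Your proposal is correct and follows exactly the route the paper takes: Lemma~\ref{lem:reconstruct_v} is stated there as a summary of Section~\ref{sub:implicit_voronoi}, and you have reassembled precisely those ingredients (Lemma~\ref{lem:compute_sequence} for extracting $T_\ell$ in $O(n\log^2 m)$ time, Lemma~\ref{lem:hamiltonian_avd} plus Klein--Lingas with primitive cost $X=O(\log^2 m)$ via Lemma~\ref{lem:binary_search_funnel}, and Lemma~\ref{lem:bisectors_x-monotone} with monotone-subdivision point location where each side test costs $O(\log m)$). Your remark that sites in $S_\ell\setminus T_\ell$ contribute no region and may be discarded is a useful clarification the paper leaves implicit.
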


\subsection{Computing an implicit $k^\mathrm{th}$-order Voronoi diagram}
\label{sub:Computing_an_Implicit_k-order_Voronoi_Diagram}

Based on the relation between the $i^\mathrm{th}$-order Voronoi diagram and the
$(i+1)^\mathrm{th}$-order Voronoi diagram (see
Section~\ref{sec:implicit_representations}) Lee developed an iterative
algorithm to compute the Euclidean $k^\mathrm{th}$-order Voronoi diagram in
$O(nk^2)$ time. His algorithm extends to any distance metric. Since the
geodesic distance is a metric, this approach, together with our algorithm from
the previous section gives us a way to compute $\VD_k(S_\ell)$ in $P_r$. We
obtain a $O(k^2n(\log n + \log^2 m))$ time algorithm. This then results in an
$O(k^2n(\log n + \log^2 m))$ time algorithm for computing a decomposition of
(space below the) $k$-level into pseudo-prisms.

\begin{theorem}
  \label{thm:compute_kth_order_vd}
  An implicit representation of the $k$-th order Voronoi
  diagram of $S_\ell$ in $P_r$ can be constructed in $O(k^2n(\log n + \log^2 m))$ time.
\end{theorem}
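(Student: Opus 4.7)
The plan is to apply Lee's iterative construction~\cite{lee1982kthorder_vd} of higher-order Voronoi diagrams, using Observation~\ref{obs:k_nearest_site} to refine $\VD_i(S_\ell)$ into $\VD_{i+1}(S_\ell)$: for every region $V_i(H,S_\ell)$ with neighbor set $Q$, we intersect $V_i(H,S_\ell)$ with the first-order geodesic Voronoi diagram $\VD(Q)$ in $P_r$ and relabel each sub-cell with the site of $Q$ whose region contains it. All diagrams are kept in the implicit form introduced in Section~\ref{sec:implicit_representations}, so each region is represented only by its labels, its degree-one/three vertices, and the adjacency information, and each bisector is accessed only through the primitives supplied by Lemma~\ref{lem:reconstruct_v} and Lemma~\ref{lem:binary_search_funnel}.

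As preprocessing I would sort $S_\ell$ in increasing geodesic distance from the bottom endpoint of $d$ in $O(n\log n \log m)$ time (using the two-point shortest-path structure for comparisons) and compute $\VD_1 = \VD(S_\ell)$ via Lemma~\ref{lem:reconstruct_v} in $O(n \log^2 m)$ time. For the iteration $i \to i+1$, for each region $V_i(H,S_\ell)$ we: (a) read its neighbor set $Q$ off the adjacency labels of $\VD_i$; (b) sort $Q$ by its precomputed global rank in $O(|Q|\log|Q|)$ time; (c) build $\VD(Q)$ in $P_r$ via Lemma~\ref{lem:reconstruct_v} in $O(|Q|\log^2 m)$ time; (d) clip $\VD(Q)$ to $V_i(H,S_\ell)$ by a BFS starting from a vertex of $\VD(Q)$ that lies inside $V_i(H,S_\ell)$, using $O(\log m)$-time ``which site is closer'' tests on candidate boundary edges; and (e) merge adjacent sub-cells whose attached site in $Q$ coincides. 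This produces the implicit representation of the refinement of $V_i(H,S_\ell)$, which by Observation~\ref{obs:k_nearest_site} is precisely the part of $\VD_{i+1}(S_\ell)$ inside $V_i(H,S_\ell)$.

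For the running time, the sum $\sum_{V_i(H,S_\ell)} |Q|$ equals, up to a constant, the number of edges of $\VD_i$ in the implicit representation (i.e.\ ignoring degree-two vertices), which is $O(i(n-i))$ by the topological complexity bound of Liu~\etal~\cite{liu2013geodesic_higher_order_vd}. Hence iteration $i\to i+1$ costs $O(i(n-i)(\log n + \log^2 m))$, and $\sum_{i=1}^{k-1} i(n-i) = O(k^2 n)$ yields the claimed total $O(k^2 n(\log n + \log^2 m))$, absorbing the preprocessing cost. The main obstacle I foresee is keeping the clipping step output-sensitive even though both $\VD(Q)$ and $V_i(H,S_\ell)$ are implicit: I plan to exploit the fact that $\VD(Q)$ is a Hamiltonian abstract Voronoi diagram (Lemma~\ref{lem:hamiltonian_avd}), so its portion inside $V_i(H,S_\ell)$ is a connected subforest that the BFS explores in $O(|Q|\log m)$ time, subsumed by the cost of constructing $\VD(Q)$ itself.
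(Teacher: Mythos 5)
Your proposal follows the paper's proof essentially verbatim: Lee's iterative refinement of $\VD_i$ into $\VD_{i+1}$, implemented with the implicit Hamiltonian-AVD construction of Lemma~\ref{lem:reconstruct_v} per region, clipping by BFS, and the same $\sum_i O(i(n-i))$ accounting of total neighbor-list size. The only cosmetic additions are the explicit sort-by-global-rank preprocessing and the sub-cell merging step, both of which the paper treats implicitly; these do not change the bound or the argument.
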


\begin{proof}
  % As a preprocessing step, we compute the geodesic distance from all sites in
  % $S_\ell$ to the bottom-endpoint of the diagonal, and store the points in a
  % BST in this order (so that we can easily extract) the points in their order
  % along the diagonal (as in Lemma~\ref{lem:compute_sequence}).
  We use Lee's algorithm to iteratively build the $k$-th order Voronoi diagram
  $\VD_k(S_\ell)$ in $P_r$. Consider a cell $C=V_i(H,S_\ell)$ in the
  $i^\mathrm{th}$-order Voronoi diagram. We collect the set of sites $Q$
  neighboring $C$ by traversing $\partial C$, and order them on increasing
  distance to the bottom endpoint of $d$. Note that each edge $e_j$ on
  $\partial C$ corresponds to a piece of bisector $b_{st}$, and the site
  $q_j \in Q$ that we are looking for is either $s$ or $t$, depending on which
  side of $b_{st}$ our cell $C$ lies. This means that we can collect the $q$
  sites in $Q$, and order them on increasing distance to the bottom endpoint of
  $d$ in $O(q(\log q + \log m))$ time. We then construct (an implicit
  representation) of its voronoi diagram $\VD(Q)$ in $P_r$ in $O(q\log^2 m)$
  time (Lemma~\ref{lem:reconstruct_v}). Finally, we clip $\VD(Q)$ to $C$. Since
  all intersection points of $\VD(Q)$ with $C$ are vertices of $\VD_i(S_\ell)$
  on $\partial C$ (see Lee~\cite{lee1982kthorder_vd}), all that remains is to
  find these points in $\VD(Q)$. We use a point location query to find one of
  the vertices of $\partial C$ in $\VD(Q)$, and then find the remaining
  vertices in $C$ using a breadth first search in $\VD(Q)$. It follows that in
  total we spend at most $O(q(\log n + \log^2 m))$ time. Summing over all
  $O(i(n-i))$ cells in $\VD_i(S_\ell)$, and all $k$ rounds, gives us a running
  time of $O(k^2n(\log n + \log^2 m)$) as claimed.
\end{proof}

Similar to in Theorem~\ref{thm:compute_kth_order_vd} we can compute the
downward projection $\underline{L_k(F)}$ of the $k$-level.

\subsection{Computing an implicit vertical decomposition of $\underline{L_k(F)}$}
\label{sub:Computing_an_Implicit_Vertical_Decomposition}

We now show how to turn our implicit representation of
$\underline{L_k(F_\ell)}$ into an implicit vertical decomposition as
follows. We note that this same procedure applies for computing a vertical
decomposition of $\VD_k(S_\ell)$.

% We then compute the vertical
% decomposition $\underline{L_k^\nabla}$ of $\underline{L_k(S)}$ by a point location
% query from each vertex. Thus, we obtain the following result.
% \frank{Argue that the faces are still monotone, and thus we can point locate in
% the first place.}

\begin{lemma}
  \label{lem:compute_k-level}
  A representation $\underline{L_k^\nabla(F)}$ of the $k$-level $L_k(F)$
  consisting of $O(k(n-k))$ pseudo-trapezoids can be computed in
  $O(k^2n(\log n + \log^2 m))$ time. Given a query point $q \in P_r$, the
  $k$-nearest site in $S_\ell$ can be reported in $O(\log n\log m)$ time.
\end{lemma}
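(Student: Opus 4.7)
The plan is to follow exactly the recipe sketched in Section~\ref{sec:implicit_representations} for turning $\VD_{k-1}(S_\ell)$ into $\underline{L_k(F)}$, but to instantiate it with the faster machinery of Theorem~\ref{thm:compute_kth_order_vd} and Lemma~\ref{lem:reconstruct_v}, and then to attach vertical extensions at the degree-one and degree-three vertices. The labels assigned to the resulting pseudo-trapezoids will come directly from Observation~\ref{obs:k_nearest_site}, which is what lets us answer $k$-nearest-site queries by a single point location.

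First I would run Theorem~\ref{thm:compute_kth_order_vd} to build an implicit representation of $\VD_{k-1}(S_\ell)$ in $P_r$ in $O(k^2n(\log n + \log^2 m))$ time. Then, for each cell $C = V_{k-1}(H,S_\ell)$, I collect the set $Q$ of neighbors of $C$ by traversing $\partial C$, sort $Q$ by increasing geodesic distance to the bottom endpoint of $d$ in $O(|Q|(\log |Q| + \log m))$ time, construct an implicit $\VD(Q)$ in $P_r$ in $O(|Q|\log^2 m)$ time via Lemma~\ref{lem:reconstruct_v}, and clip $\VD(Q)$ to $C$ by point-locating one vertex of $\partial C$ in $\VD(Q)$ and then running a breadth-first search. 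By Observation~\ref{obs:k_nearest_site}, the subregion of $C$ that falls in $\VD(s,Q)$ is precisely the set of points whose $k$-th nearest site is $s$, so each resulting subcell carries a canonical label. Finally, wherever two adjacent subcells across an edge of $\VD_{k-1}(S_\ell)$ carry the same label, I merge them; this step is purely combinatorial on the already constructed subdivision. Summing $O(|Q|(\log n + \log^2 m))$ per cell of $\VD_{k-1}(S_\ell)$ reproduces the bound $O(k^2 n(\log n + \log^2 m))$ claimed in Theorem~\ref{thm:compute_kth_order_vd}, and Lemma~\ref{lem:projected_k_level} guarantees that the number of degree-one and degree-three vertices surviving the merging is $O(k(n-k))$.

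To produce $\underline{L_k^\nabla(F)}$ from $\underline{L_k(F)}$, I shoot, at each degree-one and degree-three vertex $v$, a vertical segment upward and downward until it meets another edge of the subdivision. The other endpoint of each extension is found by a planar point-location query in the partial diagram built so far; the $x$-monotonicity of bisectors inside $P_r$ guaranteed by Lemma~\ref{lem:bisectors_x-monotone} lets me use the Edelsbrunner--Stolfi structure, and each above/below test against a bisector reduces to comparing two geodesic distances in $O(\log m)$ time. With $O(k(n-k))$ vertices to process, this step costs $O(k(n-k)\log n\log m)$ and is absorbed into the overall bound; each vertex contributes only constantly many new pseudo-trapezoids, so the final count remains $O(k(n-k))$. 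For the query, I preprocess $\underline{L_k^\nabla(F)}$ once more with Edelsbrunner--Stolfi. A query point $q \in P_r$ descends through $O(\log n)$ edges of the search structure, each test costing $O(\log m)$ for a geodesic-distance comparison or $O(1)$ for a vertical-segment comparison, and the label stored at the located pseudo-trapezoid is exactly the $k$-nearest site in $S_\ell$.

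The step I expect to require the most care is the clipping and merging in the construction of $\underline{L_k(F)}$. We must argue that the subcells inside one cell $C$ can be labeled with their $k$-nearest site and matched against the subcells of neighboring cells across $\partial C$ in time proportional to the local complexity, without destroying the $O(k(n-k))$ global topological bound from Lemma~\ref{lem:projected_k_level}. Observation~\ref{obs:k_nearest_site} gives the correctness of the labeling, and the fact that two neighboring cells of $\VD_{k-1}(S_\ell)$ differ in exactly one site keeps the matching across shared edges local; everything else is just careful accounting on the forest structure of the implicit Voronoi diagrams produced by Lemma~\ref{lem:reconstruct_v}.
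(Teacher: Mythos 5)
Your proposal follows the paper's plan quite closely: compute $\underline{L_k(F)}$ by refining $\VD_{k-1}(S_\ell)$ via Observation~\ref{obs:k_nearest_site} with the per-cell Voronoi construction from Lemma~\ref{lem:reconstruct_v}, then attach vertical extensions and answer queries by point location. There are two small deviations worth flagging. First, for finding the other endpoint of each vertical extension segment, you use a point-location query, whereas the paper runs a per-face sweep that groups the extension-segment endpoints by the face containing them ($O(|X|\log|X|)$ per face, $O(kn\log n)$ total); both are fine, but the sweep is the cleaner choice because it does not require having a point-location structure ready during construction. Second, and more substantively, once point location (or the sweep) tells you \emph{which} bisector edge a vertical ray hits, you still have to compute the exact intersection point on that edge; since the edge is an implicit hyperbolic/straight arc with $O(m)$ vertices accessible only via Lemma~\ref{lem:random_access_bst_final}, this requires a binary search along the bisector at $O(\log^2 m)$ per endpoint (Theorem~\ref{thm:represent_bisector}). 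Your cost estimate of $O(k(n-k)\log n\log m)$ for this phase omits that term; the correct bound is $O(k(n-k)(\log n + \log^2 m))$ as in the paper, which is still dominated by the $O(k^2 n(\log n+\log^2 m))$ cost of building $\underline{L_k(F)}$, so the final claim is unaffected.
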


\begin{proof}
  For every vertex $v$, we know the faces of ${\underline{L_k(F)}}$
  incident to $v$ directly above and below $v$. The upward and downward
  extension segments will be contained in these faces, respectively. For each
  face $X$, we collect the vertices whose upward extension segment will be
  contained in $X$, and use a simple sweep line algorithm to compute which edge
  of (the implicit representation of) $X$ each such extension segment hits. For
  each vertex $v$ we then know that the upper endpoint of its upward extension
  segment lies on a bisector $b_{st}$, for some $s,t \in S_\ell$. To find the
  exact location of this endpoint, we use a binary search along $b_{st}$. We
  use the same approach for finding the bottom endpoint of the downward
  extension segment. Finding the edges of the implicit representation of
  $\VD_k(S_\ell)$ hit, takes $O(|X|\log|X|)$ time per face, summing over all
  faces this solves to $O(kn\log n)$. The final binary search to find the exact
  location of the endpoint takes $O(\log^2 m)$ time per point
  (Theorem~\ref{thm:represent_bisector}). It follows that we spend
  $O(k(n-k)(\log n + \log^2 m))$ time to compute all extension segments. This
  is dominated by the time it takes to compute $\underline{L_k(F)}$
  itself. Note that in the resulting subdivision, all faces are again monotone
  (ignoring the boundary of $P$), so we can preprocess it for efficient point
  location as in Section~\ref{sub:implicit_voronoi}.
\end{proof}

\section{An implicit shallow cutting of the geodesic distance function}
\label{sec:implicit_cutting}

Let $F$ again denote the set of geodesic distance functions that the sites
$S_\ell$ in $P_\ell$ induce in $P_r$. We now argue that we can compute an
implicit $k$-shallow cutting $\Lambda_k(F)$ for these functions.

As in Section~\ref{sec:Approximating_the_k-Level}, let $R$ be our random sample
of size $r$, and let $L_t(R)$ be our approximate $k$-level of $\A(F)$. Let
$\underline{L_t^\nabla}$ be the vertical decomposition of $L_t$. We now raise every
pseudo-trapezoid in $\underline{L_t^\nabla}$ to the $t$-level. Denote the result by
$\Lambda$. Let $F_p=F_{\rho(p)}$ denote the conflict list of $p \in \R^3$, i.e., the
functions intersecting the vertical downward half-line $\rho(p)$ starting in
$p$.

\begin{lemma}
  \label{lem:conflict_lists}
  Let $\nabla$ be a pseudo prism in $\Lambda$. The conflict list $F_\nabla$ of
  $\nabla$ is the union of the conflict lists of its corners $W$,
  i.e. $F_\nabla = \bigcup_{v \in W} F_v$.
\end{lemma}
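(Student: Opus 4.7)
The plan is to prove the two set inclusions separately. The inclusion $\bigcup_{v \in W} F_v \subseteq F_\nabla$ is immediate: each corner $v \in W$ lies on the top boundary of $\nabla$, so the downward vertical ray $\rho(v)$ is contained entirely in $\nabla$, and hence any function meeting $\rho(v)$ must also meet $\nabla$.

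For the reverse inclusion $F_\nabla \subseteq \bigcup_{v \in W} F_v$ I would exploit the structural fact that, because $\underline{\nabla}$ is a single pseudo-trapezoid of the vertical decomposition of $\underline{L_t(R)}$, it lies in one face of $\underline{L_t(R)}$, so over $\underline{\nabla}$ the top of $\nabla$ is the graph of a single function $f_s$ for some $s \in R$. Let $f = f_q \in F_\nabla$. By definition there is a point $p \in \underline{\nabla}$ with $f_q(p) \leq f_s(p)$. If $f_q \leq f_s$ holds on all of $\underline{\nabla}$, then at every corner $v$ we have $f_q(\underline{v}) \leq f_s(\underline{v}) = v_z$, so $f_q \in F_v$ and we are done.

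The interesting case is when $f_q > f_s$ somewhere else in $\underline{\nabla}$. Then the bisector $b_{sq}$ enters $\underline{\nabla}$ and splits it into two pieces, one of which is the nonempty subregion $U = \{p \in \underline{\nabla} : f_q(p) \leq f_s(p)\}$. It suffices to argue that $U$ contains a corner of $\underline{\nabla}$. Here I would use the structural description of $\underline{\nabla}$ from Section~\ref{sec:implicit_representations}: its boundary is a cyclic sequence of at most four arcs joining the corners, each of which is either a vertical segment, a piece of $\partial P$, or a piece of a bisector $b_{su}$ (the site $s$ appearing because it realizes the $t$-level inside $\underline{\nabla}$). By the $x$-monotonicity of bisectors (Lemma~\ref{lem:bisectors_x-monotone}), $b_{sq}$ meets any vertical segment at most once; by Lemma~\ref{lem:intersect_once} it meets each bisector arc $b_{su}$ at most once; and the general position assumption prevents extra crossings with $\partial P$. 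Consequently $b_{sq}$ cannot both enter and leave $\underline{\nabla}$ through the same boundary arc, so it must cross two distinct arcs. The resulting chord separates $\partial \underline{\nabla}$ into two parts, each containing at least one corner; in particular the side corresponding to $U$ contains some corner $v$, and at that corner $f_q(\underline{v}) \leq f_s(\underline{v}) = v_z$, yielding $f_q \in F_v$.

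The main obstacle is this last topological step: verifying the boundary structure of $\underline{\nabla}$ (including the degenerate cases where corners coincide, or where some arcs degenerate to polygon-boundary pieces) and ruling out that a bisector's intersection with a nonconvex pseudo-trapezoid could carve off a subregion disjoint from every corner. Once this is handled, the rest of the argument is bookkeeping.
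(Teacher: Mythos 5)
Your approach is essentially the paper's: both directions rest on the $x$-monotonicity of restricted bisectors (Lemma~\ref{lem:bisectors_x-monotone}) together with the fact that two related bisectors $\br_{st}, \br_{su}$ cross at most once (Lemma~\ref{lem:intersect_once}), applied to the pseudo-trapezoid structure. The paper phrases the contradiction slightly more sharply, which also closes the gap you flag at the end: rather than reasoning about which arcs a ``chord'' of $b_{st}$ can cross and whether it might carve off a corner-free piece, the paper observes that if (WLOG) all four corners lie on the $s$-side --- above $\br_{st}$ --- while some interior point $q\in\underline{\nabla}$ lies below it, then the point of the bottom edge directly beneath $q$ also lies below $\br_{st}$, whereas both endpoints (corners) of that bottom edge lie above; so the bottom edge, which is a single $x$-monotone arc of one bisector $\br_{su}$, must cross $\br_{st}$ at least twice, contradicting Lemma~\ref{lem:intersect_once} outright. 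This intermediate-value style step is forced regardless of how many times $\br_{st}$ enters and exits $\underline{\nabla}$, so the multi-chord / nonconvexity worry you raise never needs to be addressed separately.
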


\begin{wrapfigure}[13]{r}{0.39\textwidth}
  \centering
  \vspace{.5\baselineskip}
  \includegraphics{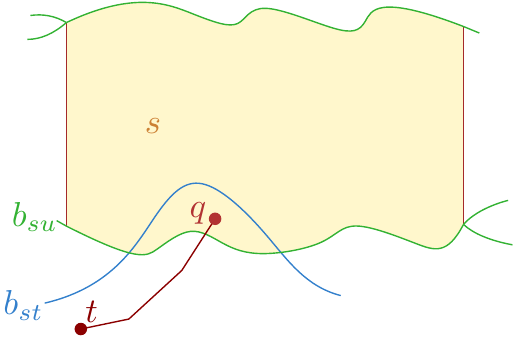}
  \caption{Since the bisectors restricted to $P_r$ are $x$-monotone it
    follows that if a site $t$ conflicts with a prism $\nabla$, it must
    conflict with a corner of $\nabla$.}
  \label{fig:conflict_list_prism}
\end{wrapfigure}
\proofcmd
% \begin{proof}
  Let $f_s$ be the function defining the ceiling of $\nabla$. We have that
  $F'=\bigcup_{v \in W} F_v \subseteq F_\nabla$ by definition, so we focus on
  proving $F_\nabla \subseteq F'$. Assume by contradiction that
  $f_t \in F_\nabla$, but $f_t \not\in F'$. So, there is a point
  $q \in \underline{\nabla}$ for which $\geodlen(t,q) < \geodlen(s,q)$, but
  $\geodlen(s,v) < \geodlen(t,v)$ for all corners $v \in W$. Hence, all four corners
  lie on the ``$s$-side'' of $\br_{st}$, whereas $p$ lies on the ``$t$-side''
  of $\br_{st}$. Assume without loss of generality that $s$ is closer to the
  points above $\br_{st}$ (and thus all corners lie above $\br_{st}$). See
  Fig.~\ref{fig:conflict_list_prism}. Since $\br_{st}$ is $x$-monotone
  (Lemma~\ref{lem:bisectors_x-monotone}) it must intersect the bottom edge of
  $\underline{\nabla}$ twice. This bottom edge is part of a single bisector
  $\br_{su}$, for some $f_u \in F$. However, by
  Lemma~\ref{lem:intersect_once} $\br_{st}$ and $\br_{su}$ intersect at most
  Once. Contradiction.
\hfill\qed
% \end{proof}

\begin{theorem}
  \label{thm:shallow_cutting}
  $\Lambda$ is a vertical $k$-shallow $(k(1+\eps)/n)$-cutting of $\A(F)$ whose
  topological complexity, and thus its size, is
  $O((n/k\eps^5)\log^2 n)$. Each pseudo-prism in $\Lambda$ intersects at least
  $k$ and at most $4k(1+\eps)$ functions in $F$.
\end{theorem}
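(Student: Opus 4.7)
The plan is to verify the three required properties separately: coverage of $L_{\leq k}(F)$ with disjoint interiors, the conflict-list bounds, and the topological-complexity bound. The heavy lifting has essentially been done already: the $\eps$-approximation guarantee in Lemma~\ref{lem:sample_is_approx_level} and the conflict-list characterization in Lemma~\ref{lem:conflict_lists} are exactly the tools we need.

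First I would check the covering property. By construction every pseudo-prism $\nabla \in \Lambda$ is the portion of $\R^3$ lying below the $t$-level of $R$ over a pseudo-trapezoid $\underline{\nabla}$ of the vertical decomposition $\underline{L_t^\nabla}$. Since the pseudo-trapezoids tile $P_r$ with pairwise disjoint interiors, the pseudo-prisms have pairwise disjoint interiors as well, and their union is precisely the set of points weakly below $L_t(R)$. By Lemma~\ref{lem:sample_is_approx_level} the surface $L_t(R)$ lies (weakly) above $L_k(F)$, so $\Lambda$ covers $L_{\leq k}(F)$.

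Next I would establish the conflict-list bounds. Fix a pseudo-prism $\nabla$ with corner set $W$ of size four. By Lemma~\ref{lem:conflict_lists} we have $F_\nabla = \bigcup_{v \in W} F_v$, so it suffices to bound $|F_v|$ for each corner $v$. Every corner lies on $L_t(R)$, which by Lemma~\ref{lem:sample_is_approx_level} lies between $L_k(F)$ and $L_{(1+\eps)k}(F)$; hence the level of $v$ in $\A(F)$ lies between $k$ and $(1+\eps)k$, i.e.\ $k \leq |F_v| \leq (1+\eps)k$. The lower bound for any single corner transfers to $\nabla$ itself, while taking the union over the four corners yields $k \leq |F_\nabla| \leq 4(1+\eps)k$, which is the claimed bound.

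Finally, the topological complexity follows by invoking Lemma~\ref{lem:complexity_level}: the expected topological complexity of $L_t(R)$ is $O((n/k\eps^5)\log^2 n)$, and the vertical decomposition only adds at most two extension segments per vertex, so the number of pseudo-trapezoids, and thus of pseudo-prisms in $\Lambda$, is within a constant factor of this. I do not foresee any genuine obstacle here; the one subtlety worth highlighting in the write-up is that corners of a pseudo-prism are vertices of the \emph{approximate} level $L_t(R)$ rather than of $L_k(F)$, which is exactly why the $\eps$-approximation property (and not a direct level-count argument) is what pins down the conflict-list sizes.
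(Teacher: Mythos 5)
Your proof is correct and follows the same strategy as the paper's: bound the size via Lemma~\ref{lem:complexity_level}, and control conflict-list sizes by combining Lemma~\ref{lem:conflict_lists} with the $\eps$-approximation guarantee of Lemma~\ref{lem:sample_is_approx_level}. The paper's own proof is considerably more terse (it states only the size bound and the corner reduction, leaving the covering property and the explicit $k \le |F_\nabla| \le 4(1+\eps)k$ calculation implicit), so you have simply supplied the details that the paper omits.
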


\begin{proof}
  By Lemma~\ref{lem:complexity_level} $\Lambda$ consists of
  $O((n/k\eps^5)\log^2 n)$ regions. Note that all regions are
  pseudo-prisms. Lemma~\ref{lem:conflict_lists} then gives us that the conflict
  list of each pseudo-prism is contained in the conflict lists of its at most
  four corners.
\end{proof}

\subsection{Computing the conflict lists}
\label{sub:conflict_lists}

Using Lemma~\ref{lem:compute_k-level} we can construct an implicit
representation of the $k$-shallow cutting $\Lambda = \Lambda_k(F)$. So, all
that remains is to compute the conflict lists of the pseudo-prisms. By
Lemma~\ref{lem:conflict_lists} it is sufficient to compute the conflict lists
of the four corner points of each pseudo-prism. Next, we show how to do this in
$O(n(\log^3 n\log m+\log^2 m))$ expected time.
% Next, we show
% that in $O(n(\log n\log m + \log^2 m))$ expected time we can build a data
% structure with which we can compute the conflict list of a corner point in
% $O(\log n\log m + k)$ expected time. This results in an $O(XXXX)$ expected time
% algorithm for computing $\Lambda_k(F)$ and its conflict lists.

We use the same approach as used by Chan~\cite{chan2000rangerep}. That is, we
first build a data structure on our set of functions $F$ so that for a vertical
query line $\ell$ (in $\R^3$) and a value $k$, we can report the lowest $k$
functions intersected by $\ell$ in $O((\log n + k)\log m)$ expected time. We
then extend this to report only the functions that pass strictly below some
point $q \in \R^3$. To compute the conflict lists of all corners in
$\Lambda_k(F)$ we repeatedly query this data structure.

\subparagraph{The data structure.} Our data structure consists of a hierarchy
of the lower envelopes of random samples
$R_0 \subset R_1 \subset .. \subset R_{\log n}$, where $|R_i|=2^i$. For each
set $R_i$ we store an implicit vertical decomposition representing the (the
downward projection of the) lower envelope $L_{0,i} = L_0(R_i)$. This
decomposes the space below $L_{0,i}$ into pseudo-prisms. For each such
pseudo-prism $\nabla$ we store its conflict list $F_\nabla$ with respect to
$F$, i.e.~the functions from (the entire set) $F$ that intersect $\nabla$. The
following lemma shows that for each $R_i$, the expected amount of space used is
$O(n)$. The total expected space used is thus $O(n\log n)$.

\begin{lemma}
  \label{lem:lower_env_clarkson_shor}
  Let $r \in [1,n]$ and consider a random sample $R$ of $F$ of size $r$. (i)
  The expected value of $\sum_\nabla |F_\nabla|$ over all pseudo-prisms below
  $L_0(R)$ is $O(n)$, and (ii) For any vertical line $\ell$, the expected
  value of $|F_\nabla|$, where
  $\nabla$ is the pseudo-prism of $L_0(R)$ intersected by $\ell$, is $O(n/r)$.
\end{lemma}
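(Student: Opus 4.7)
The plan is to invoke the Clarkson--Shor random-sampling framework on the vertical decomposition of the lower envelope $L_0(R)$. The two structural ingredients that make this applicable are: (a) each pseudo-prism $\nabla$ in the vertical decomposition of $L_0(R)$ is combinatorially \emph{defined} by a constant number of sites from $R$, and (b) the vertical decomposition contains $O(r)$ pseudo-prisms. For (a), the ceiling of $\nabla$ is a single distance function $f_s$, and each of its at most four corners is either a degree-three vertex of $\underline{L_0(R)}$ (determined by three sites) or the point where a vertical extension meets an edge of $\underline{L_0(R)}$ (determined by one additional site); hence $\nabla$ is determined by $O(1)$ sites. For (b), $\underline{L_0(R)}$ is essentially the geodesic Voronoi diagram of the $r$ sampled sites restricted to $P_r$, whose topological complexity is $O(r)$ by Section~\ref{sub:implicit_voronoi}, and its vertical decomposition inherits this $O(r)$ bound on the number of pseudo-trapezoids.

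With (a) and (b) in hand, the Clarkson--Shor bound applies. Let $T$ denote the collection of all potential pseudo-prisms ranging over subsamples of $F$, and for each $\tau \in T$ let $D(\tau) \subseteq F$ be its constant-size defining set. A potential prism $\tau$ appears in the decomposition of $R$ iff $D(\tau) \subseteq R$ and $R \cap F_\tau = \emptyset$, so
\[
\Pr\bigl[\tau \text{ appears for } R\bigr] \;\le\; \frac{\binom{n - |D(\tau)| - |F_\tau|}{r - |D(\tau)|}}{\binom{n}{r}}.
\]
Summing $|F_\tau|$ times this probability over all $\tau$ and using (b) (the expected number of surviving prisms is $O(r)$), the standard Clarkson--Shor computation yields $E\!\left[\sum_\nabla |F_\nabla|\right] = O(r \cdot n/r) = O(n)$, which establishes (i).

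For (ii) I apply the same machinery to the unique prism whose projection contains the projection of $\ell$. This distinguished prism is still defined by $O(1)$ sites, so the exponential-decay (bounded-moment) version of Clarkson--Shor implies that the probability its conflict list has size at least $c \cdot n/r$ decays exponentially in $c$: each extra function in $F_\nabla$ multiplies the survival probability of $\nabla$ by roughly $(1 - r/n)$. Integrating this tail bound yields $E[|F_\nabla|] = O(n/r)$. The main step that requires care is justifying the constant-degree definition in our implicit model: although each pseudo-prism has bisector arcs of combinatorial complexity $\Theta(m)$ along its boundary, the Clarkson--Shor analysis depends only on the $O(1)$ \emph{site functions} that combinatorially determine which prism the decomposition contains, not on the bit-complexity of its boundary curves. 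Once this decoupling is observed, the polygon complexity $m$ plays no role in the probabilistic estimates, and both parts follow from the textbook Clarkson--Shor argument.
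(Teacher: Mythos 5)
Your proposal is correct and follows the same route as the paper: the paper simply cites Har-Peled's ``Bounded moments theorem,'' which is precisely the Clarkson--Shor machinery you spell out. You make explicit the one observation the paper leaves implicit and that is genuinely worth stating: even though a pseudo-prism's boundary arcs have combinatorial complexity $\Theta(m)$, the prism is \emph{defined} by $O(1)$ site functions (one for the ceiling, three per degree-three corner, one or two more per vertical extension endpoint), and Clarkson--Shor only cares about the defining-set size, not the bit-complexity of the region. That decoupling is what makes $m$ disappear from the probabilistic estimates, and your writeup makes this cleaner than the paper does.

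The only place your argument diverges slightly from the paper's is part~(ii). The paper claims it ``follows directly from the first statement,'' which, read literally, is a bit loose: (i) bounds the \emph{sum} over all prisms, and a fixed vertical line $\ell$ need not land in an average prism. Your instinct to instead invoke the exponential-decay (tail) version of Clarkson--Shor for the single prism hit by $\ell$ --- that prism is still defined by $O(1)$ sites, so $\Pr[|F_\nabla| \geq c\,n/r]$ decays geometrically in $c$, and integrating gives $E[|F_\nabla|] = O(n/r)$ --- is the cleaner and more defensible route, and it is the argument Har-Peled's framework actually supplies. So your proof is, if anything, a bit more careful than the paper's on this point.
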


\begin{proof}
  The first statement follows directly from a Clarkson and Shor style sampling
  argument. More specifically, from what
  Har-Peled~\cite{harPeled2011geometric} calls the ``Bounded moments
  theorem'' (Theorem 8.8). The second statement then follows directly from the first statement.
\end{proof}

\subparagraph{Building the data structure.} For each set $R_i$, we use the
algorithm from Section~\ref{sec:computing_implict_representations} to construct
an implicit vertical decomposition of $\underline{L_{0,i}}$. To this end, we
need to order the (sites corresponding to the) functions in $R_i$ on increasing
distance to the bottom endpoint of the diagonal $d$. For $R_{\log n} = F$ we do
this in $O(n(\log n + \log m))$ time. For $R_{i-1}$ we do this by filtering the
ordered set $R_i$ in linear time. Since the sizes of $R_i$ are geometrically
decreasing, it follows that we spend $O(n(\log n + \log^2 m))$ time in
total. % To compute the conflict lists we
% first prove the following lemma.

% \frank{Maybe we can shave the $\log n$ here; if we get $F$ in sorted order this
%   is only $O(n)$. We would then have to maintain $F$ in this order when we
%   build the various cuttings in the data structure itself. Figure out if that
%   is possible. Hmm, except that this does not really help, since the conflict
%   lists have a total size of $O(n\log n)$ anyway}

\begin{lemma}
  \label{lem:conflict}
  Let $f_s \in F \setminus R$ be a function that intersects a pseudo-prism of
  $L_0(R)$, let $T$ be the set of sites whose functions contribute to
  $L_0(R)$, ordered on increasing distance from the bottom endpoint of $d$,
  and let $t$ and $u$ be the predecessor and successor of $s$ in $T$,
  respectively. The vertex $v \in \underline{L_0(R)}$ that represents
  $d \cap b_{tu}$ is closer to $s$ than to $t$ and $u$.
\end{lemma}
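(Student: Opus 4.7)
The goal is to show $\geodlen(v,s) < \geodlen(v,t)$ and $\geodlen(v,s) < \geodlen(v,u)$, where $v = d \cap b_{tu}$. The strategy is to consider the enlarged Voronoi diagram $\VD(R \cup \{s\})$ restricted to $P_r$ and show that $v$ falls strictly inside the Voronoi region of $s$ there.

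First I will show that the region $V_{R\cup\{s\}}(s)$ of $s$ in $\VD(R \cup \{s\}) \cap P_r$ is non-empty and actually intersects $d$. Non-emptiness is immediate: since $f_s$ intersects a pseudo-prism of $L_0(R)$, there is a point $q \in P_r$ with $\geodlen(s,q) \leq \geodlen(s',q)$ for every $s' \in R$. To see that $V_{R\cup\{s\}}(s)$ meets $d$, I use the fact that every geodesic from a site in $P_\ell$ to a point in $P_r$ must cross $d$. Let $y = \geod(s,q) \cap d$. For any $s' \in R$, the triangle inequality together with $\geodlen(s,q) = \geodlen(s,y) + \geodlen(y,q)$ gives
$$\geodlen(s',y) + \geodlen(y,q) \;\geq\; \geodlen(s',q) \;\geq\; \geodlen(s,q) \;=\; \geodlen(s,y) + \geodlen(y,q),$$
so $\geodlen(s',y) \geq \geodlen(s,y)$, and hence $y \in V_{R\cup\{s\}}(s) \cap d$.

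Next I invoke Lemma~\ref{lem:order} for the set $R \cup \{s\}$: along $d$, the Voronoi regions (of sites whose regions appear) occur in the same order as their defining sites sorted by increasing geodesic distance from the bottom endpoint $p$ of $d$. Because $t$ and $u$ are the predecessor and successor of $s$ in the $p$-distance ordering of $T$, and this ordering extends to $R \cup \{s\}$, the region $V_{R \cup \{s\}}(s) \cap d$ sits between $V_{R \cup \{s\}}(t) \cap d$ and $V_{R \cup \{s\}}(u) \cap d$ on $d$; it is an interval with lower endpoint $\alpha = d \cap b_{st}$ and upper endpoint $\beta = d \cap b_{su}$.

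Finally I combine these with set-monotonicity of Voronoi regions: $V_{R \cup \{s\}}(t) \subseteq V_R(t)$ and $V_{R \cup \{s\}}(u) \subseteq V_R(u)$. Restricting to $d$, the top of $V_{R \cup \{s\}}(t) \cap d$ lies at or below $v$, and the bottom of $V_{R \cup \{s\}}(u) \cap d$ lies at or above $v$, because $v$ is precisely the shared boundary of $V_R(t)$ and $V_R(u)$ along $d$. Thus $\alpha \leq v \leq \beta$, so $v$ lies in the interval that equals $V_{R \cup \{s\}}(s) \cap d$, which by definition of the Voronoi region of $s$ yields $\geodlen(v,s) < \geodlen(v,s')$ for every $s' \in R$; in particular this gives the two desired inequalities. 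Strictness follows from general position, as three sites $s,t,u$ cannot be simultaneously equidistant from the same point of $P_r$.

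The main obstacle will be the first step, namely verifying that $V_{R\cup\{s\}}(s)$ reaches $d$: this is what converts the ``local'' witness that $f_s$ dips below $L_0(R)$ somewhere in $P_r$ into a statement anchored at the specific vertex $v \in d$. It is precisely the geodesic fact that shortest paths from $P_\ell$ to $P_r$ must pass through $d$ that makes this work, and it is also what ensures that the Hamiltonian property underlying Lemma~\ref{lem:order} extends to $R \cup \{s\}$.
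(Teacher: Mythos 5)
Your proof follows essentially the same route as the paper's: both arguments show that since $f_s$ dips below $L_0(R)$ somewhere in $P_r$, the site $s$ must own a Voronoi cell on the diagonal $d$; then Lemma~\ref{lem:order} places that cell between the cells of $t$ and $u$; and hence $v = d \cap b_{tu}$, which was the old boundary between $t$'s and $u$'s cells, is absorbed into $s$'s cell. Your write-up is more explicit than the paper's terse version — in particular, the triangle-inequality argument through the crossing point $y$ and the set-monotonicity of Voronoi regions are both left implicit in the paper — but the underlying reasoning is the same.
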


\begin{proof}
  If $f_s$ intersects a pseudo prism of $L_0(R)$ then there is a point
  $q \in P_r$ for which $s$ is closer than all other sites in $T$. It follows
  that there must be a point on the diagonal $d$ that is closer to $s$ than to
  all other sites in $T$. Lemma~\ref{lem:order} then gives us that the
  Voronoi region of $s$ (with respect to $R \cup \{s\}$) on $d$ must lie in
  between that of $t$ and $u$ (if these still contribute a Voronoi
  region). Therefore, $t$ and $u$ no longer have a vertex of
  $\VD(R \cup \{s\})$ on $d$. Since $t$ and $u$ were the closest sites to $v$
  in $R$, this implies that $v$ must lie in the Voronoi region of $s$, hence
  $s$ is closer to $v$ than $t$ and $u$.
\end{proof}

By Lemma~\ref{lem:conflict} we can now compute the conflict lists of the cells
in $L_{0,i}$ as follows. For each function $f_s \in F\setminus R_i$ we find the
vertex $v$ defined in Lemma~\ref{lem:conflict}. If $s$ is further from $v$ than
the sites defining it, then $f_s$ does not conflict with any pseudo-prism in
$L_{0,i}$. Otherwise, we find \emph{all} (degree one or degree three) vertices
of $\underline{L_{0,i}}$ that conflict with $s$. Since Voronoi regions are
simply connected, we can do this using a breadth first search in
$\underline{L_{0,i}}$, starting from vertex $v$. When we have this information
for all functions in $F \setminus R_i$, we actually also know for every vertex
$v$ in $L_{0,i}$ which functions $F \setminus R_i$ pass below it. That is, we
have the conflict lists for all vertices $v$. The conflict list of a
pseudo-prism in $L_{0,i}$ is then simply the union of the conflict lists of its
four corners (Lemma~\ref{lem:conflict_lists}).

Given the ordering of all sites in $S$ on increasing distance to the bottom
endpoint of $d$, we can find the initial vertices for all functions in
$F \setminus R_i$ in $O(|R_i|\log m))$ time. For every other reported conflict we
spend $O(\log m)$ time, and thus computing the conflict lists for all cells in
$L_{0,i}$ takes $O(\sum_{\nabla \in L_{0,i}} |F_\nabla|\log m)$ time. By
Lemma~\ref{lem:lower_env_clarkson_shor} this sums to $O(n\log m)$ in
expectation. Summing over all $O(\log n)$ random samples, it follows that we
spend $O(n\log n\log m)$ expected time to compute all conflict lists. The total
expected time to build the data structure is thus $O(n(\log^2 m + \log n\log
m))$.
% \frank{note that we just restart if the size of the conflict lists exceeds
%   $O(n\log n)$}

\subparagraph{Querying.} The query algorithm is exactly as in
Chan~\cite{chan2000rangerep}. The main idea is to use a query algorithm that
may fail, depending on some parameter $\delta$, and then query with varying
values of $\delta$ until it succeeds. The query algorithm locates the cell
$\nabla$ in $L_0(R_i)$ stabbed by the vertical line $\ell$, for
$i=\lceil \log\lceil n\delta/k\rceil\rceil$. If $|F_\nabla| > k/\delta^2$ or
$|F_\nabla \cap \ell| < k$ the query algorithm simply fails. Otherwise it
reports the $k$ lowest functions intersecting $\ell$. Since computing the
intersection of a function $f_s$ with $\ell$ takes $O(\log m)$ time, the
running time is $O((\log n + k/\delta^2)\log m)$. Using three independent
copies of the data structure, and querying with $\delta = 2^{-j}$ for
increasing $j$ gives us an algorithm that always succeeds in
$O((\log n + k)\log m)$ time. Refer to Chan~\cite{chan2000rangerep} for details. We
can now also report all functions that pass below a point $q$ by repeatedly
querying with the vertical line through $q$ and doubling the value of $k$. This
leads to a query time of $O((\log n + k)\log m)$, where $k$ is the number of
functions passing below $q$.

% \frank{Hmm, explicitly computing the intersection point between a function
%   $f_s$ and our line $\ell$ requires $O(\log m)$ time. It would be interesting
%   to see if we can avoid that, e.g. by storing $F_\nabla$ in some different way
%   so that we can efficiently find the lowest $k$ functions. Although I think in
%   the end in our dynamic NN searching DS this term is dominated anyway.
% }
% \frank{Do we really need $O(\log n\log m)$ time for the point location
%   queries. I don't think you can reuse the polygon vertices each time, so maybe
% $O(\log (n+m))$ is the right answer}

\begin{theorem}
  \label{thm:k_nn_data_structure}
  There is a data structure of size $O(n\log n)$ that allows reporting the $k$
  lowest functions in $\A(F)$ intersected by a vertical line through a query
  point $q \in P_r$, that is, the $k$-nearest neighbors of a query point $q$,
  or all $k$ functions that pass below $q$, in $O((\log n + k)\log m)$
  time. Building the data structure takes $O(n(\log n \log m + \log^2 m))$
  expected time.
  % In $O(n(\log n \log m + \log^2 m))$ time, we can build a data structure so
  % that we can find the $k$ lowest functions in $\A(F)$ intersected by a
  % vertical line through a query point $q \in P_r$, that is, the $k$-nearest
  % neighbors of a query point $q$, or all $k$ functions that pass below $q$ in
  % $O((\log n + k)\log m)$ time.
\end{theorem}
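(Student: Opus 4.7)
\proofcmd
My plan is to assemble the pieces already developed in Section~\ref{sub:conflict_lists}: build the hierarchy $R_0 \subset R_1 \subset \cdots \subset R_{\log n}$ of random samples with $|R_i| = 2^i$, store for each $R_i$ an implicit vertical decomposition of $\underline{L_0(R_i)}$ (via Lemma~\ref{lem:compute_k-level} applied with $k=1$), and attach to each pseudo-prism $\nabla$ of that decomposition its conflict list $F_\nabla \subseteq F$. The space bound will follow from Lemma~\ref{lem:lower_env_clarkson_shor}(i): each level contributes $O(n)$ expected space in total, and summing over the $O(\log n)$ samples yields $O(n\log n)$.

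For the preprocessing time, I would first sort $S_\ell$ once by geodesic distance to the bottom endpoint of $d$, which costs $O(n(\log n + \log m))$ via the two-point data structure. The ordered samples $R_i$ are obtained by filtering from $R_{i+1}$, and each implicit decomposition is built in $O(|R_i|\log^2 m)$ time. To compute conflict lists for $R_i$, I would apply Lemma~\ref{lem:conflict}: for every $f_s \in F \setminus R_i$, locate the vertex $v$ on $d$ between the predecessor and successor of $s$ in the sorted contributors $T$, which costs $O(\log m)$ given the precomputed order; if $s$ is not closer to $v$ than its defining sites, discard it, otherwise run a BFS in $\underline{L_0(R_i)}$ from $v$, spending $O(\log m)$ per reported conflict (for each distance test). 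Lemma~\ref{lem:conflict_lists} then lets us assemble the pseudo-prism conflict lists from their at most four corner lists. By Lemma~\ref{lem:lower_env_clarkson_shor}(i) the total reported conflict size at each level is $O(n)$ in expectation, so each level costs $O(n\log m)$ expected time and the grand total is $O(n(\log n\log m + \log^2 m))$.

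For querying, I would plug our hierarchy into the template of Chan~\cite{chan2000rangerep}. The basic subroutine, parameterised by a guess $\delta \in (0,1]$, picks $i = \lceil \log\lceil n\delta/k\rceil\rceil$, locates the pseudo-prism $\nabla$ of $L_0(R_i)$ stabbed by the vertical line through $q$ in $O(\log n\log m)$ time (Lemma~\ref{lem:reconstruct_v} style point location, combined with the evaluation of $\geodlen$ in $O(\log m)$), and declares failure if $|F_\nabla| > k/\delta^2$ or if fewer than $k$ functions in $F_\nabla$ actually intersect $\ell$; otherwise it intersects each $f_s \in F_\nabla$ with $\ell$ in $O(\log m)$ time and returns the $k$ lowest. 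The Clarkson--Shor tail bound from Lemma~\ref{lem:lower_env_clarkson_shor}(ii) makes the failure probability $O(\delta)$ on each of three independent copies, so iterating over $\delta = 2^{-j}$ succeeds in expected $O((\log n + k)\log m)$ time exactly as in~\cite{chan2000rangerep}. Reporting \emph{all} functions passing below $q$ reduces to doubling $k$ until the list is complete, which keeps the total cost linear in the output size.

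The main obstacle I expect is the conflict-list construction: it is here that our implicit (non-constant-complexity) representation diverges from Chan's Euclidean setting, and one must verify that Lemma~\ref{lem:conflict} really does supply a BFS seed inside every Voronoi region of $s$ that intersects $\nabla$, so that all conflicting corners are discovered without missing a connected component. The $x$-monotonicity of bisectors in $P_r$ (Lemma~\ref{lem:bisectors_x-monotone}) together with the single-intersection property (Lemma~\ref{lem:intersect_once}) is what makes each such region simply connected and ultimately makes the BFS correct; once that is in place, the remainder of the argument is the standard Chan machinery.
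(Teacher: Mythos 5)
Your proposal is correct and follows essentially the same route as the paper: the hierarchy of samples $R_0\subset\cdots\subset R_{\log n}$ with implicit vertical decompositions of the lower envelopes, conflict lists computed by BFS from the seed vertex supplied by Lemma~\ref{lem:conflict} and assembled via Lemma~\ref{lem:conflict_lists}, the space and time bounds via Lemma~\ref{lem:lower_env_clarkson_shor}, and the Chan-style restarting query with parameter $\delta$. The only cosmetic difference is that you invoke Lemma~\ref{lem:compute_k-level} (with $k$ at its smallest value) for the lower-envelope decomposition, whereas the paper simply references the Voronoi-diagram construction of Section~\ref{sub:implicit_voronoi} directly; and your closing reflection on why the BFS is sound (simple connectedness of the Voronoi region of $s$, which the paper states without elaboration) is a useful sanity check but not a divergence.
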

% \frank{This assumes we have already built the guibas and hershberger DS.}

\subparagraph{Computing a shallow cutting.} To construct a shallow cutting we
now take a random sample $R$ of size $r$, build an implicit representation of
the $t$-level in this sample, and then construct the above data structure to
compute the conflict lists. By Lemma~\ref{lem:compute_k-level} constructing the
implicit representation of $L_t(R)$ takes $O(t^2r(\log r + \log^2 m))$
time. Plugging in $r=(cn/k\eps^2)\log n$, $t=\Theta(1/\eps^2\log n)$, and
$\eps = 1/2$, this takes $O((n/k)\log^3 n(\log n + \log^2 m))$ expected time.

Constructing the query data structure takes $O(n(\log n\log m + \log^2 m))$
time. We then query it with all degree three and degree one vertices in
$\Lambda$. The total size of these conflict lists is $O(n\log^2 n)$
(Theorem~\ref{thm:shallow_cutting}). So, this takes $O(n\log^3 n\log m)$ time
in total. We conclude:

\begin{theorem}
  \label{thm:compute_shallow_cutting}
  A $k$-shallow cutting $\Lambda_k(F)$ of $F$ of topological
  complexity $O((n/k)\log^2 n)$ can be computed in
  $O((n/k)\log^3 n(\log n + \log^2 m) + n\log^2 m + n\log^3 n\log m)$ expected
  time.
\end{theorem}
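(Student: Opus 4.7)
The plan is to execute the three-step recipe that the preceding sections have been building toward: (i) produce an implicit approximate $k$-level via random sampling, (ii) use Theorem~\ref{thm:shallow_cutting} to promote it to a genuine $k$-shallow cutting, and (iii) compute the conflict lists by querying the data structure of Theorem~\ref{thm:k_nn_data_structure} at the corners of the cutting. Nothing new is needed geometrically; the work is in wiring the pieces together and verifying the claimed expected running time.

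For step (i), I instantiate the construction of Section~\ref{sec:Approximating_the_k-Level} with $\eps = 1/2$: draw a random sample $R \subseteq F$ of size $r = \Theta((n/k)\log n)$ and a random level $t \in [(1+\eps/3)h,(1+\eps/2)h]$ with $h=\Theta(\log n)$. Then I invoke the algorithm of Lemma~\ref{lem:compute_k-level} on $R$ at level $t$ to produce the implicit vertical decomposition $\underline{L_t^\nabla(R)}$ in $O(t^2 r(\log r + \log^2 m))$ expected time, which after substitution is $O((n/k)\log^3 n(\log n + \log^2 m))$. Lemmas~\ref{lem:sample_is_approx_level} and~\ref{lem:complexity_level} guarantee that $L_t(R)$ is an $\eps$-approximation of $L_k(F)$ and has expected topological complexity $O((n/k)\log^2 n)$; whenever the actual output size exceeds this bound by more than a constant factor I restart, which by Markov costs only a constant factor in expectation. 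Raising each pseudo-trapezoid of $\underline{L_t^\nabla(R)}$ to its ceiling function produces the collection $\Lambda$, which by Theorem~\ref{thm:shallow_cutting} is a valid $k$-shallow cutting of $\A(F)$ with the stated topological complexity.

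For step (iii), Lemma~\ref{lem:conflict_lists} reduces the conflict list of each pseudo-prism $\nabla \in \Lambda$ to the union of the (at most four) conflict lists of its corner points. So I would build the data structure of Theorem~\ref{thm:k_nn_data_structure} on $F$ in $O(n(\log n\log m + \log^2 m))$ expected time, and then for each corner $v$ of $\Lambda$ query with the vertical line through $v$, doubling the target size until all functions below $v$ have been reported. Since each of the $O((n/k)\log^2 n)$ pseudo-prisms in $\Lambda$ has conflict list of size $O(k)$ (Theorem~\ref{thm:shallow_cutting}), the total conflict-list size summed over corners is $O(n\log^2 n)$. Charging $O(\log n \log m)$ to the locate-cost of each of the $O((n/k)\log^2 n)$ queries and $O(\log m)$ to each reported function yields a total of $O((n/k)\log^3 n\log m + n\log^2 n\log m)$, bounded by $O(n\log^3 n\log m)$.

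Summing the three contributions gives exactly the bound asserted in the theorem. The main potential pitfall is keeping the probabilistic guarantees coherent: Lemma~\ref{lem:sample_is_approx_level} holds with high probability, while Lemma~\ref{lem:complexity_level} only controls the topological complexity in expectation, and we need both to hold simultaneously on the same sample. The restart-on-blow-up trick handles this cleanly, but it is the one place where one must be careful that the conditional expectation of the running time, given that $\Lambda$ is a valid cutting of the right size, is not inflated beyond the stated bound. Once that is in hand, the remainder is essentially bookkeeping on top of Lemmas~\ref{lem:compute_k-level} and~\ref{lem:conflict_lists} and Theorems~\ref{thm:shallow_cutting} and~\ref{thm:k_nn_data_structure}.
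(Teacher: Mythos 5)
Your proposal is correct and mirrors the paper's own proof essentially step for step: the same $\eps=1/2$ sample of size $\Theta((n/k)\log n)$, the same invocation of Lemma~\ref{lem:compute_k-level} and Theorem~\ref{thm:shallow_cutting} to produce $\Lambda$, the same use of the Theorem~\ref{thm:k_nn_data_structure} structure queried at the corner vertices, and the same restart-if-too-large device to convert expected size into a usable bound. The bookkeeping in the final sum also matches the paper's stated running time.
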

% \frank{remember that the result from the previous sec is only whp, see how this
%   affects the running time}

\section{Putting everything together}
\label{sec:combining}

% In this section we show how we can plug in our implicit shallow cuttings from
% Section~\ref{sec:implicit_cutting} into the dynamic lower envelope structure
% developed by Kaplan \etal~\cite{kaplan2017dynamic}, and how to combine this
% with a hierarchical balanced subdivision to obtain an efficient fully dynamic
% nearest neighbor searching data structure.

Kaplan \etal~\cite{kaplan2017dynamic} essentially prove the following result,
which, combined with Theorem~\ref{thm:compute_shallow_cutting} gives us an
efficient data structure to answer nearest neighbor queries when sites are in
$P_\ell$ and the query points are in $P_r$.

\begin{lemma}[Kaplan \etal~\cite{kaplan2017dynamic}]
  \label{lem:kaplan_DS}
  Given an algorithm to construct a $k$-shallow cutting $\Lambda$ of size
  $S(n,k)$ on $n$ functions in $T(n,k)$ time, and such that locating the
  cell $\nabla$ in $\Lambda$ containing a query point $q$ takes $Q(n,k)$ time, we
  can construct a data structure of size $O(S(n,k)\log n)$ that maintains a
  dynamic set of at most $n$ functions $F$ and can report the function that
  realizes the lower envelope $L_0(F)$ at a query point $q$ in
  $O(Q(n,1)\log n)$ time. Inserting a new function in $F$ takes
  $O((T(n,1)/n)\log n)$ amortized time, and deleting a function from $F$ takes
  $O((T(n,1)/n)\log^3 n)$ amortized time.
\end{lemma}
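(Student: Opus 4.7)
The statement is essentially a repackaging of the Chan/Kaplan~\etal framework, so the proof plan is to verify that their dynamic lower envelope machinery goes through with the abstracted parameters $S(n,k)$, $T(n,k)$, $Q(n,k)$ instead of specific values. The plan is to maintain a hierarchy of shallow cuttings at geometrically increasing levels and to invoke a logarithmic-method style rebuilding argument for updates.

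First, I would set up the static query structure. Maintain a $k_i$-shallow cutting $\Lambda_{k_i}(F)$ for levels $k_i = 2^i$ with $i = 0,\dots,\lceil\log n\rceil$. For each pseudo-prism $\nabla$ in $\Lambda_{k_i}$ we also store its conflict list $F_\nabla$ (of size $O(k_i)$). A query at $q$ proceeds by locating $q$ in successive cuttings, starting with the $1$-shallow cutting; if the located cell in $\Lambda_{k_i}$ has $q$ above its ceiling one probes the next level $\Lambda_{k_{i+1}}$, otherwise the lower envelope function at $q$ is found by examining the $O(k_i)$ functions in the conflict list. Since the lowest level where $q$ lies below the ceiling has conflict list of constant expected size, and point location in each cutting takes $Q(n,1)$ time (each individual cutting of the appropriate subset may be queried with the $k=1$ procedure), the total query time is $O(Q(n,1)\log n)$. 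The total space is $O(\sum_i S(n,k_i)\cdot k_i \cdot (1/k_i)) \cdot O(\log n) = O(S(n,k)\log n)$ after appropriate accounting.

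For insertions I would use the standard amortization: each cutting $\Lambda_{k_i}$ is periodically rebuilt after $\Theta(k_i)$ updates have touched it, which can be done in $T(n,k_i)$ time. Since $T$ is at least linear in $n$, we may bound $T(n,k_i)/k_i \le T(n,1)/1$ in the relevant regime, so amortizing each rebuild over the $k_i$ updates that triggered it charges every insertion $O(T(n,1)/n)$ per level, and summing over $O(\log n)$ levels yields the stated $O((T(n,1)/n)\log n)$ amortized bound. The main subtlety here is to maintain the invariant that each $\Lambda_{k_i}$ really covers the current $k_i$-level of $F$; Chan's argument handles this via a shadow copy that is rebuilt in the background, and we simply reuse his argument verbatim.

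The main obstacle, and the place where the extra $\log^2 n$ factor appears, is handling deletions. I would follow Chan's deletion strategy: deletions are handled by marking a function as deleted and periodically rebuilding a level whenever a constant fraction of its functions have been marked; however, because deleting a function lying on the lower envelope can force us to re-descend through the cuttings until we find a witness of the new envelope, each deletion can trigger cascading reconstructions across up to $O(\log^2 n)$ levels of the global rebuild hierarchy. Amortizing these cascaded reconstructions in exactly the same way as the insertion analysis yields the claimed $O((T(n,1)/n)\log^3 n)$ bound, completing the reduction to the Kaplan~\etal framework. Since our Theorem~\ref{thm:compute_shallow_cutting} supplies $S$, $T$, and $Q$ with the required structure, all three statements of the lemma follow from plugging these functions into the generic argument.
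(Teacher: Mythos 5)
The paper does not prove this lemma at all: the statement is explicitly attributed to Kaplan~et~al.\ (note the bracketed citation in the lemma header) and is used purely as a black box, so there is no proof in the paper to compare against. That said, your attempt to reconstruct the argument is worth a few remarks, because several of its steps would not survive scrutiny.

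Your space accounting, $O\bigl(\sum_i S(n,k_i)\cdot k_i\cdot (1/k_i)\bigr)\cdot O(\log n)$, is garbled: the $k_i\cdot(1/k_i)$ cancels, and the extra $\log n$ factor is unexplained; the real bookkeeping is that each of the $O(\log n)$ geometric levels stores a cutting whose cells carry conflict lists, and $S(n,k)$ is already defined to include those lists. Your insertion amortization is also off: rebuilding $\Lambda_{k_i}$ after $\Theta(k_i)$ updates at cost $T(n,k_i)$ gives $T(n,k_i)/k_i$ per update per level, and the inequality $T(n,k_i)/k_i\le T(n,1)$ you invoke neither holds in general nor leads to the claimed $O((T(n,1)/n)\log n)$; the $T(n,1)/n$ term actually arises from the Bentley--Saxe logarithmic method applied to geometric \emph{groups} of functions, not from per-cutting rebuild frequencies. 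Finally, your account of where the extra $\log^2 n$ factor for deletions comes from (``cascading reconstructions across $O(\log^2 n)$ levels'') misidentifies the mechanism: in the Chan/Kaplan scheme deletions are handled lazily by marking plus periodic rebuilds, and the extra factor comes from re-insertions charged during those rebuilds, not from re-descending the cutting hierarchy. None of this is fatal to the paper, since the lemma is imported verbatim, but as a self-contained proof your sketch would need these steps repaired before it establishes the stated bounds.
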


Our main data structure is a balanced binary tree, corresponding to a balanced
decomposition of $P$ into sub-polygons~\cite{guibas1987balanced}, in which each
node stores two copies of the data structure from Lemma~\ref{lem:kaplan_DS}. A
node in the tree corresponds to a subpolygon $P'$ of $P$, and a diagonal $d$
that splits $P'$ into two roughly equal size subpolygons $P_\ell$ and
$P_r$. One copy of our data structure associated with this node stores the
sites in $S_\ell$ and can answer queries in $P_r$. The other copy stores the
sites in $S_r$ and can answer queries in $P_\ell$. Since the balanced
hierarchical decomposition consists of $O(\log m)$ layers, every site is stored
$O(\log m)$ times. This results in an $O(n\log^3n\log m + m)$ size data
structure. To answer a query $q$, we query $O(\log m)$ data structures, one at
every level of the tree, and we report the site that is closest over all. % We
% thus obtain the following result.

\begin{repeattheorem}{thm:fully_dynamic_ds_polylog}
  Let $P$ be a simple polygon $P$ with $m$ vertices. There is a fully dynamic data
  structure of size $O(n\log^3 n\log m + m)$ that maintains a set of
  $n$ point sites in $P$ and allows for geodesic nearest neighbor queries in
  worst case $O(\log^2 n\log^2 m)$ time. Inserting a site takes
  $O(\log^5 n\log m + \log^4 n\log^3 m)$ amortized expected time, and deleting
  a site takes $O(\log^7n\log m + \log^6 n\log^3 m)$ amortized expected time.
\end{repeattheorem}

\begin{proof}
  Theorem~\ref{thm:compute_shallow_cutting} gives us
  $T(n,k) = O((n/k)\log^3 n(\log n + \log^2 m) + n\log^2 m + n\log^3 n\log m)$,
  $S(n,k) = O(n\log^2 n)$, and $Q(n,k) = O(\log n\log m)$, where $m$ is the
  size of our polygon. Therefore, $T(n,1)/n = O(\log^4 n + \log^3 n\log^2
  m)$. Plugging in these results in Lemma~\ref{lem:kaplan_DS} and using that
  the balanced decomposition consists of $O(\log m)$ levels completes the proof.
\end{proof}

\section{An Improved Data Structure for Offline-Updates or Insertions-Only}
\label{sec:simplified}

In this Section we briefly sketch how to use some of the tools and techniques
we developed to gen an alternative, simpler data structure for nearest neighbor
queries. For a fully dynamic scenario this data structure is slower than our
result from Theorem~\ref{thm:fully_dynamic_ds_polylog}, however in case there
are no deletions, or the full sequence of updates is known in advance, this
method is actually faster than the data structure of
Theorem~\ref{thm:fully_dynamic_ds_polylog}.

The main idea is still to recursively partition the polygon into a ``left''
subpolygon $P_\ell$ and a ``right'' subpolygon $P_r$. instead of building a
dynamic lower envelope data structure of the sites $S_\ell = S \cap P_\ell$ in
$P_r$ we use the following approach. We further split the sites $S_\ell$ into
subsets $S_1,..,S_k$, and for each subset we use the algorithm from
Section~\ref{sub:implicit_voronoi} to build (an implicit representation of) the
Voronoi diagram they induce in $P_r$. To answer a query (of a query point in
$P_r$) we simply query \emph{all} $k$ (implicit) Voronoi diagrams. To update
the data structure we simply rebuild the Voronoi diagram(s) of the affected
subset(s). For a simple fully dynamic data structure we can partition the sites
$S_\ell$ in $k=O(\sqrt{n})$ groups of size $O(\sqrt{n})$ each, to get
$O(\sqrt{n}\polylog n\polylog m)$ worst case update and query times.

\subparagraph{Insertions only.} In case our data structure has to support only
insertions, we can improve the insertion time to $O(\log n\log^3 m)$, albeit
being amortized. We now partition the sites in $S_\ell$ in groups of size
$2^i$, for $i \in [1..O(\log n)]$. When we insert a new site, we may get two
groups of size $2^i$. We then remove these groups, and construct a new group of
size $2^{i+1}$. For this group we rebuild the Voronoi diagram $\VD$ that these
sites induce on $P_r$ from scratch. % Note that if there already is a group of
% size $2^{i+1}$, we again remove both these groups and construct a group of size
% $2^{i+2}$, etc. Hence, a single insert may cause a cascading effect.
Using a standard binary counter argument it can be shown that every data
structure of size $2^{i+1}$ gets rebuild (at a cost of $O(2^i\log^2 m)$) only
after $2^i$ new sites have been inserted~\cite{overmars1983design}. So, if we
charge $O(\log n\log^2 m)$ to each site, it can pay for rebuilding all of the
structures it participates in. We do this for all $O(\log m)$ levels in the
balanced decomposition, hence we obtain the following result.

\begin{theorem}
  \label{thm:insert_only}
  Let $P$ be a simple polygon $m$ vertices. There is an insertion-only data
  structure of size $O(n\log m + m)$ that stores a set of $n$ point sites in
  $P$, allows for geodesic nearest neighbor queries in worst-case
  $O(\log^2 n\log^2 m)$ time, and inserting a site in amortized
  $O(\log n\log^3 m)$ time.
\end{theorem}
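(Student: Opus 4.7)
The plan is to instantiate the sketch preceding the theorem and verify each component quantitatively. I would use the balanced hierarchical polygon decomposition of Guibas~\etal~\cite{guibas1987balanced}, so that $P$ is split by a sequence of $O(\log m)$ diagonals into sub-polygons. At every internal node, with diagonal $d$ separating $P_\ell$ and $P_r$, I would maintain two symmetric logarithmic-method structures: one holding the sites currently in $S_\ell$ and answering queries that fall in $P_r$, and a mirror copy for the other side.

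Within one such node I would partition the at most $n$ sites into groups whose sizes form a subset of $\{2^0,2^1,\ldots,2^{\lfloor\log n\rfloor}\}$. For each group I store (i) the sites in a balanced BST keyed by geodesic distance to the bottom endpoint of $d$, which (by the argument following Lemma~\ref{lem:compute_sequence}) can be maintained in $O(\log n+\log m)$ per update, and (ii) the implicit Voronoi diagram \VD of that group inside the opposite subpolygon, produced by Lemma~\ref{lem:reconstruct_v} from the sorted site list.

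For a query point $q$ I traverse the $O(\log m)$ ancestors of its leaf; at each ancestor I point-locate in every one of the at most $O(\log n)$ implicit Voronoi diagrams on the side opposite to $q$, spending $O(\log n\log m)$ per diagram by Lemma~\ref{lem:reconstruct_v}, and report the globally closest site. This gives worst-case $O(\log^2 n\log^2 m)$. Inserting a site $s$ updates the $O(\log m)$ ancestors containing $s$; at each such ancestor I run the standard binary-counter cascade, rebuilding any newly created group of $2^{i+1}$ sites from scratch in $O(2^{i+1}\log^2 m)$ time using Lemma~\ref{lem:reconstruct_v}. The usual amortized analysis (see~\cite{overmars1983design}) charges each insertion $O(\log^2 m)$ at each of the $O(\log n)$ counter positions it may cross, hence $O(\log n\log^2 m)$ amortized per ancestor and $O(\log n\log^3 m)$ over all ancestors. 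The space per decomposition node is $O(n)$ because the group sizes form a geometric series, summing to $O(n\log m+m)$ overall; the additive $m$ is for the two-point shortest-path structure of~\cite{guibas1989query}.

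The main obstacle is organizational rather than analytical: one must re-establish the sorted-by-distance-from-the-endpoint-of-$d$ invariant demanded by Lemma~\ref{lem:reconstruct_v} whenever two buckets merge, without blowing the logarithmic-method budget. I handle this by merging the two sorted site lists of sizes $2^i$ in linear time before invoking the Voronoi construction; this extraction cost is dominated by the $O(2^{i+1}\log^2 m)$ rebuild cost, so the amortization goes through as stated.
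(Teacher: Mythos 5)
Your proposal is correct and is essentially the same as the paper's own proof: a logarithmic-method (binary-counter) family of implicit Voronoi diagrams, one pair per node of the balanced decomposition, queried at all $O(\log m)$ ancestors and rebuilt on merges via Lemma~\ref{lem:reconstruct_v}, giving $O(\log n\log^3 m)$ amortized insertion, $O(\log^2 n\log^2 m)$ query, and $O(n\log m+m)$ space. You spell out a detail the paper leaves implicit---merging the two sorted-by-distance lists in linear time before rebuilding---but that does not change the approach.
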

% \frank{so $n$ is actually the number of sites in the DS at the end of the
%   sequence. Which may not be the same thing as 'the number of sites in the DS'
%   at the time I'm inserting some site $p$.}

\subparagraph{Offline updates.} When we have both insertions and deletions, but the
order of these operations is known in advance, we can maintain $S$ in amortized
$O(\log n\log^3 m)$ time per update, where $n$ is the maximum number of sites
in $S$ at any particular time. Queries take $O(\log^2 n\log^2 m)$ time, and may
arbitrarily interleave with the updates. Furthermore, we do not have to know
them in advance.
% Suppose that there are both insertions and
% deletions, but that the order of these operations is known in advance. We now
% show that we can handle both insertions and deletions in $O(\log n\log^3 m)$
% time, while supporting queries in $O(\log^2 n\log^2 m)$ time, where $n$ is the
% maximum number of sites in $S$ at any particular time.

For ease of description, we assume that the total number of updates $N$ is
proportional to the number of sites at any particular time, i.e.~$N \in
O(n)$. We can easily extend our approach to larger $N$ by grouping the updates
in $N/n$ groups of size $O(n)$ each. Consider a node of the balanced
decomposition whose diagonal that splits its subpolygon into $P_\ell$ and
$P_r$. We partition the sites in $S_\ell$ into groups such that at any time, a
query $q \in P_r$ can be answered by considering the Voronoi diagrams in $P_r$
of only $O(\log n)$ groups. We achieve this by building a segment
tree on the intervals during which the sites are
``alive''. More specifically, let $[t_1,t_2]$ denote a time interval in which
a site $s$ should occur in $S_\ell$ (i.e.~$s$ lies in $P_\ell$ and there is an
\acall{Insert}{$s$} operation at time $t_1$ and its corresponding
\acall{Delete}{$s$} at time $t_2$). We store the intervals of all sites in
$S_\ell$ in a segment tree~\cite{bkos2008}. Each node $v$ in this tree is
associated with a subset $S_v$ of the sites from $S_\ell$. We build the Voronoi
diagram that $S_v$ induces on $P_r$. Every site occurs in $O(\log n)$ subsets,
and in $O(\log m)$ levels of the balanced decomposition, so the total size of
our data structure is $O(n\log n\log m + m)$. Building the Voronoi diagram for
each node $v$ takes $O(|S_v|\log^2 m)$ time. Summing these results over all
nodes in the tree, and all levels of the balanced decomposition, the total
construction time is $O(n\log n\log^3 m + m)$ time. We conclude:

\begin{theorem}
  \label{thm:offline}
  Let $P$ be a simple polygon $P$ with $m$ vertices, and let $\mathcal{S}$ be a
  sequence of operations that either insert a point site inside $P$ into a set
  $S$, or delete a site from $S$. There is a dynamic data structure of size
  $O(n\log n\log m + m)$, where $n$ is the maximum number of sites in $S$ at
  any time, that stores $S$, and allows for geodesic nearest neighbor queries
  in $O(\log^2 n\log^2 m)$ time. Updates take amortized $O(\log n\log^3 m)$
  time.
\end{theorem}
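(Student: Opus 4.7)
The plan is to instantiate the sketch given just above the theorem, filling in the query procedure and the amortization. First I would build the balanced hierarchical decomposition of $P$ in $O(m)$ time~\cite{guibas1987balanced}, yielding a tree of $O(\log m)$ levels. At every node, the diagonal $d$ separates a ``left'' subpolygon $P_\ell$ from a ``right'' subpolygon $P_r$, and I set up two segment trees: one on the lifespan intervals $[t_1,t_2]$ of the sites that ever lie in $P_\ell$ (whose Voronoi diagrams will be built in $P_r$), and symmetrically one for $P_r$. At each internal segment-tree node $v$ I store the implicit Voronoi diagram of the canonical subset $S_v$ in the opposite subpolygon, via Lemma~\ref{lem:reconstruct_v}. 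Since that algorithm needs the sites sorted by geodesic distance from the bottom endpoint of $d$, I would sort once per decomposition node in $O(n(\log n+\log m))$ time and extract sorted copies of every $S_v$ by linear filtering down the segment tree.

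For space and preprocessing, each site appears in $O(\log n)$ canonical subsets of a single segment tree and in $O(\log m)$ decomposition levels, so the total space is $O(n\log n\log m+m)$. Building the Voronoi diagram at $v$ costs $O(|S_v|\log^2 m)$ (Lemma~\ref{lem:reconstruct_v}), summing to $O(n\log n\log^2 m)$ per segment tree and $O(n\log n\log^3 m+m)$ in total. Under the assumption $|\mathcal{S}|=\Theta(n)$ (the general case reduces to this by the grouping trick already sketched in the excerpt), charging each update its share of the preprocessing yields the claimed amortized $O(\log n\log^3 m)$ per update; the actual online work an update performs is merely advancing a global time pointer.

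To answer a query at a point $q$ at the current time $t$, I walk root-to-leaf in the balanced decomposition. At each level $q$ lies in exactly one child, say $P_r$; the sites in $S_\ell$ alive at time $t$ are precisely the union of the $O(\log n)$ canonical subsets on the root-to-leaf path for $t$ in the segment tree built on the lifespans of the sites of $S_\ell$. By Lemma~\ref{lem:reconstruct_v} each such implicit Voronoi diagram yields a nearest-site candidate in $O(\log n\log m)$ time. Since geodesic nearest-neighbor search is a decomposable search problem~\cite{bentley1980decomposable}, taking the best candidate over the $O(\log n)$ canonical subsets at a level, and then over the $O(\log m)$ decomposition levels, gives the correct geodesic nearest neighbor in $O(\log^2 n\log^2 m)$ total time.

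The point that needs most care is the clean separation between offline preprocessing and online querying: knowing $\mathcal{S}$ in advance is exploited only to assign lifespans to canonical segment-tree intervals and to build all their Voronoi diagrams up-front, whereas a query supplies just $q$ and the current time $t$, which is used only to stab the segment trees in the standard way. Beyond this bookkeeping, the argument is a routine fusion of the balanced polygon decomposition, the interval-stabbing segment tree, and the implicit Voronoi diagram machinery of Section~\ref{sub:implicit_voronoi}.
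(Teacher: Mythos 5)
Your proposal matches the paper's own construction essentially step for step: a balanced hierarchical decomposition of $P$ with $O(\log m)$ levels, a segment tree on site lifespans at each decomposition node with an implicit Voronoi diagram (Lemma~\ref{lem:reconstruct_v}) stored at every canonical node, and a query that stabs $O(\log n)$ segment-tree nodes at each of $O(\log m)$ decomposition levels. The extra details you supply---sorting sites by distance to the diagonal endpoint once per decomposition node and filtering down the segment tree, and the observation that online updates merely advance a time pointer---are sensible fill-ins of points the paper leaves implicit, and the space and time accounting is the same as in the paper.
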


% \subparagraph*{Acknowledgements.}
% L.A. was supported by the Danish National Research Foundation under grant
% nr.~DNRF84. F.S. was supported by the Netherlands Organisation for Scientific
% Research (NWO) under project no. 612.001.651.  P.A. was supported by NSF under
% grants CCF-15-13816, CCF-15-46392, and IIS-14-08846, by ARO under grant
% W911NF-15-1-0408, and by grant 2012/229 from the U.S.--Israel Binational
% Science Foundation

%%
%% Bibliography
%%

%% Please use bibtex,

\bibliography{improved_geod_nn}

\newpage
\appendix

\section{Representing and Computing a Bisector}
\label{app:Bisector}

Assume without loss of generality that the diagonal $d$ that splits $P$ into
$P_\ell$ and $P_r$ is a vertical line-segment, and let $s$ and $t$ be two sites
in $S_\ell$. In this section we show that there is a representation of
$\br_{st} = b_{st} \cap P_r$, the part of the bisector $b_{st}$ that lies in
$P_r$, that allows efficient random access to the bisector vertices. Moreover,
we can obtain such a representation using a slightly modified version of the
two-point shortest path data structure of Guibas and
Hershberger~\cite{guibas1989query}.

Let $s$ be a site in $S_\ell$, and consider the shortest path tree $T$ rooted
at $s$. Let $e=\overline{uv}$ be an edge of $T$ for which $v$ is further away from $s$
than $u$. The half-line starting at $v$ that is colinear with, and extending
$e$ has its first intersection with the boundary $\partial P$ of $P$ in a point $w$. We refer to
the segment $\overline{vw}$ as the \emph{extension segment} of
$v$~\cite{aronov1989geodesic}. Let $E_s$ denote the set of all extension
segments of all vertices in $T$.

\begin{figure}[tb]
  \centering
  \includegraphics[page=2]{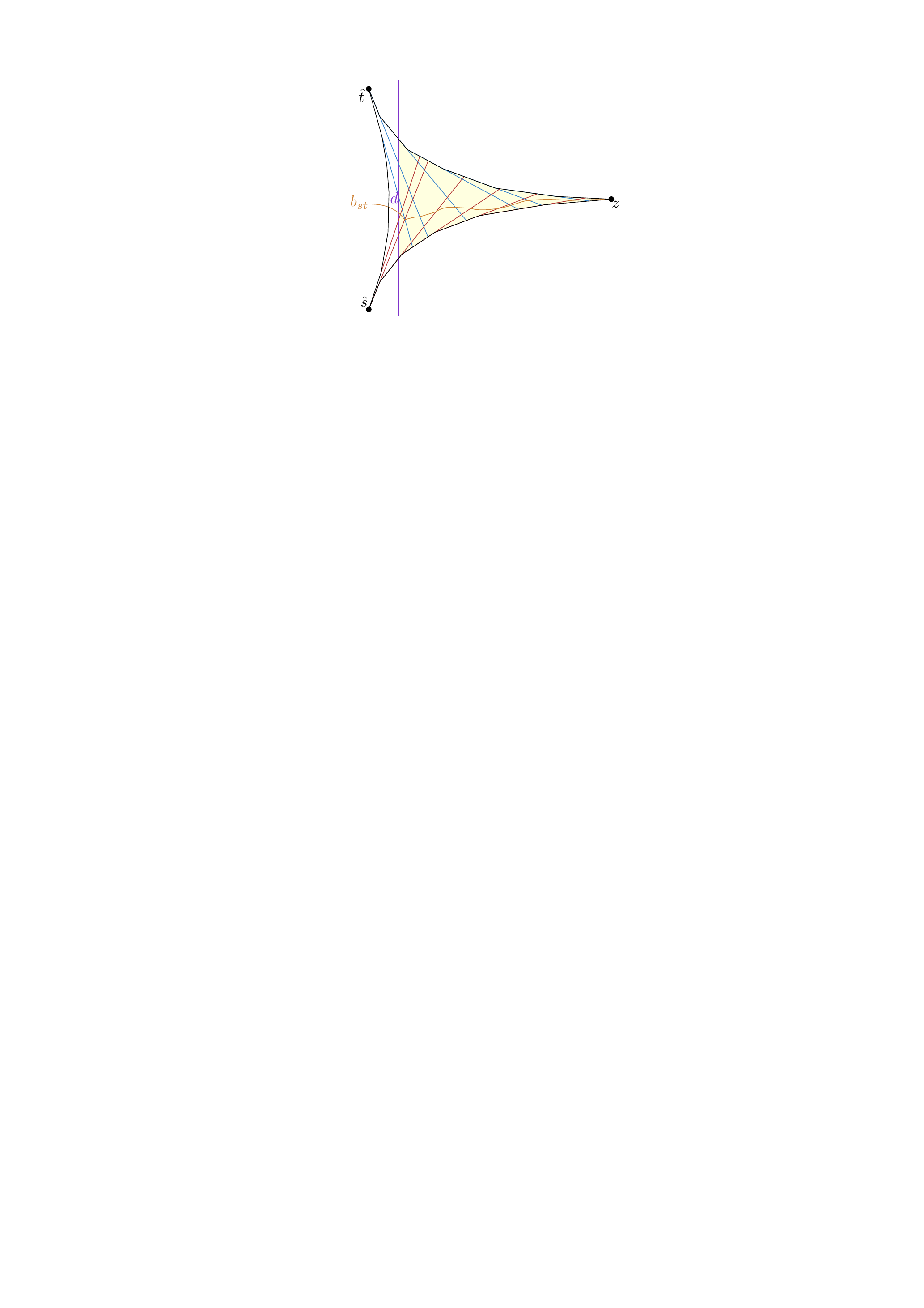}
  \quad
  \includegraphics[page=3]{bisector_in_pseudo-triangle}
  \caption{(a) The polygon $\P(p,s,t)$ bounded by the shortest paths between
    $s$, $p$, and $t$ is a pseudo-triangle $\hat{\P}(p,s,t)$ with polylines
    attached to its corners $\hat{s}$, $\hat{p}$, and $\hat{t}$. It contains
    the funnel $\F(p,s,t)$. (b) The clipped extension segments in $F_s^t$ are all
    pairwise disjoint, and end at the chain from $t$ to $z$.}
  \label{fig:bisector_in_funnel}
\end{figure}

Consider two sites $s,t \in S_\ell$, and its bisector $b_{st}$. We then have

\begin{lemma}[Lemma 3.22 of Aronov~\cite{aronov1989geodesic}]
  \label{lem:bisector_prop}
  The bisector $b_{st}$ is a smooth curve connecting two points on $\partial P$
  and having no other points in common with $\partial P$. It is the
  concatenation of $O(m)$ straight and hyperbolic arcs. The points along
  $b_{st}$ where adjacent pairs of these arcs meet, i.e.,~the vertices of
  $b_{st}$, are exactly the intersections of $b_{st}$ with the segments of $E_s$ or
  $E_t$.
\end{lemma}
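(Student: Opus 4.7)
My plan is to analyze $b_{st}$ cell by cell on a refinement of $P$ by the union $E_s \cup E_t$ of extension segments, on each cell of which both geodesic distance functions have a simple closed form.

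First I would observe that the shortest-path trees rooted at $s$ and at $t$ each have $O(m)$ reflex-vertex nodes, so $|E_s|, |E_t| = O(m)$, and together with $\partial P$ these segments subdivide $P$ into $O(m)$ open cells. On a single cell $f$ the combinatorial shape of $\geod(s,\cdot)$ is invariant: there is an \emph{anchor} $v_s(f) \in P$ (either $s$ itself or a reflex vertex of $P$) and a constant $\alpha_s(f)$ such that
\[
  \geodlen(s,p) = \alpha_s(f) + |v_s(f)\,p| \quad \text{for all } p \in f,
\]
and symmetrically for $t$. Consequently $b_{st} \cap f$ is the solution set of $|v_s(f)p| - |v_t(f)p| = \alpha_t(f) - \alpha_s(f)$, a branch of a hyperbola with foci $v_s(f), v_t(f)$, or a straight segment when $v_s(f) = v_t(f)$; the general position assumption rules out the degenerate case in which the whole cell lies on $b_{st}$. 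Summing over the $O(m)$ cells gives the $O(m)$ straight and hyperbolic arcs. By construction every vertex of $b_{st}$ lies on some segment of $E_s \cup E_t$, and conversely every crossing of such a segment changes at least one anchor and so switches arcs; this proves the characterization of vertices.

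Next I would upgrade the piecewise smooth concatenation to a globally smooth curve. The key lemma is that $\geodlen(s,\cdot)$ is $C^1$ in the interior of $P$: at a point $p \in f$ the gradient equals the unit vector from $v_s(f)$ toward $p$. When $p$ crosses an extension segment of a reflex vertex $u$ in the shortest-path tree of $s$, the anchor switches between $u$ and its parent in that tree, but on the crossing itself $p$, $u$, and that parent lie on a common line (namely the extension segment), so the two unit directions agree. Combining this with the symmetric statement for $t$, the function $\Phi := \geodlen(s,\cdot) - \geodlen(t,\cdot)$ is $C^1$ on the interior of $P$, and $b_{st}$ is its zero set. General position forces the two anchor-to-$p$ unit vectors to differ along $b_{st}$, so $\nabla \Phi$ is nonzero there and the implicit function theorem yields smoothness.

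For the global topology I would argue that a connected component of the zero set of a $C^1$ function with nonvanishing gradient cannot terminate in the interior of $P$. So each component of $b_{st}$ either terminates on $\partial P$ or is a simple closed curve bounding a disk $D \subset P$; in the latter case $\Phi$ would vanish on $\partial D$ and be of one sign in $D$, forcing an interior extremum of $\Phi$ where $\nabla \Phi = 0$, contradicting the previous step. Hence $b_{st}$ terminates at two points of $\partial P$, and general position (no vertex of $P$ lies on a bisector, no three point-or-vertex items are colinear) excludes any other contact of $b_{st}$ with $\partial P$. The main obstacle in this plan is the $C^1$-continuity of $\geodlen(s,\cdot)$ across $E_s$: it is precisely this that promotes a piecewise smooth polyline of hyperbolic arcs to a single smooth curve, and making it rigorous requires careful bookkeeping of which reflex vertex serves as the last anchor of $\geod(s,p)$ on either side of an extension-segment crossing.
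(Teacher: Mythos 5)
The paper does not prove this lemma; it is quoted as a black box from Aronov's 1989 work (his Lemma 3.22), so there is no in-paper argument to compare against. Taking your attempt on its own, the local picture is right (on a cell where both anchors are fixed, $b_{st}$ is a hyperbolic or straight arc; the geodesic distance is $C^1$ because the anchor and its parent are collinear with $p$ exactly on the extension segment), but two of the load-bearing global claims are asserted rather than proved, and they are exactly where the content of the lemma lives.

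First, the $O(m)$ arc count rests on the claim that $E_s \cup E_t$ together with $\partial P$ subdivides $P$ into $O(m)$ open cells. This is not obvious and is not a free consequence of $|E_s|, |E_t| = O(m)$. Segments within $E_s$ are pairwise non-crossing (they induce the linear-size shortest-path map from $s$), and likewise within $E_t$, but a segment of $E_s$ and a segment of $E_t$ can cross, and in general an overlay of two families of $O(m)$ segments has $\Theta(m^2)$ faces. Establishing that this particular overlay is linear is itself a nontrivial structural fact, roughly equivalent in strength to the lemma you are proving. The route Aronov actually takes, and which the paper quotes later as his Corollary 3.29, sidesteps cells entirely: each extension segment meets $b_{st}$ at most once, so $O(m)$ segments give $O(m)$ vertices directly. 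Your proof needs either that single-crossing property or a genuine proof of the $O(m)$ cell bound.

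Second, the Morse/implicit-function argument only gets you part of the topological conclusion. You correctly rule out closed loops and interior endpoints, so every connected component of $b_{st}$ is a smooth arc with both endpoints on $\partial P$. But the lemma says $b_{st}$ connects \emph{two} points of $\partial P$ and touches $\partial P$ \emph{nowhere else}, i.e., $b_{st}$ is a single arc. Your argument is consistent with $g(p) = \geodlen(s,p) - \geodlen(t,p)$ changing sign $2k$ times along $\partial P$ for any $k \ge 1$, giving $k$ disjoint arcs. Pinning this down to $k=1$ needs a separate structural ingredient; Aronov derives it from the fact that a geodesic from $s$ crosses $b_{st}$ at most once (his Lemma 3.28, the paper's Lemma~\ref{lem:bisector_sp_intersect}), which controls how the sign of $g$ can behave along the boundary. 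As a smaller point, ``general position forces the two anchor-to-$p$ unit vectors to differ'' is correct but compressed: if the gradients agreed at $p \in b_{st}$ the anchors $v_s, v_t, p$ would be collinear and one anchor would lie on $b_{st}$, contradicting general position assumption (i); it is worth spelling this out since $p$ itself is not a site or polygon vertex, so the assumption does not apply to it directly.
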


\begin{lemma}[Lemma~3.28 of Aronov~\cite{aronov1989geodesic}]
  \label{lem:bisector_sp_intersect}
  For any point $p \in P$, the bisector $b_{st}$ intersects the shortest path
  $\geod(s,p)$ in at most a single point.
\end{lemma}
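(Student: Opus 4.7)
The plan is to parameterize $\geod(s,p)$ by arc length and reduce the claim to a $1$-Lipschitz monotonicity argument. Let $\gamma\colon[0,L]\to P$ be an arc-length parameterization of $\geod(s,p)$, so that $\geodlen(s,\gamma(\tau))=\tau$. Define
\[
  \phi(\tau) \;:=\; \geodlen(s,\gamma(\tau)) - \geodlen(t,\gamma(\tau)) \;=\; \tau - \geodlen(t,\gamma(\tau)).
\]
The intersections of $b_{st}$ with $\geod(s,p)$ correspond exactly to the zeros of $\phi$ on $[0,L]$, so it suffices to prove that $\phi$ has at most one zero.

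First I would show that $\phi$ is non-decreasing. For any $0\le\tau\le\tau'\le L$, the restriction $\gamma|_{[\tau,\tau']}$ is a path inside $P$ from $\gamma(\tau)$ to $\gamma(\tau')$ of length $\tau'-\tau$, so $\geodlen(\gamma(\tau),\gamma(\tau'))\le \tau'-\tau$. The triangle inequality for the metric $\geodlen(t,\cdot)$ then gives $|\geodlen(t,\gamma(\tau'))-\geodlen(t,\gamma(\tau))|\le \tau'-\tau$, which rearranges to $\phi(\tau')-\phi(\tau)\ge 0$. Since $\phi(0)=-\geodlen(t,s)<0$ and $\phi$ is continuous and non-decreasing, its zero set is either empty, a single point, or a closed interval $[\tau_1,\tau_2]$; only the last possibility needs to be ruled out.

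The main obstacle, and the heart of the proof, is excluding such a non-degenerate flat interval. On $[\tau_1,\tau_2]$ one would have $\geodlen(t,\gamma(\tau))=\tau-\phi(\tau_1)$, i.e.\ the distance from $t$ grows at exactly unit rate, so $\gamma|_{[\tau_1,\tau_2]}$ must coincide with the terminal portion of a shortest path from $t$ as well as from $s$. Because shortest paths in $P$ are polygonal and locally straight between reflex vertices, the common sub-arc must be a single straight segment $\sigma$. If $\sigma$ were nondegenerate, then at the endpoint $\gamma(\tau_1)$ both $\geod(s,\cdot)$ and $\geod(t,\cdot)$ must either turn at a reflex vertex $v$ of $P$ or continue straight from $s$, respectively $t$; in either case the segment $\sigma$ lies on the line through a reflex vertex $v$ (or through $s$) and through a reflex vertex $v'$ (or through $t$). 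This places three of the points in $\{s,t,v,v'\}$ on a common line, contradicting the general-position assumption that no three vertices/sites are colinear, and also contradicting that $b_{st}$ meets $\partial P$ only at its two endpoints (Lemma~\ref{lem:bisector_prop}). Hence $\tau_1=\tau_2$, so $\phi$ has at most one zero, as required.
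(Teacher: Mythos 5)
The paper does not supply a proof of this lemma: it is quoted verbatim as Lemma~3.28 of Aronov~\cite{aronov1989geodesic}, so there is no in-paper argument to compare against. Judged on its own, your set-up is clean and correct: parameterizing $\geod(s,p)$ by arc length, the function $\phi(\tau)=\tau-\geodlen(t,\gamma(\tau))$ is $1$-Lipschitz and non-decreasing by the triangle inequality, so its zero set is empty, a point, or a closed interval, and the lemma reduces to ruling out a non-degenerate flat interval $[\tau_1,\tau_2]$. That reduction is exactly the right shape of argument and matches the way Aronov organizes the surrounding material.

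The gap is in the final paragraph. First, the claim that the common sub-arc is a single straight segment is asserted but not justified; the reason it has no bend is that the entire sub-arc lies on $b_{st}$, a bend could only occur at a reflex vertex of $P$, and by Lemma~\ref{lem:bisector_prop} (equivalently the general-position assumption (i)) $b_{st}$ contains no vertex of $P$. You need to say this. Second, and more seriously, the conclusion ``this places three of the points in $\{s,t,v,v'\}$ on a common line'' does not follow from what precedes it: the line through $\sigma$ is forced to contain only \emph{two} of those points (one from $\{s,v\}$, one from $\{t,v'\}$), so the general-position hypothesis about three colinear vertices/sites is not triggered, and the contradiction you announce is not established. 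A correct way to finish is a short case analysis at the left endpoint $\gamma(\tau_1)$: since $\gamma(\tau_1)\in b_{st}$ it cannot be a reflex vertex, so neither geodesic can bend there; but then both $\geod(s,\gamma(\tau_2))$ and $\geod(t,\gamma(\tau_2))$ pass straight through $\gamma(\tau_1)$ in the direction of $\sigma$, hence they coincide on a slightly longer terminal sub-arc, so $\phi(\tau_1-\eps)=0$ for small $\eps>0$, contradicting the choice of $\tau_1$ as the left endpoint of the zero set. With that replacement the proof goes through.
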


\begin{figure}[tb]
  \centering
  \includegraphics{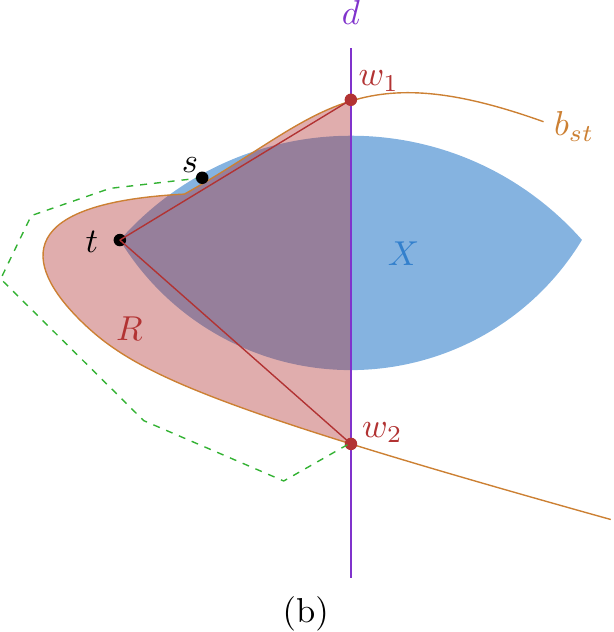}
  \caption{The
    geodesic distance from $t$ to $w_1$ and $w_2$ equals its Euclidean
    distance. The shortest path from $s$ to $w_2$ (dashed, green) has to go
    around $R$, and is thus strictly longer than $\|tw_2\|$.}
  \label{fig:intersection_diagonal}
\end{figure}

% \begin{figure}[tb]
%   \centering
%   \includegraphics[page=2]{bisector_in_pseudo-triangle}
%   \includegraphics{intersection_diagonal}
%   \caption{(a) The polygon $\P(p,s,t)$ bounded by the shortest paths between
%     $s$, $p$, and $t$ is a pseudo-triangle $\hat{\P}(p,s,t)$ with polylines
%     attached to its corners. It contains the funnel $\F(p,s,t)$. (b) The
%     geodesic distance from $t$ to $w_1$ and $w_2$ equals its Euclidean
%     distance. The shortest path from $s$ to $w_2$ (dashed, green) has to go
%     around $R$, and is thus strictly longer than $\|tw_2\|$.}
%   \label{fig:intersection_diagonal}
% \end{figure}

Consider a point $p$ on $\partial P_r$ and let $\P(p,s,t)$ be the polygon
defined by the shortest paths $\geod(s,p)$, $\geod(p,t)$, and
$\geod(t,s)$. This polygon $\P(p,s,t)$ is a pseudo-triangle $\hat{\P}(p,s,t)$
whose corners $\hat{s}$, $\hat{t}$, and $\hat{p}$, are connected to $s$, $t$,
and $p$ respectively, by arbitrary polylines.

Let $s'$ and $t'$ be the intersection points between $d$ and the geodesics
$\geod(p,s)$ and $\geod(p,t)$, respectively, and assume without loss of generality that $s'_y
\leq t'_y$. The restriction of $\P(p,s,t)$ to $P_r$ is a
\emph{funnel} $\F(p,s,t)$, bounded by $\geod(t',p)$, $\geod(p,s')$, and
$\overline{s't'}$. See Fig.~\ref{fig:bisector_in_funnel}(a). Note that % indeed
$\geod(s,t)$ is contained in $P_\ell$.

Clearly, if $b_{st}$ intersects $P_r$ then it intersects $d$. There is at most one such intersection point:

% It is easy to see that if $b_{st}$ intersects $P_r$ then it enters $P_r$
% through $d$. As we show next, $b_{st}$ intersects $d$ in exactly one point $w$.

\begin{lemma}
  \label{lem:bst_intersect_d}
  The bisector $b_{st}$ intersects $d$ in at most one point $w$.
\end{lemma}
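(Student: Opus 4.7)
The plan is a proof by contradiction. Suppose $b_{st}$ meets $d$ in two distinct points $w_1, w_2$, say with $w_1$ below $w_2$ along $d$. Since $b_{st}$ is a smooth curve (Lemma~\ref{lem:bisector_prop}), the portion of $b_{st}$ between $w_1$ and $w_2$ is a sub-arc $\beta$; by passing to two successive crossings of $d$ if necessary, I may arrange that $\beta$ lies entirely on one side of $d$, and by the left/right symmetry between $P_\ell$ and $P_r$ it suffices to handle the case $\beta \subseteq P_r$. Together with $\overline{w_1w_2}\subset d$, the arc $\beta$ bounds a closed region $R \subseteq P_r$. After exchanging the labels of $s$ and $t$ if necessary, I may assume that points in the interior of $R$ are geodesically strictly closer to $t$ than to $s$.

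I would next analyze the geodesic structure near $d$ from $t$. Since $t \in P_\ell$ and $w_1,w_2 \in d \subset \partial P_\ell$, both $\geod(t,w_1)$ and $\geod(t,w_2)$ lie entirely in $P_\ell$. Extending the terminal edges of these two shortest paths as straight segments across $d$ yields the initial portions of $\geod(t,q)$ for points $q \in R$ just beyond $w_1$ and $w_2$. Unfolding these terminal edges through $d$, I exhibit a (possibly degenerate) pseudo-triangle inside $R$ whose apex is the unfolded image of $t$ and whose base is $\overline{w_1w_2}$; inside this pseudo-triangle $\geodlen(t,\cdot)$ coincides with Euclidean distance from the unfolded copy of $t$, so in particular $\geodlen(t,w_i)=\|t w_i\|$ in the unfolded plane.

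The contradiction then comes from the position of $s$. Because $s \in P_\ell$ lies strictly on the opposite side of $\beta$ from $R$, the shortest path $\geod(s,w_2)$ cannot mimic the straight-segment crossing of $d$ that $t$ enjoys: it is forced to detour around the region $R$ (the region labelled $R$ in Figure~\ref{fig:intersection_diagonal}), and a standard triangle-inequality comparison gives $\geodlen(s,w_2) > \|t w_2\| = \geodlen(t,w_2)$. This strict inequality contradicts $w_2 \in b_{st}$.

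The main obstacle is making the unfolding/visibility picture rigorous: one has to argue that after reflecting the relevant funnels of $s$ and $t$ across $d$, the site $t$ really does reach both $w_1$ and $w_2$ by straight segments in the unfolded plane, while $s$ necessarily sits on the far side of the resulting triangle. This step relies on the transversality of $b_{st}$ with $d$ at $w_1$ and $w_2$, guaranteed by general position, together with the fact that the two shortest-path funnels from a single source share no self-crossings, so the unfolding of each funnel across $d$ is globally well-defined and compatible with the local straight-line extension of $\geod(t,w_i)$ into $R$.
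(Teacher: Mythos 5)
Your proof attempt contains a genuine and fatal gap, stemming from the claim ``by the left/right symmetry between $P_\ell$ and $P_r$ it suffices to handle the case $\beta \subseteq P_r$.'' There is no such symmetry here: both sites $s$ and $t$ lie in $P_\ell$, so the two sides of $d$ play completely different roles. The paper's proof works with the arc of $b_{st}$ between two consecutive crossings that lies in $P_\ell$ (so the bounded region $R$ is in $P_\ell$ and, by the Jordan-curve/side-of-bisector argument, contains $t$); then $\geod(t,w_i)=\overline{tw_i}$ is a straight segment inside the obstacle-free region $R$, while $s$ lies outside $R$ in a ``pocket'' of $D_1\cap D_2$, and $\geod(s,w_2)$ --- which also stays on the $s$-side of $b_{st}$ until $w_2$ and which must reach $w_2$ through $P_\ell$ --- is forced to wind around $R$. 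Once you instead place $R\subseteq P_r$, the final step collapses: $s\in P_\ell$, $w_2\in d$, and $P_\ell$ is geodesically convex, so $\geod(s,w_2)\subseteq \overline{P_\ell}$ never enters $P_r$ and certainly never has to ``detour around $R$.'' No triangle-inequality contradiction follows, and in fact nothing in your setup distinguishes $s$ from $t$ with respect to the region $R$ on the far side of $d$.

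The unfolding idea does not rescue this. The shortest-path funnels from $t$ to $w_1$ and to $w_2$ may differ combinatorially, so there is no single reflected copy $t'$ of $t$ that makes $\geodlen(t,\cdot)$ Euclidean over all of $R$; you acknowledge this but do not resolve it, and even granting it there is still no obstacle that forces $\geod(s,w_2)$ to be long. You also invoke ``the region labelled $R$ in Figure~\ref{fig:intersection_diagonal},'' but in that figure $R$ lies in $P_\ell$ and contains $t$, which is the opposite of your setup --- a sign that the case reduction went the wrong way. The fix is to take the arc of $b_{st}$ in $P_\ell$: then Lemma~\ref{lem:bisector_prop} guarantees that arc avoids $\partial P$, so $R$ is obstacle-free, and Lemma~\ref{lem:bisector_sp_intersect} pins $\geod(t,w_i)$ inside $R$ as straight segments and forces $\geod(s,w_2)$ around $R$, giving the contradiction directly --- no unfolding machinery is needed.
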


\begin{proof}
  Assume, by contradiction, that $b_{st}$ intersects $d$ in two points $w_1$
  and $w_2$, with $w_1$ above $w_2$. See
  Fig.~\ref{fig:intersection_diagonal}(b). Note that by
  Lemma~\ref{lem:bisector_prop}, $b_{st}$ cannot intersect $\partial P_\ell$,
  and thus $d$, in more than two points. Thus, the part of $b_{st}$ that lies
  in $P_\ell$ between $w_1$ and $w_2$ does not intersect $\partial
  P_\ell$. Observe that this implies that the region $R$ enclosed by this part
  of the curve, and the part of the diagonal from $w_1$ to $w_2$
  (i.e. $\overline{w_1w_2}$) is empty. Moreover, since the shortest paths from
  $t$ to $w_1$ and to $w_2$ intersect $b_{st}$ only once
  (Lemma~\ref{lem:bisector_sp_intersect}) region $R$ contains the shortest
  paths $\geod(t,w_1)=\overline{tw_1}$ and $\geod(t,w_2)=\overline{tw_2}$.

  Since $s$ has the same geodesic distance to $w_1$ and $w_2$ as $t$, $s$ must
  lie in the intersection $X$ of the disks $D_i$ with radius $\|tw_i\|$
  centered at $w_i$, for $i \in 1,2$. It now follows that $s$ lies in one of
  the connected sets, or ``pockets'', of $X \setminus R$. Assume without loss
  of generality that it lies in a pocket above $t$ (i.e.~$s_y > t_y$). See
  Fig.~\ref{fig:intersection_diagonal}. We now again use
  Lemma~\ref{lem:bisector_sp_intersect}, and get that $\geod(s,w_2)$ intersects
  $b_{st}$ only once, namely in $w_2$. It follows that the shortest path from
  $s$ to $w_2$ has to go around $R \ni t$, and thus has length strictly larger
  than $\|tw_2\|$. Contradiction.
\end{proof}

Since $b_{st}$ intersects $d$ only once (Lemma~\ref{lem:bst_intersect_d}), and
there is a point of $b_{st}$ on $\geod(s,t) \subset P_\ell$, it follows that
there is at most one point $z$ where $b_{st}$ intersects $\partial P_r$ the
outer boundary of $P_r$, i.e.~$\partial P_r \setminus d$. Observe that
therefore $z$ is a corner of the pseudo-triangle $\hat{\P}(z,s,t)$, and that
$\F(z,s,t) \subseteq \hat{\P}(z,s,t)$. Let $\br_{st}=b_{st} \cap P_r$ and
orient it from $w$ to $z$. We assign $b_{st}$ the same orientation.

\begin{lemma}
  \label{lem:bisector_in_funnel}
  (i) The bisector $b_{st}$ does not intersect $\geod(s,z)$ or $\geod(t,z)$ in
  any point other than $z$. (ii) The part of the bisector $b_{st}$ that lies in
  $P_r$ is contained in $\F(z,s,t)$.
\end{lemma}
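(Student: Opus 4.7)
My plan proceeds in two stages. Part~(i) is immediate from Lemma~\ref{lem:bisector_sp_intersect}: since $z \in b_{st} \cap \geod(s,z)$ and Lemma~\ref{lem:bisector_sp_intersect} bounds the intersection by at most one point, $z$ must be the unique intersection point; the same argument applies to $\geod(t,z)$.

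For part~(ii) the plan is to show that $\br_{st}$ meets $\partial \F(z,s,t)$ only at its endpoints $w$ and $z$, and that just after leaving $w$ the curve lies in the interior of $\F(z,s,t)$; these two facts together will force $\br_{st} \subseteq \overline{\F(z,s,t)}$. First I would establish that $w \in \overline{s't'}$. Since $s'$ lies on $\geod(s,z)$, the identity $\geodlen(s,s') + \geodlen(s',z) = \geodlen(s,z) = \geodlen(t,z) \leq \geodlen(t,s') + \geodlen(s',z)$ (using the triangle inequality) yields $\geodlen(s,s') \leq \geodlen(t,s')$, so $s'$ lies on the $s$-side of $b_{st}$ (or on $b_{st}$ itself); symmetrically, $t'$ lies on the $t$-side. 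Combined with Lemma~\ref{lem:bst_intersect_d}, which gives $b_{st} \cap d = \{w\}$, this forces $w$ to lie between $s'$ and $t'$ on $d$, so $w \in \overline{s't'} \subseteq \partial \F(z,s,t)$.

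Next I would decompose $\partial \F(z,s,t) = \overline{s't'} \cup \geod(s',z) \cup \geod(t',z)$ and bound the intersections with $\br_{st}$ piece by piece. Lemma~\ref{lem:bst_intersect_d} directly gives $\br_{st} \cap \overline{s't'} \subseteq \{w\}$. Using $\geod(s',z) \subseteq \geod(s,z)$ and $\geod(t',z) \subseteq \geod(t,z)$, part~(i) gives $\br_{st} \cap \geod(s',z) \subseteq \{z\}$ and $\br_{st} \cap \geod(t',z) \subseteq \{z\}$. Hence $\br_{st} \setminus \{w, z\}$ is a connected set disjoint from $\partial \F(z,s,t)$, and therefore lies entirely in $\F(z,s,t)$ or entirely in its $P_r$-complement. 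In general position $w$ is interior to $\overline{s't'}$, so the only piece of $\partial \F$ near $w$ is a segment of $d$, and a sufficiently small $P_r$-neighborhood of $w$ is fully contained in $\F(z,s,t)$; hence $\br_{st}$ enters the funnel immediately after $w$, yielding $\br_{st} \subseteq \overline{\F(z,s,t)}$.

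The main obstacle I anticipate is the local analysis at $w$ in the degenerate situation $w \in \{s', t'\}$, where the funnel is a wedge rather than a half-plane at $w$ and the ``small neighborhood of $w$ lies in $\F$'' step fails. In that case I would replace the neighborhood argument by a direction-based wedge argument that again invokes part~(i) to rule out $\br_{st}$ leaving through $\geod(s',z)$ (or $\geod(t',z)$), forcing the curve into the wedge on the funnel side. The paper's general-position assumptions should rule this degenerate case out entirely.
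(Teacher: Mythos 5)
Your proof is correct and follows essentially the same approach as the paper: part~(i) is Lemma~\ref{lem:bisector_sp_intersect} applied to $\geod(s,z)$ and $\geod(t,z)$, and part~(ii) rests on the observation that these two chains lie (strictly) on opposite sides of $b_{st}$ while $b_{st}\cap d=\{w\}$, so $\br_{st}$ cannot escape the funnel. You work out the containment more explicitly than the paper (verifying $w\in\overline{s't'}$ and invoking connectedness of $\br_{st}\setminus\{w,z\}$); one small remark is that the degenerate case $w\in\{s',t'\}$ you flag at the end is already excluded by part~(i) itself, since $s'$ is a point of $\geod(s,z)$ other than $z$ and hence cannot lie on $b_{st}$.
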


\begin{proof}
  By Lemma~\ref{lem:bisector_sp_intersect} the shortest path from $s$ to any
  point $v \in P$, so in particular to $z$, intersects $b_{st}$ in at most one
  point. Since, by definition, $z$ lies on $b_{st}$, the shortest path
  $\geod(s,z)$ does not intersect $b_{st}$ in any other point. The same applies
  for $\geod(t,z)$, thus proving (i). For (ii) we observe that any internal
  point of $\geod(s,z)$ is closer to $s$ than to $t$, and any internal point of
  $\geod(t,z)$ closer to $t$ than to $s$. Thus, $\geod(s,z)$ and $\geod(t,z)$
  must be separated by $b_{st}$. It follows that $b_{st} \cap P_r$ lies inside
  $\F(z,s,t)$.
\end{proof}

\begin{lemma}
  \label{lem:extension_segments}
  All vertices of $\br_{st}$ lie on extension segments of the vertices
  in the pseudo-triangle $\hat{\P}(z,s,t)$.
\end{lemma}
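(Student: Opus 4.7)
\proofcmd
The plan is to combine the structural description of bisector vertices from Lemma~\ref{lem:bisector_prop} with the containment statement of Lemma~\ref{lem:bisector_in_funnel}, and then exploit the funnel structure to pin down \emph{which} vertices of $T_s$ and $T_t$ can contribute an extension segment crossing $\br_{st}$.

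First I would recall that by Lemma~\ref{lem:bisector_prop} every vertex of $\br_{st}$ is an intersection of $b_{st}$ with some extension segment in $E_s \cup E_t$. Let $v \in \br_{st}$ be such a vertex, and suppose it lies on the extension segment of a vertex $u$ of $T_s$ (the case $u \in T_t$ is symmetric). By the definition of an extension segment, $u$ is the last polygon vertex on the shortest path $\geod(s,v)$: the path leaves $u$ along the edge of $T_s$ ending at $u$ and then goes straight to $v$.

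Next I would invoke Lemma~\ref{lem:bisector_in_funnel}(ii), which guarantees $v \in \F(z,s,t)$. The funnel $\F(z,s,t)$ is bounded by $\overline{s't'}$, by the subchain of $\geod(s,z)$ from $s'$ to $z$, and by the subchain of $\geod(t,z)$ from $t'$ to $z$. A standard property of funnels is that for any interior point $v \in \F(z,s,t)$, the shortest path $\geod(s,v)$ inside $P$ enters the funnel at $s'$, follows the chain $\geod(s',z)$ through some prefix of its reflex vertices, and then shoots straight to $v$. Consequently the last vertex $u$ on $\geod(s,v)$ is one of the reflex vertices on the chain $\geod(s',z) \subset \geod(s,z)$. (I would also note the boundary case where the path from $s$ goes directly to $v$ without bending in $P_r$; then $v$ would not be a bisector vertex coming from $E_s$, so this case can be excluded.)

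Finally I would observe that the chain $\geod(s,z)$ is precisely one side of the pseudo-triangle $\hat{\P}(z,s,t)$, so $u$ is a vertex of $\hat{\P}(z,s,t)$ and $v$ lies on its extension segment. The symmetric argument with $T_t$ and the chain $\geod(t,z)$ handles vertices of $\br_{st}$ coming from $E_t$. Together these cover every vertex of $\br_{st}$, proving the lemma. The only real subtlety, and the step I would write out with care, is the funnel property in the middle paragraph; the rest is bookkeeping on top of the already-established Lemmas~\ref{lem:bisector_prop} and~\ref{lem:bisector_in_funnel}.
\hfill\qed
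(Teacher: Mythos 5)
The central step of your argument --- the ``standard funnel property'' that for every interior point $v \in \F(z,s,t)$ the path $\geod(s,v)$ enters the funnel at $s'$ and then follows a prefix of the chain $\geod(s',z)$ --- does not hold here, and this is a genuine gap. The region $\F(z,s,t)$ is a pseudo-triangle clipped to $P_r$; if one insists on viewing it as a funnel, its apex is $z$, not $s$. There is no reason for a generic $v$ in this region to have $s'$ on its shortest path from $s$: the point where $\geod(s,v)$ crosses the diagonal $d$ varies with $v$, and the last bend of $\geod(s,v)$ need not lie on $\geod(s',z)$ at all. In fact your argument would force the conclusion that $u$ always lies on the chain $\geod(s,z)$, but this is too strong: the paper explicitly allows, and later Lemma~\ref{lem:extension_segments_F} explicitly uses, the fact that the relevant extension segments can also come from vertices of $\geod(s,t)$. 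So even the intended conclusion of your argument mislocates where $u$ can be.

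The paper's proof takes a different route, by contradiction. Suppose $u$, the source of the extension segment through $v$, lies outside $\hat{\P}(z,s,t)$. Since $v \in \F(z,s,t) \subset \hat{\P}(z,s,t)$ (Lemma~\ref{lem:bisector_in_funnel}), the shortest path from $s$ to $v$ must cross $\partial\hat{\P}(z,s,t)$ at some point $p$. If $p$ lies on $\geod(s,z) \cup \geod(s,t)$, then $s$ has two distinct shortest paths to $p$ (the prefix of $\geod(s,v)$ through $u$, and the prefix along the pseudo-triangle side), bounding a nonempty region of $P$ through which one of them can be shortcut --- a contradiction. If instead $p$ lies on $\geod(t,z)$, then $p$ is at least as close to $t$ as to $s$, so the path from $s$ through $u$ to $v$ crosses $b_{st}$ once before $p$ and again at $v$, contradicting Lemma~\ref{lem:bisector_sp_intersect}. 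This case analysis is what makes the argument go through; the direct ``funnel tracing'' you propose does not.
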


\begin{proof}
  Assume by contradiction that $v \neq w$ is a vertex of
  $\br_{st}=b_{st} \cap P_r$ that is not defined by an extension segment of a
  vertex in $\hat{\P}(z,s,t)$. Instead, let $e \in E_s$ be the extension
  segment containing $v$, and let $u \in P \setminus \hat{\P}(z,s,t)$ be the
  starting vertex of $e$. So $\geod(s,v)$ has $u$ as its last
  internal vertex.

  \begin{figure}[tb]
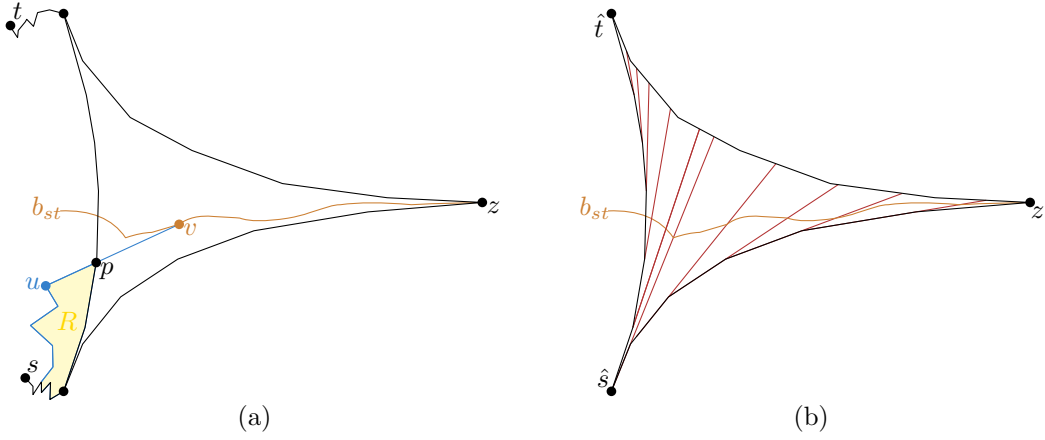

    \centering
    \includegraphics[page=4]{bisector_in_pseudo-triangle}
    \quad
    \includegraphics[page=3]{bisector_in_pseudo-triangle}
    \caption{(a) Point $v$ lies inside $\hat{\P}(z,s,t)$, so a shortest path
      from $s$ to $v$ that uses a vertex $u$ outside of $\hat{\P}(z,s,t)$
      intersects $\partial\hat{\P}(z,s,t)$ in a point $p$. This either
      yields two distinct shortest paths from $s$ to $p$, or requires the
      shortest path from $s$ to $p$ via $u$ to intersect $b_{st}$ twice. Both
      yield a contradiction. (b) The extension segments in $F_s^t$ are all pairwise disjoint,
    and end at the chain from $t$ to $z$.}
    \label{fig:extension_segments}
  \end{figure}

  By Lemma~\ref{lem:bisector_in_funnel}, $\br_{st}$ is contained in $\F(z,s,t)$
  and thus in $\hat{\P}(z,s,t)$. Hence, $v \in \hat{\P}(z,s,t)$. Since
  $v \in \hat{\P}(z,s,t)$, and $u \not\in \hat{\P}(z,s,t)$ the shortest path from
  $s$ to $v$ intersects $\partial \hat{\P}(z,s,t)$ in some point $p$. See
  Fig.~\ref{fig:extension_segments}(a). We then distinguish two cases: either $p$
  lies on $\geod(s,z) \cup \geod(s,t)$, or $p$ lies on $\geod(t,z)$.

  In the former case this means there are two distinct shortest paths between
  $s$ and $p$, that bound a region $R$ that is non-empty, that is, it has
  positive area. Note that this region exists, even if $u$ lies on the shortest
  path from $s$ to its corresponding corner $\hat{s}$ in $\hat{\P}(z,s,t)$ but
  not on $\hat{\P}(z,s,t)$ itself (i.e.
  $u \in \geod(s,t)\cup\geod(s,z) \setminus \hat{\P}(z,s,t)$. Since $P$ is a
  simple polygon, this region $R$ is empty of obstacles, and we can shortcut
  one of the paths to $p$. This contradicts that such a path is a shortest
  path.

  In the latter case the point $p$ lies on $\geod(t,z)$, which means that it is
  at least as close to $t$ as it is to $s$. Since $s$ is clearly closer to $s$
  than to $t$, this means that the shortest path from $s$ to $v$ (that visits
  $u$ and $p$) intersects $b_{st}$ somewhere between $s$ and $p$. Since it again
  intersects $b_{st}$ at $v$, we now have a contradiction: by
  Lemma~\ref{lem:bisector_sp_intersect}, any shortest path from $s$ to $v$
  intersects $b_{st}$ at most once. The lemma follows.
\end{proof}

Let $F_s^t = e_1,..,e_g$ denote the extension segments of the vertices of
$\geod(t,s)$ and $\geod(s,z)$, ordered along $\hat{\P}(z,s,t)$, and clipped to
$\hat{\P}(z,s,t)$. See Fig.~\ref{fig:bisector_in_funnel}(b). We define $F_t^s$
analogously.

\begin{lemma}
  \label{lem:extension_segments_F}
  All vertices of $\br_{st}$ lie on clipped extension segments in
  $F_s^t \cup F_t^s$.
\end{lemma}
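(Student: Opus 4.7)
My plan is to strengthen Lemma~\ref{lem:extension_segments}, which only places the generating vertex of each extension segment inside $\hat{\P}(z,s,t)$, by further showing that when the extension segment lies in $E_s$ its generating vertex must lie on the ``$s$-side'' $\geod(s,t) \cup \geod(s,z)$ of the pseudo-triangle (and symmetrically in $E_t$). I will argue by contradiction: suppose a vertex $p$ of $\br_{st}$ lies on a segment $e \in E_s$ generated by a vertex $v$ that sits on $\geod(t,z)$ rather than on the $s$-chains, and derive a contradiction by reusing the two patterns already present in the proof of Lemma~\ref{lem:extension_segments}.

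The geometric setup is as follows. Since $p \in \hat{\P}(z,s,t)$ by Lemma~\ref{lem:bisector_in_funnel} while $v$ sits on $\partial \hat{\P}(z,s,t)$, the extension $e$ enters the pseudo-triangle at $v$. As $e$ is the straight continuation past $v$ of the last edge $uv$ of $\geod(s,v)$, the parent $u$ must lie locally outside $\hat{\P}$. Consequently $\geod(s,v)$ leaves $\hat{\P}$ before reaching $v$; let $q$ denote the last point of $\geod(s,v) \setminus \{v\}$ on $\partial\hat{\P}$, so that $\geod(s,v)$ runs outside $\hat{\P}$ from $q$ to $v$.

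The core of the proof is a case analysis on which chain of $\partial\hat{\P}$ contains $q$. If $q \in \geod(s,t) \cup \geod(s,z)$, I recycle the first argument of Lemma~\ref{lem:extension_segments}: the prefix of $\geod(s,v)$ from $s$ to $q$ and the corresponding subpath of $\geod(s,t)$ (resp.~$\geod(s,z)$) are two distinct shortest paths from $s$ to $q$ bounding a nonempty region, allowing a shortcut and thus a contradiction. If instead $q \in \geod(t,z)$, then $z \in b_{st}$ together with the triangle inequality gives $\geodlen(s,q) \geq \geodlen(t,z) - \geodlen(q,z) = \geodlen(t,q)$, so $q$ is at least as close to $t$ as to $s$, while $s$ itself is strictly closer to $s$; the path $\geod(s,p) = \geod(s,v) \cup \overline{vp}$ therefore crosses $b_{st}$ once in its prefix from $s$ to $q$ and again at its endpoint $p$, contradicting Lemma~\ref{lem:bisector_sp_intersect}. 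The case $e \in E_t$ is entirely symmetric, placing $v$ on $\geod(s,t) \cup \geod(t,z)$ and the clipped extension in $F_t^s$.

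The main obstacle will be making the ``extension into $\hat{\P}$ forces $u$ outside $\hat{\P}$'' step fully rigorous: $v$ is a reflex vertex of $P$ lying on the geodesic chain $\geod(t,z)$, and the local picture of $P$ meeting $\hat{\P}$ at $v$ requires care about which side of the chain the extension leaves. Once that is settled, both branches of the case analysis reduce to contradiction patterns already established in the proof of Lemma~\ref{lem:extension_segments}, so no genuinely new geometric ingredient is required.
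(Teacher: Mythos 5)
Your approach is genuinely different from the paper's, and it introduces gaps that the paper avoids. The paper's proof of this lemma is very short: it combines Lemma~\ref{lem:extension_segments} (bisector vertices lie on extension segments of vertices of $\hat{\P}(z,s,t)$) with Lemma~\ref{lem:bisector_in_funnel} (bisector vertices lie in $\F(z,s,t)\subseteq\hat{\P}(z,s,t)$), and then simply \emph{asserts} the geometric fact that for $v\in\geod(t,z)$ the extension segment of $v$ with respect to $s$ is disjoint from $\hat{\P}(z,s,t)$. The reason this is true is that $\geod(s,v)\subseteq\P(z,s,t)$ by geodesic convexity of the hull of $\{s,t,z\}$, and $\P(z,s,t)$ near a non-corner $v$ on $\geod(t,z)$ coincides with $\hat{\P}(z,s,t)$; hence the last edge $\overline{uv}$ of $\geod(s,v)$ enters $v$ from the interior side and the extension leaves $\hat{\P}$ at $v$ and never re-enters.

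Your contradiction argument instead re-runs the two patterns from the proof of Lemma~\ref{lem:extension_segments}, but it does not quite fit. First, the step ``the parent $u$ must lie locally outside $\hat{\P}$'' \emph{already} contradicts $\geod(s,v)\subseteq\P(z,s,t)$: near $v$ (which is not a corner) $\P(z,s,t)$ and $\hat{\P}$ agree, so $u$ cannot be outside $\hat{\P}$. You could terminate there, and that termination is essentially the paper's unproved assertion. Second, the subsequent setup --- ``let $q$ denote the last point of $\geod(s,v)\setminus\{v\}$ on $\partial\hat{\P}$, so that $\geod(s,v)$ runs outside $\hat{\P}$ from $q$ to $v$'' --- paints an incorrect picture: the only parts of $\P(z,s,t)$ outside $\hat{\P}$ are the three one-dimensional polylines at the corners, and $v$ is not a corner, so $\geod(s,v)$ cannot ``run outside $\hat{\P}$'' on a subarc ending at $v$. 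Third, in your first case ($q\in\geod(s,t)\cup\geod(s,z)$) the claim that the prefix of $\geod(s,v)$ to $q$ and the subpath of $\geod(s,t)$ (resp.\ $\geod(s,z)$) to $q$ are ``two distinct shortest paths'' is not automatic --- by uniqueness of geodesics in a simple polygon there is only one shortest path from $s$ to $q$, so the two prefixes coincide and the region you want to shortcut may be empty. (The paper itself handles the analogous subtlety in Lemma~\ref{lem:extension_segments} by a separate argument, which you would also need here.) So while the second case ($q\in\geod(t,z)$) is argued correctly, the overall proposal has a wrong geometric picture in its middle step and a gap in its first case; the intended conclusion is true, but the short route via geodesic convexity (which is what the paper implicitly uses) is both simpler and actually closes the argument.
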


\begin{proof}
  By Lemma~\ref{lem:extension_segments} all vertices of $b_{st}$ in $P_r$ lie
  on $\hat{\P}(z,s,t)$. Furthermore, by Lemma~\ref{lem:bisector_in_funnel} all
  these vertices lie in $\F(z,s,t)$. Hence, it suffices to clip all extension
  segments to $\hat{\P}(z,s,t)$ (or even $\F(z,s,t)$). For all vertices on
  $\geod(t,z)$ the extension segments (with respect to $s$) are disjoint from
  $\hat{\P}(z,s,t)$. It follows that for site $s$, only the clipped extension
  segments from vertices on $\geod(s,t)$ and $\geod(s,z)$ are
  relevant. Analogously, for site $t$, only the clipped extension segments on
  $\geod(s,t)$ and $\geod(t,z)$ are relevant.
\end{proof}

\begin{observation}
  \label{obs:extension_segments_pw_disjoint}
  The extension segments in $F_s^t$ are all pairwise disjoint, start on
  $\geod(s,t)$ or $\geod(s,z)$, and end on $\geod(t,z)$.
\end{observation}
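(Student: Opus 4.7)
My plan is to split the observation into its three assertions and dispatch each by invoking standard structural properties of the shortest path tree $T$ rooted at $s$.

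The starting-point claim is essentially by definition: each element of $F_s^t$ is, by construction, the extension segment of a vertex $v$ on $\geod(s,t) \cup \geod(s,z)$, and such an extension emanates from $v$ itself. So the starting endpoint of each segment lies on $\geod(s,t)$ or $\geod(s,z)$. The only subtlety is the shared prefix of $\geod(s,t)$ and $\geod(s,z)$ from $s$ up to the corner $\hat{s}$ of the pseudo-triangle; those vertices lie on the ``tail'' attached to $\hat{s}$ and not on a side of $\hat{\P}(z,s,t)$, so after clipping to $\hat{\P}(z,s,t)$ their extensions either contribute the trivial empty segment or enter through $\hat{s}$ and are handled identically to the side case.

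For the ``end on $\geod(t,z)$'' and pairwise-disjointness claims I would invoke the classical structural fact that the shortest path tree $T$ rooted at $s$, together with the full set $E_s$ of extension segments, forms a planar subdivision of $P$ (this is essentially the visibility decomposition from $s$, cf.~Guibas--Hershberger~\cite{guibas1989query}, Aronov~\cite{aronov1989geodesic}). In particular, (a) no two segments of $E_s$ cross in the interior of $P$, and (b) no segment of $E_s$ crosses an edge of $T$ except at its own starting vertex. Both facts are proved by the usual shortcut argument: a crossing would yield two distinct $s$-to-$p$ paths enclosing a non-empty obstacle-free region, contradicting optimality of shortest paths in a simple polygon (exactly the type of argument used in the proof of Lemma~\ref{lem:extension_segments}).

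Now $\geod(s,t)$ and $\geod(s,z)$ are both paths in $T$, so by (b) a clipped extension segment in $F_s^t$ starting at a vertex $v$ on one of them cannot meet $\geod(s,t) \cup \geod(s,z)$ anywhere other than at $v$. Since the clipped segment is contained in $\hat{\P}(z,s,t)$, whose boundary consists of the three chains $\geod(s,t)$, $\geod(s,z)$, and $\geod(t,z)$, the only way for it to exit the pseudo-triangle is through $\geod(t,z)$; this gives the ``end'' claim. Pairwise disjointness of the segments in $F_s^t$ then follows immediately from (a), since clipping preserves disjointness. The main obstacle is really only a bookkeeping one: making sure the visibility-subdivision property is stated carefully enough to cover the case where $\geod(s,t)$ and $\geod(s,z)$ share a prefix (so that the relevant pieces of $T$ near $\hat{s}$ really are on the boundary of the clipped region), but once this is in place the three assertions of the observation drop out with no further computation.
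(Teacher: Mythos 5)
The paper states this as an Observation and gives no proof, so there is nothing to compare against line by line; I will assess the argument on its own terms. Your proof is essentially correct and uses the tools one would expect: the clipped segments start at their defining vertices by construction, and the disjointness and ``exit only through $\geod(t,z)$'' claims follow from the fact that the shortest-path map (visibility decomposition) from $s$ is a planar subdivision whose internal edges are precisely the extension segments, together with the fact that extension segments of $E_s$ do not cross edges of the shortest-path tree $T_s$. Combined with the observation that the three chains bounding $\hat{\P}(z,s,t)$ are $\geod(\hat{s},\hat{t})\subset\geod(s,t)$, $\geod(z,\hat{s})\subset\geod(s,z)$ (both in $T_s$) and $\geod(\hat{t},z)\subset\geod(t,z)$ (the only side not in $T_s$), your elimination argument gives the endpoint claim correctly.

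One small imprecision is worth flagging in the handling of the shared prefix $\geod(s,\hat{s})$: you allow the possibility that an extension segment of a tail vertex ``enters through $\hat{s}$.'' By the general-position assumption (ii) no three vertices are collinear, so the extension of $\overline{uv}$ past a tail vertex $v$ cannot pass through $\hat{s}$. In fact, the whole pseudo-triangle $\hat{\P}(z,s,t)$ lies in the shadow of every tail vertex $v$ (the region where $\geod(s,\cdot)$ passes through $v$), whose boundary \emph{is} the extension segment of $v$; hence those extension segments are disjoint from the interior of $\hat{\P}(z,s,t)$ and are discarded entirely by the clipping. So the correct dichotomy is simply ``tail vertices contribute nothing after clipping,'' with no alternative case. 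This does not damage your proof, since the case you admit is vacuous, but the justification you give for it is not quite right. Also note that your fact (b), while true, is not an immediate consequence of ``the subdivision is planar'' because the $T_s$ edges are not edges of the shortest-path map; it requires the separate shortcut/uniqueness-of-shortest-paths argument that you correctly sketch.
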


% \begin{proof}
%   Observe that an extension segment in $F_s^t$ cannot start and on the
%   same chain of $\P(z,s,t)$, as each such chain is concave with respect to
%   $\P(z,s,t)$. Furthermore, the chains $\geod(s,t)$ and $\geod(s,z)$ are
%   mutually tangent (in their shared corner), so the extension segments
%   extending from an edge in $\geod(s,z)$ cannot intersect
%   $\geod(s,t)$. \frank{TODO: this is kind of proof by picture, but it also
%     seems obvious}. See Fig.~\ref{fig:pseudo-triangle_slopes}.
% \end{proof}

\begin{figure}[tb]
  \centering
  \includegraphics[page=3]{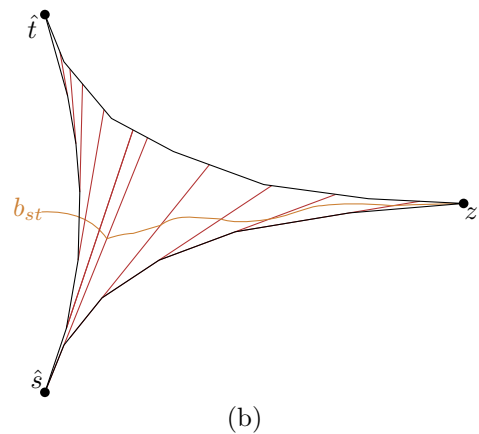}
  \caption{The extension segments in $F_s^t$ are all pairwise disjoint,
    and end at the chain from $t$ to $z$.}
  \label{fig:pseudo-triangle_slopes}
\end{figure}

By Corollary~3.29 of Aronov~\cite{aronov1989geodesic} every (clipped) extension
segment in $r \in F_s^t \cup F_t^s$ intersects $b_{st}$ (and thus $\br_{st}$)
at most once. Therefore, every such extension segment $r$ splits the bisector
in two. Together with Lemma~\ref{lem:extension_segments_F} and
Observation~\ref{obs:extension_segments_pw_disjoint} this now give us
sufficient information to efficiently binary search among the vertices of
$\br_{st}$ when we have (efficient) access to $\hat{\P}(z,s,t)$.

\begin{lemma}
  \label{lem:intersect_bst}
  Consider extension segments $e_i$ and $e_j$, with $i \leq j$, in
  $F_s^t$. If $e_i$ intersects $b_{st}$ then so does $e_j$.
  % Let $e_i$ and $e_j$, with $i \leq j$, be extension segments in
  % $F_s^t$. If $e_i$ intersects $b_{st}$ then so does $e_j$.
\end{lemma}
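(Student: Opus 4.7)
My plan is to reduce the claim to a sidedness analysis on $\partial\hat{\P}(z,s,t)$ together with a boundary-traversal argument. By Observation~\ref{obs:extension_segments_pw_disjoint}, each $e_i \in F_s^t$ is a simple chord of $\hat{\P}(z,s,t)$ whose starting point $a_i$ lies on $\geod(s,t) \cup \geod(s,z)$ and whose ending point $b_i$ lies on $\geod(t,z)$. By Corollary~3.29 of Aronov~\cite{aronov1989geodesic}, $e_i$ meets $b_{st}$ in at most one point. Using Lemma~\ref{lem:bisector_sp_intersect} applied to $\geod(s,t)$ (giving a unique crossing $m := b_{st} \cap \geod(s,t)$) together with Lemma~\ref{lem:bisector_in_funnel}(i), the restriction $b_{st} \cap \hat{\P}(z,s,t)$ is a simple arc from $m$ to $z$, and therefore splits the topological disk $\hat{\P}(z,s,t)$ into two components. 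Consequently, $e_i$ crosses $b_{st}$ if and only if $a_i$ and $b_i$ lie in distinct components, i.e., on opposite sides of $b_{st}$.

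Next I would classify each interior sub-arc of $\partial\hat{\P}(z,s,t)$ by which side of $b_{st}$ it lies on. Since $b_{st}$ meets $\partial\hat{\P}(z,s,t)$ only at $m$ and $z$, continuity combined with the defining property of a bisector (interior points of $\geod(s,v)$ are strictly closer to $s$, interior points of $\geod(t,v)$ are strictly closer to $t$) yields the following: every interior point of $\geod(t,z)$ lies on the $t$-side of $b_{st}$; every interior point of $\geod(s,z)$ lies on the $s$-side; the open sub-arc of $\geod(s,t)$ between $m$ and $s$ lies on the $s$-side; and the open sub-arc between $m$ and $t$ lies on the $t$-side. Since $b_i$ always lies on the $t$-side, $e_i$ crosses $b_{st}$ exactly when $a_i$ lies on the $s$-side, i.e., $a_i \in \geod(s,z)$ or $a_i$ lies on $\geod(s,t)$ between $m$ and $s$.

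Finally I would invoke the index ordering. The extensions $e_1, \ldots, e_g$ are listed by the order of their starting points along $\partial\hat{\P}(z,s,t)$. Fixing the traversal direction to go from $t$ along $\geod(t,s)$, through $s$, along $\geod(s,z)$ to $z$, the $t$-side starting points (those on the sub-arc of $\geod(s,t)$ from $t$ to $m$) are all visited strictly before the $s$-side starting points (those on the sub-arc from $m$ to $s$, followed by all starting points on $\geod(s,z)$). Hence $\{i : e_i \text{ intersects } b_{st}\}$ is an upward-closed suffix of $\{1,\ldots,g\}$, which is exactly the statement of the lemma.

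I expect the main obstacle to be the sidedness classification on $\partial\hat{\P}(z,s,t)$, which reduces to pinning down that $b_{st}$ touches the boundary only at $m$ and $z$; once that is established via Lemmas~\ref{lem:bisector_sp_intersect} and~\ref{lem:bisector_in_funnel}, the remainder is essentially an intermediate-value argument along the chosen boundary traversal.
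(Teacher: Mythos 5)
Your proposal is correct and follows essentially the same approach as the paper's proof: establish via Lemmas~\ref{lem:bisector_sp_intersect} and~\ref{lem:bisector_in_funnel} that $b_{st}$ meets $\partial\hat{\P}(z,s,t)$ only at $z$ and one point on $\geod(s,t)$, partition the pseudo-triangle into an $s$-side and a $t$-side, note that every $e_i \in F_s^t$ ends on the $t$-side, and then use the boundary ordering to conclude the $s$-side starting points form a suffix. The paper phrases the final step as monotone decrease of distance to $s$ along $\geod(t,s)$ while you pin down the crossing point $m$ and traverse the boundary, but these are the same argument.
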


\begin{proof}
  It follows from Lemma~\ref{lem:bisector_in_funnel} that $b_{st}$ intersects
  $\partial \hat{\P}(z,s,t)$ only in $z$ and in a point $w'$ on $\geod(s,t)$. Thus,
  $b_{st}$ partitions $\hat{\P}(z,s,t)$ into an $s$-side, containing $\geod(s,z)$,
  and a $t$-side, containing $\geod(t,z)$. Since the extension segments also
  partition $\hat{\P}(z,s,t)$ it then follows that the extension segments in
  $F_s^t$ intersect $b_{st}$ if and only if their starting point lies in
  the $s$-side and their ending point lies in the $t$-side. By
  Observation~\ref{obs:extension_segments_pw_disjoint} all segments in
  $F_s^t$ end on $\geod(t,z)$. Hence, they end on the $t$-side. We
  finish the proof by showing that if $e_i \in F_s^t$ starts on the
  $s$-side, so must $e_j \in F_s^t$, with $j \geq i$.

  The extension segments of vertices in $\geod(s,z)$ trivially have their start
  point on the $s$-side. It thus follows that they all intersect $b_{st}$. For
  the extension segments of vertices in $\geod(s,t)$ the ordering is such that
  the distance to $s$ is monotonically decreasing. Hence, if $e_i$ intersects
  $b_{st}$, and thus starts on the $s$-side, so does $e_j$, with $j \geq i$.
\end{proof}

\begin{lemma}
  \label{lem:fan_intersects}
  Consider extension segments $e_i$ and $e_j$, with $i \leq j$, in $F_s^t$. If
  $e_i$ intersects $\br_{st}$ then so does $e_j$.
\end{lemma}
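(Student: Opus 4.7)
The plan is to bootstrap Lemma~\ref{lem:intersect_bst}: applying it to $e_j$ immediately yields a point $p_j = e_j \cap b_{st}$, so all that remains is to show $p_j \in P_r$, i.e.\ $p_j \in \br_{st}$. To locate $p_j$ relative to $w = b_{st} \cap d$, I would use the geometry inside the pseudo-triangle $\hat{\P}(z,s,t)$: by Lemma~\ref{lem:bisector_in_funnel}(i), $b_{st}$ meets $\partial\hat{\P}(z,s,t)$ only at $z$ and at one point $w' \in \geod(s,t)$, so inside $\hat{\P}$ it is a single simple arc from $w'$ to $z$; by Lemma~\ref{lem:bst_intersect_d}, this arc crosses the diagonal $d$ exactly once, at $w$, so $\br_{st}$ is precisely the sub-arc from $w$ to $z$, which lies entirely in $P_r$.

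The core step is then a monotonicity claim: oriented from $w'$ to $z$, the arc $b_{st}$ crosses the extension segments in $F_s^t$ in the same order in which they appear. The ingredients are Observation~\ref{obs:extension_segments_pw_disjoint}, which tells us that the $e_k$ are pairwise disjoint chords of $\hat{\P}(z,s,t)$ connecting the chain $\geod(s,t) \cup \geod(s,z)$ to $\geod(t,z)$, together with Corollary~3.29 of Aronov~\cite{aronov1989geodesic}, which gives that each $e_k$ meets $b_{st}$ at most once. These two facts imply that the $e_k$ decompose $\hat{\P}(z,s,t)$ into a linear sequence of simply connected cells, and a simple arc from one boundary point to another traverses these cells in order. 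Given this monotonicity, if $p_i$ already lies past $w$ on the arc (because $e_i$ meets $\br_{st}$ by hypothesis), then $p_j$ lies even further toward $z$ for every $j \geq i$, hence past $w$ and therefore in $P_r$, so $p_j \in \br_{st}$.

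The main obstacle is pinning down the monotonicity precisely, namely reconciling the convention by which $F_s^t$ is ordered ``along $\hat{\P}(z,s,t)$'' (sweeping the chain $\geod(s,t) \cup \geod(s,z)$ from near $t$ toward $s$ and on to $z$, as used in the proof of Lemma~\ref{lem:intersect_bst}) with the orientation of $b_{st}$ from $w'$ to $z$. Once that matching is made explicit, the fan structure of pairwise disjoint chords inside a simply connected pseudo-triangle, combined with the at-most-one-crossing property, reduces the monotonicity to a routine Jordan-curve observation, completing the proof.
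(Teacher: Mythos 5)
Your proposal is correct and follows essentially the same route as the paper's proof: invoke Lemma~\ref{lem:intersect_bst} to get an intersection of $e_j$ with $b_{st}$, use the pairwise disjointness of the extension segments to conclude that this intersection lies on the sub-arc of $b_{st}$ between $e_i \cap b_{st}$ and $z$, and then use Lemma~\ref{lem:bst_intersect_d} (a single crossing of $d$) to conclude that this sub-arc, and hence the intersection point, stays in $P_r$. Your write-up is more explicit about the ordering/monotonicity step that the paper states in one sentence, but it is the same argument.
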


\begin{proof}
  From Lemma~\ref{lem:intersect_bst} it follows that if $e_i$ intersects
  $b_{st}$ then so does $e_j$, with $j \geq i$. So, we only have to show that
  if $e_i$ intersects $b_{st}$ in $P_r$ then so does $e_j$. Since the
  extension segments in $F_s^t$ are pairwise disjoint, it follows that
  if $e_i$ intersects $b_{st}$, say in point $p$ then $e_j$, with $j \geq i$
  must intersect $b_{st}$ on the subcurve between $p$ and $z$. Since $b_{st}$
  intersects $d$ at most once (Lemma~\ref{lem:bst_intersect_d}), and
  $p \in P_r$, it follows that this part of the curve, and thus its
  intersection with $e_j$, also lies in $P_r$.
\end{proof}

\begin{corollary}
  \label{cor:suffix}
  The segments in $F_s^t$ that define a vertex in $\br_{st}$ form a suffix
  $G_s^t$ of $F_s^t$. That is, there is an index $a$ such that
  $G_s^t=e_a,..,e_g$ is exactly the set of extension segments in $F_s^t$ that
  define a vertex of $\br_{st}$.
\end{corollary}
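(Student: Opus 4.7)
The plan is to derive the corollary directly from Lemma~\ref{lem:fan_intersects}, together with Lemma~\ref{lem:extension_segments_F}, essentially as a restatement of the monotonicity already established.

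First, I would observe that an extension segment $e_i \in F_s^t$ ``defines a vertex of $\br_{st}$'' precisely when $e_i$ intersects $\br_{st}$: by Lemma~\ref{lem:extension_segments_F} every vertex of $\br_{st}$ lies on some clipped extension segment in $F_s^t \cup F_t^s$, and conversely, by Lemma~\ref{lem:bisector_prop} the vertices of $b_{st}$ are exactly the intersections with extension segments from $E_s$ and $E_t$; intersecting $F_s^t$ (which is clipped to $\hat{\P}(z,s,t)$) with $\br_{st}$ therefore produces exactly the vertices of $\br_{st}$ that come from the $s$-side.

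Next, Lemma~\ref{lem:fan_intersects} tells us that the set $I = \{ i : e_i \text{ intersects } \br_{st}\}$ is upward-closed: whenever $i \in I$ and $j \geq i$, we have $j \in I$. Hence $I$ is either empty or of the form $\{a, a+1, \ldots, g\}$ for some smallest index $a$. Setting $G_s^t = e_a, \ldots, e_g$ (and $G_s^t = \emptyset$ in the degenerate case that no segment of $F_s^t$ meets $\br_{st}$) immediately yields the claimed suffix structure.

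There is essentially no obstacle here: all of the geometric content, namely the pairwise disjointness of the segments in $F_s^t$, the single crossing of $b_{st}$ by each extension segment, and the monotonicity along the chain $\geod(s,t) \cup \geod(s,z)$, has already been absorbed into Lemma~\ref{lem:fan_intersects}. The corollary is simply the reformulation of that monotonicity as ``the intersecting segments form a suffix,'' which is the convenient form needed for the binary-search procedure used to access vertices of $\br_{st}$.
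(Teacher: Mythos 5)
Your proof is correct and matches the paper's (unstated but clearly intended) derivation: the corollary is an immediate consequence of Lemma~\ref{lem:fan_intersects}, since an upward-closed set of indices is a suffix, with the equivalence between ``intersects $\br_{st}$'' and ``defines a vertex of $\br_{st}$'' supplied by Lemma~\ref{lem:bisector_prop} and Lemma~\ref{lem:extension_segments_F}. Your explicit handling of the empty case is a small but reasonable addition the paper leaves implicit.
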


When we have $\hat{\P}(z,s,t)$ and the point $w$, we can find the value $a$
from Corollary~\ref{cor:suffix} in $O(\log m)$ time as follows. We binary
search along $\geod(t,s)$ to find the first vertex $u_{a'}$ such that $u_{a'}$
is closer to $s$ then to $t$. For all vertices after $u_{a'}$, its extension
segment intersects $b_{st}$ in $\hat{\P}(z,s,t)$. To find the first segment
that intersects $b_{st}$ in $P_r$, we find the first index $a \geq a'$
for which the extension segment intersects $d$ below $w$. In total this takes
$O(\log m)$ time.
\begin{wrapfigure}[9]{r}{0.35\textwidth}
  \centering
  \vspace{-.75\baselineskip}
  \includegraphics{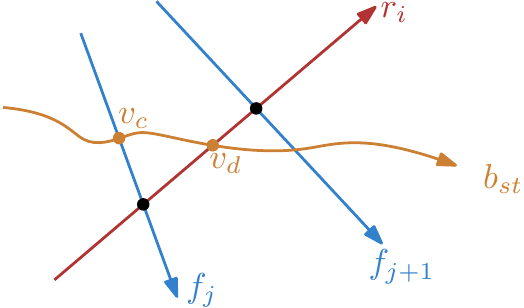}
  \caption{The bisector vertex $v_c$ on $f_j$ occurs before $v_d$ on $r_i$.}
  \label{fig:intersection_order}
\end{wrapfigure}

Let $G_s^t = r_1,..,r_{g'} = e_a,..,e_g$ be the ordered set of extension
segments that intersect $\br_{st}$. Similarly, let $G_t^s=f_1,..,f_{h'}$ be the
suffix of extension segments from $F_t^s$ that define a vertex of $\br_{st}$.

\begin{observation}
  \label{obs:intersection_order}
  Let $r_i$ be an extension segment in $G_s^t$, and let $v_d$ be the vertex of
  $b_{st}$ on $r_i$. Let $f_j$ be the last extension segment in $G_t^s$ such
  that $f_j$ intersects $r_i$ in a point closer to $s$ than to $t$. See
  Fig.~\ref{fig:intersection_order}. The vertex $v_c$ of $b_{st}$ corresponding
  to $f_j$ occurs before $v_d$, that is $c < d$.
\end{observation}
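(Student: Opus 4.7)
My plan is a topological argument combining the Jordan curve theorem with an orientation analysis that exploits the pseudo-triangle structure of $\hat{\P}(z,s,t)$. First I would observe that since $\geodlen(s,q) < \geodlen(t,q)$, the intersection point $q = r_i \cap f_j$ lies strictly in the $s$-side of $b_{st}$. Because $r_i \in G_s^t$ crosses $b_{st}$ exactly once at $v_d$ — starting on the $s$-side (its start is on $\geod(s,t)\cup\geod(s,z)$) and ending on the $t$-side (its endpoint is on $\geod(t,z)$) — the point $q$ must lie on the subsegment of $r_i$ strictly between its start and $v_d$. Symmetrically, $f_j \in G_t^s$ meets $b_{st}$ only at $v_c$, going from the $t$-side to the $s$-side, so $q$ lies on the subsegment of $f_j$ strictly between $v_c$ and its endpoint on $\geod(s,z)$.

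Next I would form the simple closed curve $C$ consisting of (i) the subsegment of $r_i$ from $q$ to $v_d$, (ii) the subarc of $b_{st}$ between $v_d$ and $v_c$, and (iii) the subsegment of $f_j$ from $v_c$ back to $q$. Simplicity is immediate: $r_i$ and $f_j$ are line segments meeting only at $q$, and each meets $b_{st}$ only once (at $v_d$ and $v_c$, respectively). By the Jordan curve theorem, $C$ bounds a topological disk $R \subset \hat{\P}(z,s,t)$, and since the two non-bisector arcs of $\partial R$ both lie in the closed $s$-side, the whole region $R$ sits on the $s$-side of $b_{st}$.

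Finally, to conclude $c < d$, I would orient $b_{st}$ from $w$ to $z$ with the $s$-side fixed (say, on the left) and carry out an orientation analysis that exploits a structural fact: both extension segments have their endpoint on one of the two boundary chains of $\hat{\P}(z,s,t)$ meeting at the corner $z$, namely $\geod(t,z)$ for $r_i$ and $\geod(s,z)$ for $f_j$. Hence $r_i$ and $f_j$ are compatibly directed ``toward $z$''. Combining this with the opposite crossing senses at $v_d$ ($r_i$ goes from $s$-side to $t$-side) and at $v_c$ ($f_j$ goes from $t$-side to $s$-side), and with the fact that $R$ lies consistently on one side of $\partial R$ throughout the traversal $q \to v_d \to v_c \to q$, the bisector subarc in $\partial R$ must be traversed from $v_d$ to $v_c$ \emph{against} the $w \to z$ orientation. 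Hence $v_c$ precedes $v_d$ on $b_{st}$, i.e., $c < d$. The main obstacle is this last step: making rigorous — using the pseudo-triangle structure together with the disjointness of extension segments (Observation~\ref{obs:extension_segments_pw_disjoint}) — that the ``toward $z$'' directions of $r_i$ and $f_j$, combined with their crossing senses, force the bisector subarc in $\partial R$ to reverse rather than follow the $w \to z$ orientation.
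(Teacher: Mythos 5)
The Jordan-curve reduction is fine as far as it goes: the closed curve $C=[q,v_d]_{r_i}\cup[v_d,v_c]_{b_{st}}\cup[v_c,q]_{f_j}$ is indeed simple (each segment crosses $\br_{st}$ once, and $r_i,f_j$ are straight segments meeting only at $q$), and the region $R$ it bounds does lie on the $s$-side of $b_{st}$. But you have, in effect, only reduced the claim to determining the orientation of $C$, and this is exactly where the difficulty lives. Topologically, \emph{both} orientations are consistent with $R$ lying on the $s$-side: the bisector sub-arc of $\partial R$ can bound $R$ whether it is the arc from $v_c$ forward to $v_d$ or the arc from $v_d$ forward to $v_c$ (think of a vertical bisector with $q$ to its left; the triangle $q v_d v_c$ can have $v_c$ either above or below $v_d$ and still lie entirely left of the bisector). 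So the step ``the bisector subarc must be traversed against the $w\to z$ orientation'' does not follow from simplicity, from $R\subseteq\{s\text{-side}\}$, or from the crossing senses at $v_d$ and $v_c$ alone --- those data are shared by the wrong orientation as well.

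What actually breaks the symmetry is a \emph{metric} fact, not a topological one: $s$ and $t$ both lie in $P_\ell$, so the extension directions of $r_i$ and $f_j$ at $q$ (the ``away from $s$'' direction along $r_i$ and the ``away from $t$'' direction along $f_j$) both point into $P_r$, while the bisector $\br_{st}$ is $x$-monotone in $P_r$ (Lemma~\ref{lem:bisectors_x-monotone}) with $w$ on the diagonal $d$ and $z$ deeper in $P_r$. Combining these lets one read off the sign of the relevant cross product at $q$ and hence the orientation of $C$; without some input of this kind the orientation is genuinely undetermined. Your phrase ``compatibly directed toward $z$'' gestures at this, but it is stated as a fact rather than proved, and even granting it you do not show how it forces the reversal of the $w\to z$ direction on the bisector arc --- which you yourself flag as ``the main obstacle.'' That obstacle is a real gap, not a routine tidying-up step. (Separately, note that your argument never uses the hypothesis that $f_j$ is the \emph{last} segment of $G_t^s$ crossing $r_i$ on the $s$-side; the paper's sketch, by contrast, sets out from precisely that hypothesis, bracketing $v_d$ between the intersections of $r_i$ with $f_j$ and $f_{j+1}$. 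If your orientation step were completed it would prove a stronger per-$f_k$ statement, which is fine, but it underlines that the argument you are proposing is a genuinely different one whose crux you have not supplied.)
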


\begin{proof}
  By definition of $j$ it follows that the intersection point $v_d$ of $r_i$
  and $\br_{st}$ lies between the intersection of $r_i$ with $f_j$ and
  $f_{j+1}$. See Fig.~\ref{fig:intersection_order}. Thus, the intersection
  point
\end{proof}

\begin{lemma}
  \label{lem:random_access_bst}
  Let $j$ be the number of extension segments in $G_t^s$ that intersect
  $r_i$ in a point closer to $s$ than to $t$. Then $r_i$ contains vertex
  $v_d = v_{i+j}$ of $\br_{st}$.
\end{lemma}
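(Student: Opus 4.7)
The plan is to count the predecessors of $v_d$ along $\br_{st}$, decomposing them into those contributed by $G_s^t$ and those contributed by $G_t^s$. Since each extension segment in $G_s^t \cup G_t^s$ contributes exactly one vertex of $\br_{st}$ (Lemma~\ref{lem:extension_segments_F}), establishing the count will immediately give $d = i + j$.

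The first ingredient is that both $G_s^t$ and $G_t^s$ cross $\br_{st}$ in their natural index order. By Observation~\ref{obs:extension_segments_pw_disjoint} the segments in $G_s^t$ are pairwise disjoint and indexed consistently with the traversal of $\partial \hat{\P}(z,s,t)$; since each of them meets $\br_{st}$ exactly once inside the simply connected region $\hat{\P}(z,s,t)$, the intersection points occur along $\br_{st}$ in that same order. In particular $v_d$ is preceded on $\br_{st}$ by exactly the $i-1$ vertices contributed by $r_1, \dots, r_{i-1}$. The same argument gives that $f_k$ contributes the $k$-th vertex of $\br_{st}$ among those from $G_t^s$.

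Next I would show that exactly $j$ of the $G_t^s$-vertices precede $v_d$. The bisector $\br_{st}$ splits $r_i$ at $v_d$ into an ``$s$-side'' subsegment (points closer to $s$) and a ``$t$-side'' subsegment (points closer to $t$). Since the $f_k$'s are pairwise disjoint and indexed monotonically along $\partial \hat{\P}(z,s,t)$, their intersections with $r_i$ appear monotonically along $r_i$; hence those meeting $r_i$ on the $s$-side form a prefix $f_1,\dots,f_j$, matching the definition of $j$ in the statement. By Observation~\ref{obs:intersection_order}, the vertex of $\br_{st}$ on $f_j$ precedes $v_d$, and combined with the ordering established above this means so do the vertices on $f_1,\dots,f_{j-1}$. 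Applying the analogous argument to $f_{j+1}$ (whose intersection with $r_i$ lies on the $t$-side) shows that its vertex on $\br_{st}$ follows $v_d$, and therefore so do those on $f_{j+2},\dots,f_{h'}$. Altogether $v_d$ has exactly $(i-1) + j$ predecessors on $\br_{st}$, so it is the $(i+j)$-th vertex, as claimed.

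The main technical step is the ``compatible monotonicity'' used above: ensuring that the three orderings (of $G_s^t$ along $\br_{st}$, of $G_t^s$ along $\br_{st}$, and of the $f_k$'s along $r_i$) are all consistent with each other. This should follow cleanly from the pairwise disjointness in Observation~\ref{obs:extension_segments_pw_disjoint} and the simple connectivity of $\hat{\P}(z,s,t)$, possibly after reindexing to align the direction of $\br_{st}$ (from $w$ to $z$) with the chosen traversal of $\partial \hat{\P}(z,s,t)$; the remainder of the proof is then just bookkeeping.
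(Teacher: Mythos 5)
Your proposal is correct and follows the same approach as the paper's proof: count the predecessors of $v_d$ along $\br_{st}$ by splitting them into those on segments of $G_s^t$ (exactly $i-1$, by the index-order intersection property) and those on segments of $G_t^s$ (exactly $j$, via Observation~\ref{obs:intersection_order}). Your write-up is slightly more explicit than the paper's in two places — spelling out why the $f_k$'s meeting $r_i$ on the $s$-side form a prefix, and running the symmetric argument for $f_{j+1}$ to show its vertex follows $v_d$ — but these are elaborations of the same idea, not a different route. One small citation slip: the fact that each segment in $G_s^t \cup G_t^s$ contributes \emph{exactly} one vertex comes from Corollary~\ref{cor:suffix} together with Corollary~3.29 of Aronov (each clipped extension segment meets $b_{st}$ at most once), not from Lemma~\ref{lem:extension_segments_F} alone, which only gives containment.
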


\begin{proof}
  It follows from Corollary~\ref{cor:suffix} and the definition of $G_s^t$ and
  $G_t^s$ that all vertices of $\br_{st}$ lie on extension segments in
  $\{r_1,..,r_{g'}\} \cup \{f_1,..,f_{h'}\}$. Together with Corollary 3.29 of
  Aronov~\cite{aronov1989geodesic} we get that every such extension segment defines exactly
  one vertex of $\br_{st}$. Since the bisector intersects the segments
  $r_1,..,r_{g'}$ in order, there are exactly $i-1$ vertices of $\br_{st}$
  before $v_d$, defined by the extension segments in $G_s^t$. Let $f_\ell$ be
  the last extension segment in $G_t^s$ that intersects $r_i$ in a point closer
  to $s$ than to $t$. Observation~\ref{obs:intersection_order} gives us that
  this extension segment defines a vertex $v_c$ of $b_{st}$ with $c < d$. We
  then again use that $\br_{st}$ intersects the segments $f_1,..,f_{h'}$ in
  order, and thus $\ell = j$. Hence, $v_d$ is the $(i+j)^\text{th}$ vertex of
  $\br_{st}$.
\end{proof}

It follows from Lemma~\ref{lem:random_access_bst} that if we have
$\hat{\P}(z,s,t)$ and we have efficient random access to its vertices, we also
have efficient access to the vertices of the bisector $\br_{st}$. Next, we
argue with some minor augmentations the preprocessing of $P$ into a two-point
query data structure by Guibas and Hershberger gives us such access.

% \frank{Maybe drop this, as we don't really have an application for it; also for
% this we need subtree size augmentation as well}
% In Appendix~\ref{app:Random_Access} we show that we can even get exact random
% access to the vertices of $\br_{st}$. More specifically, that we can access the
% $i^\text{th}$ vertex of $\br_{st}$ in $O(\log m)$ time.

% Our preprocessing of $P$ into a two-point query data structure allows us such
% access~\cite{guibas1989query}. In particular, given $z$, $s$, and $t$ we can
% get the convex chains of $\hat{\P}(z,s,t)$ represented by a chain data structure
% in $O(\log m)$ time. Such a chain data structure is essentially a balanced
% tree storing the edges of the chain.

% \frank{maybe be a bit more specific}
% \frank{In particular, we may want to say something about getting rid of the
%   'appendix' of the shortest path, i.e. the part from $\hat{s}$ to $s$, so that
% we have a single convex chain.}

\subparagraph{Accessing $\hat{\P}(z,s,t)$.} The data structure of Guibas and
Hershberger can return the shortest path between two query points $p$ and $q$,
represented as a balanced
tree~\cite{guibas1989query,hershberger_new_1991}. This tree is essentially a
persistent balanced search tree on the edges of the path. Every node of the
tree can access an edge $e$ of the path in constant time, and the edges are
stored in order along the path. The tree is balanced, and supports
concatenating two paths efficiently. To support random access to the vertices
of $\hat{\P}(z,s,t)$ we need two more operations: we need to be able to access
the $i^\text{th}$ edge or vertex in a path, and we need to be able to find the
longest prefix (or suffix) of a shortest path that forms a convex chain. This
last operation will allow us to find the corners $\hat{s}$ and $\hat{t}$ of
$\hat{\P}(z,s,t)$. The data structure as represented by Guibas and Hershberger
does not support these operations directly. However, with two simple
augmentations we can support them in $O(\log m)$ time. In the following, we use
the terminology as used by Guibas and Hershberger~\cite{guibas1989query}.

The geodesic between $p$ and $q$ is returned as a balanced tree. The leaves of
this tree correspond to, what Guibas and Hershberger call, \emph{fundamental
  strings}: two convex chains joined by a tangent. The individual convex chains
are stored as balanced binary search trees. The internal nodes have two
or three children, and represent \emph{derived strings}: the concatenation of
the fundamental strings stored in its descendant leaves. See
Fig.~\ref{fig:shortest_path_ds} for an illustration.

\begin{figure}[tb]
  \centering
  \includegraphics{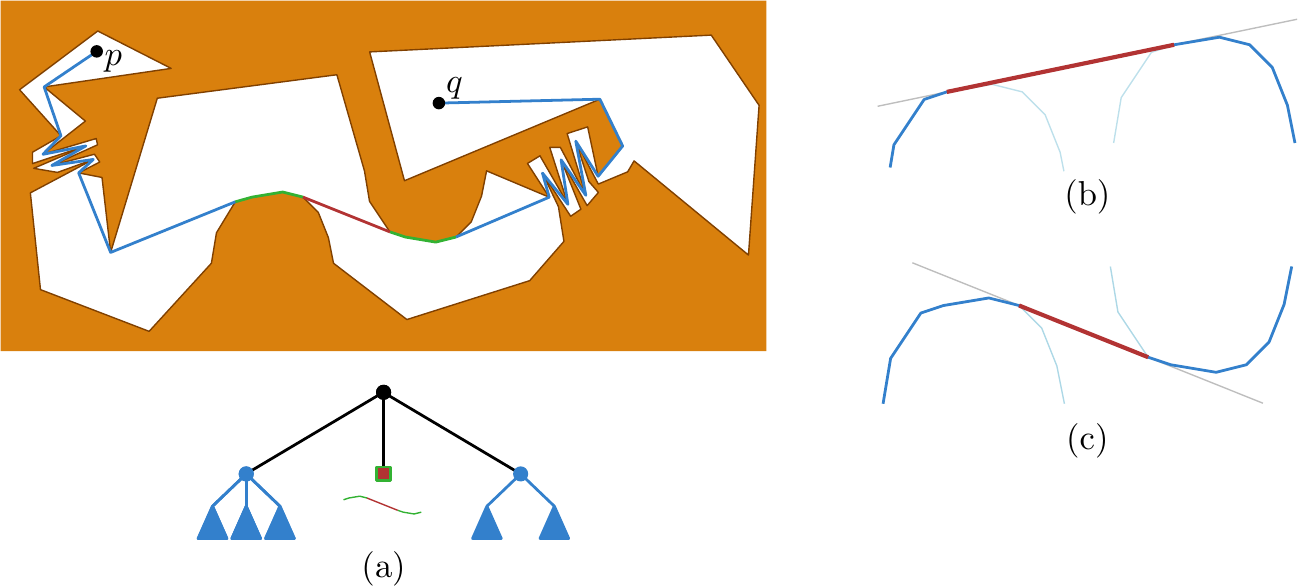}
  \caption{The data structure of Guibas and Hershberger~\cite{guibas1989query}
    can return the geodesic between two query points $p$ and $q$ as a balanced
    tree (a). The leaves of the tree correspond to fundamental strings: two
    convex chains joined by a tangent. The internal nodes represent derived
    strings: the concatenation of two or three sub-paths (strings). A
    fundamental string can be convex (b) or non-convex (c).}
  \label{fig:shortest_path_ds}
\end{figure}

To make sure that we can access the $i^\text{th}$ vertex or edge on a shortest
path in $O(\log m)$ time, we augment the trees to store subtree sizes. It is
easy to see that we can maintain these subtree sizes without affecting the
running time of the other operations.

To make sure that we can find the longest prefix (suffix) of a shortest path
that is convex we do the following. With each node $v$ in the tree we store a
boolean flag $v\mathit{.convex}$ that is true if and only if the sub path it
represents forms a convex chain. It is easy to maintain this flag without
affecting the running time of the other operations. For leaves of the tree
(fundamental strings) we can test this by checking the orientation of the
tangent with its two adjacent edges of its convex chains. These edges can be
accessed in constant time. Similarly, for internal nodes (derived strings) we
can determine if the concatenation of the shortest paths represented by its
children is convex by inspecting the $\mathit{convex}$ field of its children,
and checking the orientation of only the first and last edges of the shortest
paths. We can access these edges in constant time. This augmentation allows us
to find the last vertex $v$ of a shortest path $\geod(p,q)$ such that
$\geod(p,v)$ is a convex chain in $O(\log m)$ time. We can then obtain
$\geod(p,v)$ itself (represented by a balanced tree) in $O(\log m)$ time
by simply querying the data structure with points $p$ and $v$. Hence, we can
compute the longest prefix (or suffix) on which a shortest path forms a convex
chain in $O(\log m)$ time.

% \begin{figure}[tb]
%   \centering
%   \includegraphics{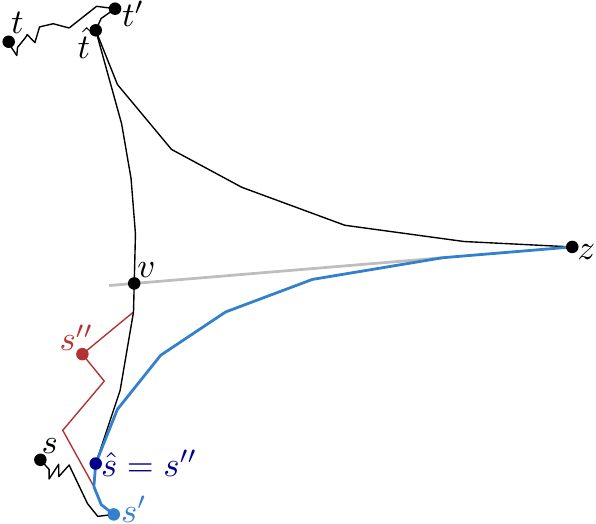}
%   \caption{We can find the first point $s'$ on $\protect\geod(s,z)$ such that
%     $\protect\geod(s',z)$ is convex. When we have a point $v$ of
%     $\protect\geod(s',t')$ known to be in $\hat{\P}(z,s,t)$, we can find
%     $s''=\hat{s}$ (darkblue). The first point $s''$ on $\protect\geod(s',t')$
%     such that $\protect\geod(s'',v)$ is convex has to lie on
%     $\protect\geod(s',z)$. If this is not the case (as shown in red), then we
%     can shortcut the shortest path to avoid $s''$, leading to a contradiction.
%   }
%   \label{fig:pseudo-triangle}
% \end{figure}

\begin{wrapfigure}[23]{r}{0.425\textwidth}
  \centering
  \includegraphics{}
  \vspace{-.3cm}
  \caption{We can find the first point $s'$ on $\protect\geod(s,z)$ such that
    $\protect\geod(s',z)$ is convex. When we have a point $v$ of
    $\protect\geod(s',t')$ known to be in $\hat{\P}(z,s,t)$, we can find
    $s''=\hat{s}$ (darkblue). The first point $s''$ on $\protect\geod(s',t')$
    such that $\protect\geod(s'',v)$ is convex has to lie on
    $\protect\geod(s',z)$. If this is not the case (as shown in red), then we
    can shortcut the shortest path to avoid $s''$, leading to a contradiction.
  }
  \label{fig:pseudo-triangle}
\end{wrapfigure}
Given point $z$, the above augmentation allow us to access $\hat{\P}(z,s,t)$ in
$O(\log m)$ time. We query the data structure to get the tree representing
$\geod(s,z)$, and, using our augmentations, find the longest convex suffix
$\geod(s',z)$. Similarly, we find the longest convex suffix $\geod(t',z)$ of
$\geod(t,z)$. Observe that the corners $\hat{s}$ and $\hat{t}$ of
$\hat{\P}(z,s,t)$ lie on $\geod(s',z)$ and $\geod(t',z)$, respectively
(otherwise $\geod(\hat{s},z)$ and $\geod(\hat{t},z)$ would not be convex
chains). Unfortunately, we cannot directly use the same approach to find the
part $\geod(s',t')$ that is convex, as it may both start and end with a piece
that is non-convex (with respect to $\geod(s',t')$). However, consider the
extension segment of the first edge of the shortest path from $z$ to $s$ (see
Fig.~\ref{fig:pseudo-triangle}). This extension segment intersects the shortest
path $\geod(s',t')$ exactly once in a point $v$. By construction, this point
$v$ must lie in the pseudo-triangle $\hat{\P}(z,s,t)$. Thus, we can decompose
$\geod(s',t')$ into two sub-paths, one of which starts with a convex chain and
the other ends with a convex chain. Hence, for those chains we can use the
$\mathit{convex}$ fields to find the vertices $s''$ and $t''$ such that
$\geod(s'',v)$ and $\geod(v,t'')$ are convex, and thus $\geod(s'',t'')$ is
convex. Finally, observe that $s''$ must occur on $\geod(s',z)$, otherwise we
could shortcut $\geod(s',t')$. See Fig.~\ref{fig:pseudo-triangle}. Hence, $s'' = \hat{s}$
and $t'' = \hat{t}$ are the two corners of the pseudo-triangle
$\hat{\P}(z,s,t)$. We can find $v$ in $O(\log m)$ time by a binary search on
$\geod(s',t')$. Finding the longest convex chains starting and ending in $v$
also takes $O(\log m)$ time, as does computing the shortest path
$\geod(\hat{s},\hat{t})$. It follows that given $z$, we can compute (a
representation of) $\hat{\P}(z,s,t)$ in $O(\log m)$ time.

With the above augmentations, and using Lemma~\ref{lem:random_access_bst}, we
then obtain the following result.

\begin{lemma}
  \label{lem:random_access_bst_final}
  Given the points $w$ and $z$ where $\br_{st}$ intersects $d$ and the outer
  boundary of $P_r$, respectively,
  we can access the $i^\text{th}$ vertex of $\br_{st}$ in $O(\log m)$ time.
\end{lemma}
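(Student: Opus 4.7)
The plan is to reduce random access on $\br_{st}$ to a selection query on the merge of two ordered sequences of extension segments, using Lemma~\ref{lem:random_access_bst} together with the augmented Guibas--Hershberger structure.

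Using the points $w$ and $z$, I would first construct (a representation of) $\hat{\P}(z,s,t)$ in $O(\log m)$ time by querying the preprocessed two-point shortest-path data structure for $\geod(s,z)$, $\geod(t,z)$, and the part of $\geod(s,t)$ between its corners $\hat{s}$ and $\hat{t}$, following the construction described just above the lemma. The resulting balanced-tree representations, together with the subtree-size and $\mathit{convex}$ augmentations, support random access in $O(\log m)$ time to any vertex of the bounding paths of $\hat{\P}(z,s,t)$ and to the corresponding (clipped) extension segment. I would then compute the starting indices of the suffixes $G_s^t$ and $G_t^s$ in $O(\log m)$ time, as described in the paragraph following Corollary~\ref{cor:suffix}.

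By Lemma~\ref{lem:random_access_bst}, the vertex $v_i$ lies either on the extension segment $r_k\in G_s^t$ satisfying $k+j_k=i$ (with $j_k$ as in that lemma), or symmetrically on an $f_\ell\in G_t^s$. By Lemma~\ref{lem:fan_intersects} each of the sequences $G_s^t$ and $G_t^s$ is consistently ordered along $\br_{st}$, so finding the responsible extension segment amounts to selecting the $i$-th element from the merge of these two sequences. Given candidate segments $r_k$ and $f_\ell$, their relative order along $\br_{st}$ can be decided in $O(1)$ time once both are in hand: one computes the intersection $r_k\cap f_\ell$ and then a single geodesic-distance comparison determines on which side of $\br_{st}$ this point lies. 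Once the extension segment carrying $v_i$ is located, $v_i$ itself is recovered in constant time as the intersection of the two bisector arcs meeting at it.

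The main obstacle will be executing the selection within an $O(\log m)$ budget, since each random access into either balanced tree already costs $\Theta(\log m)$ and a direct binary search would therefore yield $O(\log^2 m)$. To avoid this, I plan to descend the two balanced trees representing $G_s^t$ and $G_t^s$ in tandem: at each level, using the subtree-size counters and a single $O(1)$ comparison of sub-path representatives (as above), we decide which child to enter in each tree. Since both trees have depth $O(\log m)$ and the two descents advance in lockstep, the total number of visited nodes, and hence the total running time, is $O(\log m)$, as required.
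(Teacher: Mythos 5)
Your proof takes essentially the same route as the paper: reconstruct $\hat{\P}(z,s,t)$ from $w$ and $z$, extract the suffixes $G_s^t$ and $G_t^s$, and then use the subtree-size augmentation together with Lemma~\ref{lem:random_access_bst} to locate the $i$-th vertex of $\br_{st}$ by selecting from the merge of two ordered sequences of extension segments. You actually go further than the paper's one-paragraph proof in one respect: you explicitly observe that a naive binary search over $G_s^t$ (evaluating $k+j_k$ via a nested search in $G_t^s$ at each probe) would cost $\Theta(\log^2 m)$, and you propose a tandem descent to repair this; the paper merely says to ``binary search on the subtree sizes, using Lemma~\ref{lem:random_access_bst} to guide the search,'' leaving that issue to the reader.

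One step in your argument is asserted but not actually free, and as written it undoes the savings you are after. You claim that the cross-comparison between $r_k$ and $f_\ell$ is $O(1)$ once both segments are in hand, by computing $q=r_k\cap f_\ell$ and doing ``a single geodesic-distance comparison.'' Performed via the two-point query structure, such a comparison costs $\Theta(\log m)$, which puts the tandem descent right back at $O(\log^2 m)$. It \emph{can} be made $O(1)$, but only by exploiting the decomposition $\geodlen(s,q)=\geodlen(s,u_k)+|u_kq|$ (and symmetrically $\geodlen(t,q)=\geodlen(t,w_\ell)+|w_\ell q|$), where $u_k$ and $w_\ell$ are the shortest-path-tree vertices generating $r_k$ and $f_\ell$, together with a further augmentation of the balanced trees by cumulative chain lengths so that $\geodlen(s,u_k)$ and $\geodlen(t,w_\ell)$ are accumulated as you descend. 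You should state this decomposition and this extra augmentation explicitly; the paper's setup records only subtree sizes and a convexity flag, so the constant-time comparability of ``sub-path representatives'' does not come for free.
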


\begin{proof}
  Recall that the data structure of Guibas and
  Hershberger~\cite{guibas1989query} reports the shortest path between query
  points $p$ and $q$ as a balanced tree. We augment these trees such that each
  node knows the size of its subtree. It is easy to do this using only constant
  extra time and space, and without affecting the other operations. We can then
  simply binary search on the subtree sizes, using
  Lemma~\ref{lem:random_access_bst} to guide the search.
\end{proof}

\subparagraph{Finding $w$ and $z$.} We first show that we can find the point $w$ where
$b_{st}$ enters $P_r$ (if it exists), and then show how to find $z$, the other
point where $b_{st}$ intersects $\partial P_r$.

\begin{lemma}
  \label{lem:finding_w}
  Finding $w$ requires $O(\log^2 m)$ time.
\end{lemma}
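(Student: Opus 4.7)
The plan is a two-level binary search along $d$ that uses the two-point shortest-path data structure as its oracle. For any point $p\in d$, both $\geodlen(s,p)$ and $\geodlen(t,p)$ can be computed in $O(\log m)$ time, so the sign of $f(p):=\geodlen(s,p)-\geodlen(t,p)$ is available in $O(\log m)$ time. First I would evaluate $f$ at the two endpoints of $d$: if both signs agree, then by Lemma~\ref{lem:bst_intersect_d} no $w$ exists; otherwise $f$ changes sign exactly once along $d$, so a sign-directed binary search is well-defined.

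The discrete structure I would binary search over is the pair of funnels $\Phi_s,\Phi_t$ from $s$ and $t$ to $d$. Each funnel is bounded by the shortest paths from its apex to the two endpoints of $d$, and the extension segments emanating from vertices on these chains partition $d$ into $O(m)$ arcs. Within an arc, $\geodlen(s,(x_d,y)) = \geodlen(s,u) + \|(x_d,y)-u\|$ for a fixed last vertex $u$, and similarly for $t$, so $f(p)=0$ reduces on this arc to a constant-complexity algebraic equation solvable in $O(1)$ time. Crucially, the funnels themselves are not materialised: the shortest paths from $s$ and $t$ to the two endpoints of $d$ are obtained as balanced trees in $O(\log m)$ time each, and the augmentations of Appendix~\ref{app:Bisector} give $O(\log m)$-time random access to the $i^\mathrm{th}$ vertex on any such path, hence to the $i^\mathrm{th}$ extension-segment crossing of $d$ along either chain of a funnel.

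I would then binary search the chain of $\Phi_s$ to localise $w$ to an arc of $d$ bounded by two consecutive extension segments from $\Phi_s$, and repeat the same search inside that arc using $\Phi_t$. Each binary search performs $O(\log m)$ probes, every probe costs $O(\log m)$ for a geodesic-distance evaluation, and the final algebraic step is $O(1)$, totalling $O(\log^2 m)$ as claimed. The main obstacle is justifying that each binary search is correct: I must argue that the sign of $f$ is monotone along the discrete index being searched. This follows from Lemma~\ref{lem:bst_intersect_d}: any non-monotone sign pattern in either index would force $b_{st}$ to cross $d$ more than once. A minor secondary issue is gracefully recognising the ``no $w$'' case when the funnels from $s$ and $t$ leave parts of $d$ outside their reach; this is handled by the initial sign test at the endpoints of $d$.
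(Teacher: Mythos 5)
Your proposal is correct and matches the paper's approach: both locate $w$ by binary searching over the subdivision of $d$ induced by the extension segments of the funnels from $s$ and $t$, using the $O(\log m)$-time geodesic-distance oracle for sign comparisons, and then solve a constant-complexity algebraic equation on the arc where both $f_s$ and $f_t$ are single pieces. The only cosmetic difference is that the paper performs a simultaneous binary search over the two crossing sequences, while you perform two nested binary searches (first over $\Phi_s$, then over $\Phi_t$ within the found arc); both give $O(\log^2 m)$.
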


\begin{proof}
  Consider the geodesic distance of $s$ to diagonal $d$ as a function $f_s$,
  parameterized by a value $\lambda \in [0,1]$ along $d$. Similarly, let $f_t$
  be the distance function from $t$ to $d$. Since $b_{st}$ intersects $d$
  exactly once --namely in $w$-- the predicate
  $\hat{\P}(\lambda) = f_s(\lambda) < f_t(\lambda)$ changes from \textsc{True}
  to \textsc{False} (or vice versa) exactly once. Query the data structure of
  Guibas and Hershberger~\cite{guibas1989query} to get the funnel representing
  the shortest paths from $s$ to the points in $d$. Let $p_1,..,p_h$, with
  $h=O(m)$, be the intersection points of the extension segments of vertices in
  the funnel with $d$. Similarly, compute the funnel representing the shortest
  paths from $t$ to $d$. The extension segments in this funnel intersect $d$ in
  points $q_1,..,q_k$, with $k=O(m)$. We can now simultaneously binary search
  among $p_1,..,p_h$ and $q_1,..,q_k$ to find the smallest interval $I$ bounded
  by points in $\{p_1,..,p_h,q_1,..,q_k\}$ in which $\hat{\P}$ flips from
  \textsc{True} to \textsc{False}. Hence, $I$ contains $w$. Computing the
  distance from $s$ ($t$) to some $q_i$ ($p_i$) takes $O(\log m)$ time, and
  thus we can find $I$ in $O(\log^2 m)$ time. On interval $I$ both $f_s$ and
  $f_t$ are simple hyperbolic functions consisting of a single piece, and thus
  we can compute $w$ in constant time.
\end{proof}

Consider the vertices $v_1,..,v_h$ of $P_r$ in clockwise order, where
$d=\overline{v_hv_1}$ is the diagonal. Since the bisector $b_{st}$ intersects
the outer boundary of $P_r$ in only one point, there is a vertex $v_a$ such
that $v_1,..,v_a$ are all closer to $t$ than to $s$, and $v_{a+1},..,v_h$ are
all closer to $s$ than to $t$. We can thus find this vertex $v_a$ using a
binary search. This takes $O(\log^2 m)$ time, as we can compute
$\geodlen(s,v_i)$ and $\geodlen(t,v_i)$ in $O(\log m)$ time. It then follows
that $z$ lies on the edge $\overline{v_a,v_{a+1}}$. We can find the exact
location of $z$ using a similar approach as in Lemma~\ref{lem:finding_w}. This
takes $O(\log^2 m)$ time. Thus, we can find $z$ in $O(\log^2 m)$ time. We
summarize our results from this section in the following theorem.

% \begin{lemma}
%   \label{lem:finding_z}
%   We can find $z$ in $O(\log^2 m)$ time.
% \end{lemma}

\begin{theorem}
  \label{thm:represent_bisector}
  Let $P$ be a simple polygon with $m$ vertices that is split into $P_\ell$ and
  $P_r$ by a diagonal $d$. The polygon $P$ can be preprocessed in $O(m)$ time,
  so that for any pair of points $s$ and $t$ in $P_\ell$, a representation of
  $\br_{st}=b_{st} \cap P_r$ can be computed in $O(\log^2 m)$ time. This
  representation supports accessing any of its vertices in $O(\log m)$ time.
\end{theorem}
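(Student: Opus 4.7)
The plan is to simply assemble the pieces developed in the appendix into a single pipeline: a one-time preprocessing of $P$, followed by a per-query computation that first locates the two boundary intersections $w$ and $z$ of $\br_{st}$ and then extracts the pseudo-triangle $\hat{\P}(z,s,t)$ needed for random access via Lemma~\ref{lem:random_access_bst_final}.

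For the preprocessing, I would run the Guibas--Hershberger two-point shortest path preprocessing on $P$ in $O(m)$ time, and augment each node of the persistent balanced trees used to encode shortest paths with (i) its subtree size and (ii) the boolean $v.\mathit{convex}$ flag. As argued in the paragraphs preceding Lemma~\ref{lem:random_access_bst_final}, both pieces of information can be maintained in constant extra time per node, so the preprocessing cost remains $O(m)$ and none of the standard Guibas--Hershberger operations (notably two-point shortest path queries returned as balanced trees) are asymptotically affected.

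Given a query pair $s,t\in P_\ell$, I would first run the procedure of Lemma~\ref{lem:finding_w} to find $w=b_{st}\cap d$ (or report that no such point exists, in which case $\br_{st}$ is empty and there is nothing further to return) in $O(\log^2 m)$ time. Next I would find $z=b_{st}\cap(\partial P_r\setminus d)$ by the binary search described immediately after Lemma~\ref{lem:finding_w}: a binary search over the clockwise sequence $v_1,\dots,v_h$ of vertices of $P_r$ locates the edge $\overline{v_av_{a+1}}$ on which the predicate $\geodlen(s,\cdot)<\geodlen(t,\cdot)$ flips, using $O(\log m)$ distance evaluations each costing $O(\log m)$ via the two-point structure, and then the exact location of $z$ on this edge is pinned down in $O(\log^2 m)$ time by the same scheme as used for $w$. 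With $z$ in hand I would compute a representation of the pseudo-triangle $\hat{\P}(z,s,t)$ in $O(\log m)$ time, following the recipe described around Figure~\ref{fig:pseudo-triangle}: query for $\geod(s,z)$ and $\geod(t,z)$, use the $\mathit{convex}$ flags to extract their longest convex suffixes (locating $\hat{s}$ and $\hat{t}$ up to the non-convex ends on $\geod(s,t)$), use the extension segment of the first edge of $\geod(s,z)$ to pinpoint a point $v$ on $\geod(s',t')$ that is guaranteed to lie inside $\hat{\P}(z,s,t)$, and then use the $\mathit{convex}$ flags on the two halves of $\geod(s',t')$ split at $v$ to find the true corners. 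The output of the query is the triple $(w,z,\hat{\P}(z,s,t))$, computed in total time $O(\log^2 m)$.

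For random access, Lemma~\ref{lem:random_access_bst_final} already gives what we need: with $w$, $z$, and the augmented tree representation of $\hat{\P}(z,s,t)$, the $i^\text{th}$ vertex of $\br_{st}$ can be located in $O(\log m)$ time by a weighted binary search on the subtree sizes stored in $F_s^t$ combined with a point-location step on $F_t^s$ to compute the offset $j$ of Lemma~\ref{lem:random_access_bst}. The main obstacle in this plan is really bookkeeping rather than algorithmic novelty: one must check that the augmentations (subtree sizes and $\mathit{convex}$ flags) interact cleanly with the concatenation and persistence machinery of Guibas and Hershberger, and one must make sure that the slightly ad hoc way of finding $\hat{s}$ and $\hat{t}$ on $\geod(s,t)$ really reduces to two independent longest-convex-prefix queries anchored at the interior point $v$, as the figure suggests. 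Once those details are verified, the three bounds $O(m)$, $O(\log^2 m)$, and $O(\log m)$ in the statement follow by adding up the subroutine costs above.
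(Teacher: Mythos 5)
Your proposal is correct and follows essentially the same route as the paper: preprocess with Guibas--Hershberger augmented by subtree sizes and \emph{convex} flags, then per query compute $w$ (Lemma~\ref{lem:finding_w}), compute $z$ by binary search along $\partial P_r$, extract $\hat{\P}(z,s,t)$ via the interior anchor point $v$, and delegate random access to Lemma~\ref{lem:random_access_bst_final}. The paper's ``proof'' of Theorem~\ref{thm:represent_bisector} is exactly this assembly of the preceding lemmas and paragraphs, so your reconstruction matches it step for step.
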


%%% Local Variables:
%%% mode: latex
%%% TeX-master: "improved_geod_nn_queries_arxiv"
%%% End:

\end{document}